\def\mi#1{\mathit{#1\/}}
\def\lar{\leftarrow}
\def\ba{\begin{array}}
\def\ea{\end{array}}
\def\be{\begin{enumerate}}
\def\ee{\end{enumerate}}
\def\bi{\begin{itemize}}
\def\ei{\end{itemize}}
\def\beq{\begin{equation}}
\def\eeq#1{\label{#1}\end{equation}}
\def\beeq{\begin{equation*}}
\def\eeeq{\end{equation*}}
\def\beqq{\begin{equation*}}
\def\eeqq{\end{equation*}}
\def\lars{\,{\lar}\,}
\def\AS{\mathit{AS}}
\def\eqs{\,{=}\,}
\def\nmodels{\,{\nvDash}\,}
\newif\ifdraft\drafttrue
\newif\ifinlineref\inlinereffalse
\newif\iffinal\finalfalse
\newif\ifextended\extendedfalse
\newif\ifdotikz\dotikzfalse
\newif\ifrevised\revisedfalse
\newif\ifold\oldfalse
\newcommand{\Lits}{\mathcal{A}}
\newcommand{\omitt}{A}
\newcommand{\ol}[1]{\overline{#1}}
\newcommand{\naf}{\mi{not}\ }
\newcommand{\co}{\mbox{\rm co}}
\newcommand{\NP}{\ensuremath{\protect\mathbf{NP}}\xspace}
\newcommand{\Pol}{\ensuremath{\protect\mathbf{P}}\xspace}
\newcommand{\coNP}{\ensuremath{\protect\co\NP}\xspace}
\newcommand{\SigmaP}[1]{\ensuremath{\Sigma^{\protect{P}}_{#1}}\xspace}
\newcommand{\FPol}{\ensuremath{\protect\mathbf{FP}}\xspace}
\newcommand{\FPNP}{\ensuremath{\FPol^{\NP}}\xspace}
\newcommand{\FPNPpar}{\ensuremath{\FPol_\|^{\NP}}\xspace}
\newcommand{\FPSigmaP}[1]{\ensuremath{\FPol^{\SigmaP{#1}}}}
\newcommand{\minBlockerSet}{C_{\min}}
\newcommand{\badOmitNr}{\mi{badomit}\,\#}
\def\endproof{\ifhmode\nobreak\proofbox\par\fi\medskip}
\newcommand{\comment}[1]{{\bf\color{blue}{*** #1 ***}}}
\newcommand{\comment}[1]{}
\newcommand{\newrevisedversion}[1]{{\color{black}{#1}}}
\newcommand{\revisedversion}[1]{{\color{black}{#1}}}
\newcommand{\oldversion}[1]{{*** (old version) \color{gray}{#1}} ***}
\newcommand{\rev}[2]{\oldversion{#1} \revisedversion{#2}}
\newcommand{\newrev}[2]{\oldversion{#1} \newrevisedversion{#2}}
\newcommand{\rev}[2]{\revisedversion{#2}}
\newcommand{\newrev}[2]{\newrevisedversion{#2}}
\newcommand{\oldversion}[1]{}
\newcommand{\nop}[1]{}
\newtheorem{thm}{Theorem}
\newtheorem{lemma}[thm]{Lemma}
\newtheorem{prop}[thm]{Proposition}
\newtheorem{cor}[thm]{Corollary}
\newtheorem{defn}{Definition}
\newtheorem{exmp}{Example}
\newcommand{\myfrac}[2]{\raisebox{7pt}{$\frac{\displaystyle \mbox{\small\raisebox{3pt}[0pt][0pt]{$#1$}}}{\displaystyle \mbox{\raisebox{-3pt}[3pt][-3pt]{$#2$}}}$}}
\newcommand{\myraise}[1]{\raisebox{\height}{#1}}
\newcommand*{\boxednumber}[1]{%
    \expandafter\readdigit\the\numexpr#1\relax\relax
}
\newcommand*{\readdigit}[1]{%
    \ifx\relax#1\else
        \boxeddigit{#1}%
        \expandafter\readdigit
    \fi
}
\newcommand*{\boxeddigit}[1]{\fbox{\!\!#1\!}}
\begin{document}
 
\title[Omission-based Abstraction for Answer Set Programs]
{Omission-based Abstraction for Answer Set Programs%
\thanks{This article is a revised and extended version of the paper
presented at the 16th International Conference on Principles of
Knowledge Representation and Reasoning (KR 2018), October 30 -- November 2, 2018, 
Tempe, Arizona, USA.}}
\author[Z. G. Saribatur and T. Eiter]{Zeynep G. Saribatur and Thomas Eiter\\
Institute of Logic and Computation\\ TU Wien, Vienna, Austria\\
\email{\{zeynep,eiter\}@kr.tuwien.ac.at}
}
\maketitle

\begin{abstract}
Abstraction is a well-known approach to simplify a complex problem by
over-approximating it with a deliberate loss of information. It was
not considered so far in Answer Set Programming (ASP), a convenient
tool for problem solving. We introduce a method to automatically
abstract ASP programs that preserves their structure by reducing the
vocabulary while ensuring an over-approximation (i.e., each original
answer set maps to some abstract answer set). This allows for
generating partial answer set candidates that can help with
approximation of reasoning. Computing the abstract answer sets is
intuitively easier due to a smaller search space, at the cost of
encountering spurious answer sets. Faithful (non-spurious)
abstractions may be used to represent projected answer sets and to
guide solvers in answer set construction.  For dealing with spurious
answer sets, we employ an ASP debugging approach to help with
abstraction refinement, which determines atoms as badly omitted and
adds them back in the abstraction. As a show case, we apply
abstraction to explain unsatisfiability of ASP programs in terms of
blocker sets, which are the sets of atoms such that abstraction to
them preserves unsatisfiability. Their usefulness is demonstrated by
experimental results.
Under consideration in Theory and Practice of Logic Programming (TPLP).

\end{abstract}

\section{Introduction}

Abstraction is an approach that is widely used in Computer Science and
AI in order 
to simplify problems, cf.{}
\cite{clarkeabstraction94,lomuscio15,banihashemi2017abstraction,giunchiglia1992theory,geisser2016abstractions}.
When computing solutions for difficult problems, abstraction allows to
omit details and reduce the scenarios to ones that are easier to deal
with and to understand. Such an approximation results in achieving a
smaller or simpler state space, at the price of introducing spurious
solutions. The well-known counterexample guided abstraction and refinement
(CEGAR) approach \cite{clarke03} is based on starting with an initial
abstraction on a given program and checking the desired property over
the abstract program. Upon encountering spurious solutions, the
abstraction is refined by removing the spurious transitions observed
through the solution, so that the spurious solution is eliminated from
the abstraction. This iteration continues until a concrete solution is
found.

Surprisingly, abstraction has not been considered much in the context
of nonmonotonic know\-ledge representation and reasoning, and
specifically not in Answer Set Programming (ASP) \cite{aspglance11}. Simplification
methods such as equivalence-based rewriting \cite{10.1007/978-3-540-89982-2_23,10.1007/978-3-540-27775-0_15}, partial
evaluation \cite{DBLP:journals/jlp/BrassD97,Janhunen:2006:UPD:1119439.1119440},
or forgetting 
(see \cite{DBLP:conf/lpnmr/Leite17} for a recent survey)
have been extensively studied.  However, these methods strive for
preserving the semantics of a program, while abstraction may change
the latter and lead to an over-approximation of the models (answer
sets) of a program, in a modified language.

In this paper, we 
make the first step towards employing the
concept of abstraction in ASP.
We are focused on abstraction by omitting atoms from the program
and constructing an abstract program with the smaller vocabulary, by
ensuring that the original program is over-approximated, i.e., 
every original answer set can be mapped to some abstract answer
set. Due to the decreased size of the search space, finding an answer set in the
abstract program is easier, while one needs to check whether the found
abstract answer set is concrete. As spurious answer sets can be
    introduced, one may need to go over all abstract 
answer sets until a concrete one is found. If the original program has
no answer set, all encountered abstract answer sets  will be spurious. 
To eliminate spurious answer sets, we use
a CEGAR inspired
approach, by finding a cause of the spuriousness with ASP debugging \cite{brain2007debugging} and
refining the abstraction by adding back some
atoms that are deemed to be ``badly-omitted". 

An interesting application area for such an omission-based abstraction
in ASP is finding an \emph{explanation} for unsatisfiability of
programs.
 Towards this problem, debugging inconsistent ASP programs
has been investigated, for instance, in
\cite{brain2007debugging,gebser2008meta,oetsch2010catching,dodaro2015interactive},
based on providing the reason
for why an answer set expected by the user is missed.
However, these methods do not address the question why the program has no
answer set.  We approach the unsatisfiability of an ASP
program differently, with the 
aim 
to obtain a projection of the
program that shows the cause of the unsatisfiability, without an
initial idea on expected solutions.
For example, consider the graphs 
shown in Figure~\ref{fig:unsatgraph}. The one in Figure~\ref{fig:unsatgraph}(a) is not
2-colorable due to the subgraph induced by the nodes 1-2-3, while 
the one in Figure~\ref{fig:unsatgraph}(b) is not 3-colorable due to the subgraph
of the nodes 1-2-3-4. From the original programs that encode this problem,
abstracting away the rules that assigns colors to the nodes not
involved in these subgraphs should still keep the unsatisfiability,
thus showing the actual reason of non-colorability of the graphs.  This
is related to the well-known notion of minimal unsatisfiable subsets
(\emph{unsatisfiable cores})
\cite{liffiton2008algorithms,DBLP:conf/sat/LynceM04} that has 
been 
investigated in the ASP context
\cite{alviano2016anytime,andres2012unsatisfiability},
 but is less
sensitive to the issue of foundedness as it arises from 
rule dependencies (for further discussion see Related Work).

\begin{figure}[t]
\caption{Graph coloring instances}
\label{fig:unsatgraph}
\subfigure[A non 2-colorable graph]{
\resizebox{3cm}{!}{
\begin{tikzpicture}
\draw[fill=black] (-1,1) circle (3pt);
\draw[fill=black] (0,2) circle (3pt);
\draw[fill=black] (0,0) circle (3pt);
\draw[fill=black] (2,0) circle (3pt);
\draw[fill=black] (2,-1) circle (3pt);
\draw[fill=black] (0,-1) circle (3pt);
\draw[fill=black] (1,1) circle (3pt);
\draw[fill=black] (2,2) circle (3pt);
\draw[fill=black] (3,1) circle (3pt);
\node at (-0.3,0) {1};
\node at (1,1.3) {2};
\node at (2.3,0) {3};
\node at (2.3,-1) {4};
\node at (-0.3,-1) {5};
\node at (-1,0.7) {6};
\node at (0.3,2) {7};
\node at (1.7,2) {8};
\node at (3,0.7) {9};
\draw[thick] (0,0) -- (2,0) -- (1,1) -- (0,0) -- (-1,1) -- (0,2) -- (1,1) -- (2,2) -- (3,1) -- (2,0) -- (2,-1) -- (0,-1) -- (0,0);
\end{tikzpicture}
}\qquad
}\subfigure[A non 3-colorable graph]{
\centering
\resizebox{3.5cm}{!}{
\begin{tikzpicture}

\draw[fill=black] (-1.4,1) circle (3pt);
\draw[fill=black] (0,0) circle (3pt);
\draw[fill=black] (1,3.4) circle (3pt);
\draw[fill=black] (2,0) circle (3pt);
\draw[fill=black] (3.4,1) circle (3pt);
\draw[fill=black] (0,2) circle (3pt);
\draw[fill=black] (1,-1.4) circle (3pt);
\draw[fill=black] (2,2) circle (3pt);
\node at (-0.3,0) {1};
\node at (-0.3,2) {2};
\node at (2.3,2) {3};
\node at (2.3,0) {4};
\node at (0.7,-1.4) {5};
\node at (-1.4,0.7) {6};
\node at (0.7,3.4) {7};
\node at (3.4,0.7) {8};

\draw[thick] (0,0) -- (2,0) -- (2,2) -- (0,2) -- (0,0) -- (-1.4,1) -- (0,2) -- (1,3.4) -- (2,2) -- (3.4,1) -- (2,0) -- (1,-1.4) -- (0,0);
\draw[thick] (0,0) -- (2,2) ;
\draw[thick] (2,0) -- (0,2) ;
\end{tikzpicture}
}
}
\end{figure}
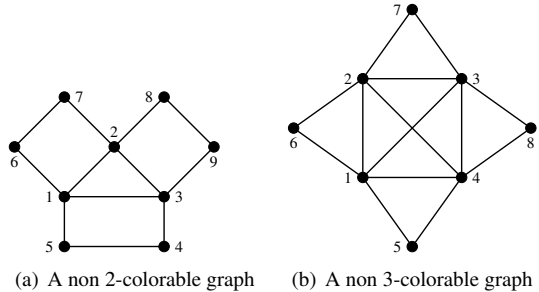

Our contributions in this paper are briefly summarized as follows.
\begin{itemize}
\item We introduce a method to abstract ASP programs $\Pi$ by omitting atoms
  in order to
  obtain an over-approximation of the answer sets of
  $\Pi$. That is, a program $\Pi'$ is constructed such that each answer
  set $I$ of $\Pi$ is abstracted to some answer set $I'$ of $\Pi'$. 
  While this abstraction is many to one, {\em spurious} answer sets of
  $\Pi'$ may exist that do not correspond to any answer set of $\Pi$. 

\item We present a refinement method inspired by ASP debugging
  approaches to catch the badly omitted atoms through the encountered
  spurious answer sets.

\item We introduce the notion of {\em blocker sets}\/ as sets of atoms
  such that abstraction to them preserves unsatisfiability of a
  program. A minimal 
  program to the minimal cause of unsatisfiability.

\item We derive complexity results for the notions, such as for
checking for spurious answer sets, for finding minimal sets of atoms
to put back in the refinement to eliminate a spurious solution, and
for computing a minimal blocker for a program. In particular, we
characterize the complexity of these problems in terms of suitable
complexity classes, which unsurprisingly are intractable in general. 

\item We report about 
experiments focusing on unsatisfiable programs and investigate
computing minimal blockers of programs. We compare the results of the
abstraction and refinement approach starting with an initial
abstraction (\emph{bottom-up}) with a naive \emph{top-down} approach
that omits atoms one-by-one if their omission preserves
unsatisfiability, and observe that the bottom-up approach can obtain
smaller sized blockers.
 \end{itemize}

Overall, abstraction by omission appears to be of interest for ASP, 
which besides explaining unsatisfiability can be utilized, among
other 
applications, to over-approximate reasoning and to
represent projected answer sets.

\paragraph{Organization} The remainder of this article is organized as follows. After recalling
in the next section some necessary concepts and fixing the notation, we
introduce in Section~\ref{sec:abstraction} program abstraction by
atom omission and consider some of its basic semantics
properties. In Section~\ref{sec:complexity} we study computational complexity
issues for relevant reasoning tasks around omission, while in
Section~\ref{sec:refinement} we turn to the question of abstraction refinement.
As an application of abstraction, we show in
Section~\ref{sec:evaluation} how it can be used to find reasons for
unsatisfiability of programs and present results obtained by an experimental 
prototype implementation. The subsequent Section~\ref{sec:discussion}
discusses some extensions and possible optimizations, while in
Section~\ref{sec:related} we address related work. The final
Section~\ref{sec:conclusion} gives a short summary and concludes with
an outlook on future research.

\rev{}{This article revises and extends the paper presented at KR 2018
in the following respects. First, full proofs of the technical results
are provided, and formal notions needed in this context have been
detailed. Second, further properties have been established
(e.g.\ Propositions~\ref{prop:order}, \ref{prop:unsatrefsafe}, \ref{prop:brain2007debuging},
\ref{prop:debug-aux-basic}, and \ref{prop:omission-debug-sat},
Theorems \ref{thm:debug_mainprog_rel} and \ref{thm:debug-prop-rel}), and third,
new experimental results are reported, which also include new
benchmarks problems (Disjunctive Scheduling, 15-Puzzle, as well as
Graph 3-Coloring). Fourth, the discussion and related work sections
have been significantly extended, by providing more detail
and/or considering further related notions such as relaxation- and
equivalence-based rewriting and forgetting from logic programs. 
In addition, more examples and explanations have been added, and the
presentation has been restructured.
}

\section{Preliminaries}
\label{sec:prelims}

We consider logic programs $\Pi$ with rules $r$ of the form
\begin{equation}
\alpha_0 \leftarrow \alpha_1,\dots,\alpha_m,\mi{not}\
\alpha_{m+1},\dots,\mi{not}\ \alpha_n,\ \ 0\,{\leq}\, m \,{\leq}\, n,
\end{equation}
where each $\alpha_i$ is a first-order atom%
\footnote{Lifting the framework to programs with strong
negation is easily possible, where
as usual negative literals $\neg p(\vec{t})$ are viewed
as atoms of a positive predicate $\neg p$ and with an additional
constraint $\leftarrow p(\vec{t}), \neg p(\vec{t})$.
}
and $\mi{not}$
is default negation; $r$ is a \emph{constraint}
if $\alpha_0$ is falsity ($\bot$, then omitted) and a \emph{fact}  if
$n\,{=}\,0$. 
We also write $r$ as $\alpha_0 \,{\leftarrow}\, B(r)$, 
where $H(r) = \alpha_0$ is the {\em head} of $r$, or as
$H(r) \leftarrow B^+(r),\mi{not}\ B^-(r)$,
where
$B^+(r)=\{\alpha_1, \dots, \alpha_m\}$
is the {\em positive body}\/ and $B^-(r)=\{\alpha_{m+1},$ $\dots,\alpha_n\}$
is the {\em negative body}\/ of $r$, respectively; furthermore, we let
$B^\pm(r)=B^+(r) \cup B^-(r)$. 
We occasionally omit $r$ from $B^\pm(r)$, $B(r)$ etc. if $r$ is
understood. \revisedversion{To group the rules 
with the same head $\alpha$, we use $\mi{def}(\alpha,\Pi)=\{r \in \Pi \mid H(r)=\alpha\}$.} As a common syntactic extension, we also consider \emph{choice rules} of the form
$\{\alpha\} \leftarrow B$, which stands for the rules $\alpha \leftarrow
B, \mi{not}\ \newrev{\alpha}{\ol{\alpha}}$ and $\ol{\alpha} \leftarrow B, \mi{not}\, \alpha$, where $\ol{\alpha}$ is a
new atom.%
\newrev{}{\footnote{\newrevisedversion{Choice rules are defined
      equivalently in the
proposed ASP-Core-2 standard \cite{calimeri2012asp}, by using disjunction as $\alpha \,|\, \ol{\alpha} \leftarrow B$.}}}

Semantically, $\Pi$ induces a set of answer sets
\cite{gelfond1991classical}, which are Herbrand models of $\Pi$ that
are justified by the rules. For a ground (variable-free) program
$\Pi$, its answer sets are the Herbrand interpretations, i.e., subsets $I
\subseteq \Lits$ of the ground atoms $\Lits$ of $\Pi$, such that $I$ is a minimal model of $f\Pi^I=$ $\{ r \in
\rev{grd(\Pi)}{\Pi} \mid I \models B(r)\}$ \cite{FLP04}. The answer sets of a non-ground
program $\Pi$ are the ones of its grounding $grd(\Pi) =
\bigcup_{r\in\Pi} grd(r)$, where $grd(r)$ is the set of 
all instantiations of $r$ over the Herbrand universe of $\Pi$ (the set of
ground terms constructible from the alphabet of $\Pi$).
The set of answer sets of a program $\Pi$ is denoted as $AS(\Pi)$.
A program $\Pi$ is \emph{unsatisfiable}, if $AS(\Pi)=\emptyset$.
Throughout this paper, unless stated otherwise we consider
ground (propositional) programs, i.e., $\Pi = grd(\Pi)$ holds.

\begin{exmp}
\label{ex:asp}
Consider the program $\Pi = \{ c \leftarrow \mi{not}\ d.$; 
$d \leftarrow \mi{not}\ c.$; 
$a \leftarrow \mi{not}\ b,c.$; 
$b \leftarrow d. \}$ that 
has two answer sets, viz.\ 
$I_1 =\{c,a\}$ and $I_2=\{d,b\}$; indeed, $\Pi^{I_1} = \{ c
\leftarrow \mi{not}\ d.$; $a \leftarrow \mi{not}\ b,c. \}$ and $I_1$
is a minimal model of $\Pi^{I_1}$; similarly, $\Pi^{I_2} = \{ d
\leftarrow \mi{not}\ c.$; $b \leftarrow d.\}$ has $I_2$ among its minimal models.
\end{exmp}


The \emph{dependency graph} of a program $\Pi$,
denoted $G_\Pi$, has vertices $\Lits$, (positive) edges from any
$\alpha_0 \,{=}\, H(r)$ to any $\alpha_1 \in B^+(r)$ and (negative) edges from any $\alpha_0 \,{=}\, H(r)$
to any $\alpha_2 \in B^-(r)$, for all $r \in \Pi$. 
E.g., in Example~\ref{ex:asp} $G_\Pi$ has positive edges
$a\rightarrow c$ and $b\rightarrow d$ and negative edges 
$c\rightarrow d$, $d\rightarrow c$ and $a\rightarrow b$.
\oldversion{
An \emph{odd loop} means
that an atom $\alpha \in \Lits$ depends recursively on itself through an
odd number of negative edges in $G_\Pi$; constraints are viewed as
simple odd loops.}
\revisedversion{A non-empty set $A$ of atoms describes an \emph{odd loop} of $\Pi$, if for each pair $p,q \in A$ there is a path $\tau$ from $p$ to $q$ in $G_\Pi$ with an odd number of negative edges; constraints are viewed as
simple odd loops.}
As well-known, $\Pi$ is satisfiable, if it contains no odd loop. 
The program $\Pi$ in Example~\ref{ex:asp}, e.g., has no odd loop, and
thus (as already seen) has some answer set.
\revisedversion{The \emph{positive dependency graph} is the dependency graph containing only the positive edges, denoted by $G_\Pi^+$.
A program $\Pi$ is \emph{tight}, if $G_\Pi^+$ is acyclic.
A non-empty set $A$ of atoms describes 
a \emph{positive loop} of $\Pi$, if for each pair $p,q \in A$ there is a path $\tau$ from $p$ to $q$ in $G_\Pi^+$
such that each \newrev{literal}{atom} in $\tau$ is in $A$.}
\oldversion{To group the rules 
with the same head $\alpha$, we use $\mi{def}(\alpha,\Pi)=\{r \in \Pi \mid H(r)=\alpha\}$.  An atom $\alpha$ is \emph{unsupported} by an interpretation $I$ if
for each $r \in \mi{def}(\alpha,\Pi)$, $B^+(r) \nsubseteq I$ or
$B^-(r) \cap I \neq \emptyset$ \cite{van1991well}. A set $A\subseteq {\cal A}$ of atoms
is \emph{unfounded} w.r.t an interpretation $I$, if atoms in $A$
only have support by themselves, i.e., a loop only with positive
edges in the dependency graph. }

\revisedversion{An alternative characterization of answer sets was given in \cite{lee2005model}, by using a notion of externally supportedness as follows.
A set $A$ of atoms is \emph{externally supported by $\Pi$ w.r.t. an interpretation $I$}, if there is a rule $r \in grd(\Pi)$ such that
(i) \newrev{$H(r) \cap A \neq \emptyset$}{$H(r) \in A$}, (ii) \newrev{$I \models B^+(r)$ and $B^-(r) \cap I = \emptyset$}{$I \models B(r)$} and (iii) $B^+(r) \cap A = \emptyset$.
The third condition ensures that the support for $H(r)$ in $A$
comes from \emph{outside} of $A$. Then, $I$ is an answer set of $\Pi$ iff $I \models \Pi$ and every loop $A$ of $\Pi$ such that $A \subseteq I$ is externally supported by $\Pi$ w.r.t. $I$.
This characterization corresponds to one by \citeN{leone1997disjunctive} in terms of \emph{unfounded sets} where a set $A$ of atoms
is \emph{unfounded} w.r.t. an interpretation $I$ iff $A$ is not externally supported by $\Pi$ w.r.t. $I$, i.e., atoms in $A$
only have support by themselves. 
A literal $q$ is \emph{unsupported} by an interpretation $I$, if
for each $r \in \mi{def}(q,\Pi)$, \newrev{$B^+(r) \nsubseteq I$ or
$B^-(r) \cap I \neq \emptyset$}{$I \nmodels B(r)$} \cite{van1991well}. }

\section{Abstraction by Omission}
\label{sec:abstraction}

Our aim is to over-approximate a given program through constructing a simpler program by
reducing the vocabulary and 
ensuring that the results of reasoning on the original program are not lost, at the cost of obtaining spurious answer sets. We propose the following definition for abstraction of answer set programs.

\begin{defn}
\label{def:abs}
Given two programs $\Pi$ and $\Pi'$ with $|\Lits|{\geq}|\Lits'|$, where $\Lits,\Lits'$ are sets of ground atoms of $\Pi$ and $\Pi'$, respectively, $\Pi'$ is an \emph{abstraction} of $\Pi$ if there exists a mapping $m : \Lits \rightarrow \Lits' \cup \{\top\}$ such that for any answer set $I$ of $\Pi$, $I'=\{m(\alpha) ~|~ \alpha \in I\}$ is an answer set of $\Pi'$.
\end{defn}

We refer to $m$ as an \emph{abstraction mapping}. This abstraction notion gives the possibility to do clustering over atoms of the program. One approach to do this is to omit some of the atoms from the program, i.e., cluster them into $\top$, and consider the abstract program which is over the remaining atoms. In this paper, we focus on such an omission-based abstraction.

\begin{defn}
Given a set $A \subseteq \Lits$ of atoms, an \emph{omission (abstraction) mapping} is $m_A\,{:}\,\Lits\,{\rightarrow}\, \Lits \cup \{\top\} $ such that $m_A(\alpha) \,{=}\, \top$ if $\alpha \,{\in}\, A$ and $m_A(\alpha) = \alpha$ otherwise.
\end{defn}

An omission mapping removes the set $A$ of atoms from the vocabulary and keeps the rest. We refer to $A$ as the \emph{omitted atoms}. 

\begin{exmp}
\label{ex:abstract-0}
Consider the below programs $\Pi_1,\Pi_2$ and $\Pi_3$ and let the set $A$ of atoms to be omitted to be $\{b\}$.
\begin{center}
{
\begin{tabular}{cl|l|l}
&$\Pi_1$ & $\Pi_2$ & $\Pi_3$\\
\cline{2-4}
&$c \leftarrow \mi{not}\ d.$ & $c \leftarrow \mi{not}\ d.$& $\{a\}.$\\
&$d \leftarrow \mi{not}\ c.$ & $d \leftarrow \mi{not}\ c.$& $\{c\} \leftarrow a.$\\
&$a \leftarrow \mi{not}\ b,c.$& $\{a\} \leftarrow c.$& $d \leftarrow \mi{not}\ a.$\\
&$b \leftarrow d.$& &\\
\cline{1-4}
$AS$ &$\{c,a\},\{d,b\}$&$\{c,a\},\{d\},\{c\}$&$\{c,a\},\{d\},\{a\}$
\end{tabular}
}
\end{center}
Observe that for $I_1'=\{m_A(c),m_A(a)\}=\{c,a\}$ we have $I_1' \in AS(\Pi_2)$ and $I_1' \in AS(\Pi_3)$ and for $I_2'=\{m_A(d),m_A(b)\}=\{d\}$ we have $I_2' \in AS(\Pi_2)$ and $I_2' \in AS(\Pi_3)$. Thus, according to Definition~\ref{def:abs}, both of the programs $\Pi_2$ and $\Pi_3$ are an abstraction of $\Pi_1$. Moreover, they are over-approximations, as they have answer sets $\{c\}$ and $\{a\}$, respectively, which cannot be mapped back to the answer sets of $\Pi_1$.

Although both $\Pi_2$ and $\Pi_3$ are abstractions, notice that the structure of $\Pi_2$ is more similar to $\Pi_1$, while $\Pi_3$ has an entirely different structure of rules.
\end{exmp}
Next we show a systematic way of building, given an ASP program and 
a set $A$ of atoms, an abstraction of $\Pi$ by omitting the atoms in $A$
that we denote by $\mi{omit}(\Pi,A)$. 
The aim is to ensure that
every original answer set of $\Pi$ is mapped to some abstract answer
set of $\mi{omit}(\Pi,A)$, while (unavoidably) some
spurious abstract answer sets may be introduced. Thus, an over-approximation of the original
program $\Pi$ is achieved.

\subsection{Program Abstraction}

The basic method is to project the rules to the non-omitted atoms and introduce choice when an atom is omitted from a rule body, in order to make sure that the behavior of the original rule is preserved. 

We build from $\Pi$ an abstract program $\mi{omit}(\Pi,A)$ according to the abstraction $m_A$. 
For every rule 
$r: \alpha \,{\leftarrow}\, B(r)$ in $\Pi$, 
\beeq
\mi{omit}(r,A) = \left\{ 
\ba{rclr} r\phantom{aaa} & \mbox{if} &   A \cap  B^\pm = \emptyset \wedge \alpha \notin A, & (a)\\
\{\alpha\} \leftarrow B^+(r)\setminus A, \mi{not}\ (B^-(r)\setminus A) & \mbox{if} & A\cap B^\pm \neq \emptyset \wedge \alpha \notin A \cup \{\bot\},& (b)\\
\rev{\emptyset}{\top}\phantom{aaa} &  & \mbox{otherwise}. & (c)
\ea 
\right.
\eeeq
In (a), we keep the rule as it is, if it does not contain any omitted
atom. Item (b) is for the case when the rule is not a constraint and
the rule head is not in $A$. Then the body of the rule is projected
onto the remaining atoms, and a choice is introduced to the head.
Note that we treat default negated atoms, $B^-(r)$, similarly, i.e.,
if some $\alpha \,{\in}\, B^-(r) \cap A$, then we omit
$\mi{not}\ \alpha$ from $B(r)$. As for the remaining cases (either the
rule head is in $A$ or the rule is a constraint containing some atom from $A$), the rule is omitted
by item (c). \rev{We use $\emptyset$ as a symbol for picking no rule.}{We use $\top$ as a symbol for picking no rule.}

We sometimes denote $\mi{omit}(\Pi,A)$ as $\widehat{\Pi}_{\overline{A}}$, where $\overline{A}=\Lits \setminus A$, to emphasize that it is an abstract program constructed with the remaining atoms $\overline{A}$.  For an interpretation $I$ and a \rev{collection}{set} 
$S$ of atoms, $I|_{\ol{A}}$ and $S|_{\ol{A}}$ denotes the projection to the atoms in $\ol{A}$. 
\revisedversion{For a rule $r$, we use $m_A(B(r))$ as a shorthand for $B(\mi{omit}(r,A))$ to emphasize that the mapping $m_A$ is applied to each atom in the body.}\newrev{}{ Also the notation $B(r)\setminus A$ is used as a shorthand for $B^+(r)\setminus A, \mi{not}\ (B^-(r)\setminus A)$.}

\begin{exmp}\label{ex:main} 
Consider a program $\Pi$ and its abstraction $\widehat{\Pi}_{\overline{A}}$ for $A=\{b,d\}$, according to the above steps.
\begin{center}
{
\begin{tabular}{cl|l}
&$\Pi$ & $\widehat{\Pi}_{\overline{\omitt}}$\\
\cline{2-3}
&$c \leftarrow \mi{not}\ d.$ & $\{c\}.$\\
&$d \leftarrow \mi{not}\ c.$ &\\
&$a \leftarrow \mi{not}\ b,c.$& $\{a\} \leftarrow c.$\\
&$b \leftarrow d.$& \\
\cline{1-3}
$AS$&$\{c,a\},\{d,b\}$&$\{\},\{c\},\{c,a\}$
\end{tabular}
}
\end{center}
For $I_1'=\{m_A(c),m_A(a)\}=\{c,a\}$ we have $I_1' \in AS(\widehat{\Pi}_{\overline{\omitt}})$ and for $I_2'=\{m_A(d),m_A(b)\}=\{\}$ we have $I_2' \in AS(\widehat{\Pi}_{\overline{\omitt}})$.
Thus, every answer set of $\Pi$ can be mapped to some answer set of $\widehat{\Pi}_{\overline{A}}$, when the omitted atoms are projected away, i.e., $AS(\Pi)|_{\overline{A}}=\{\{c,a\},\{\}\}\subseteq \{\{c,a\},\{\},\{c\}\}=AS(\widehat{\Pi}_{\overline{A}})$.
\end{exmp}

Notice that in $\widehat{\Pi}_{\ol{\omitt}}$, constraints are omitted if the body contains an omitted atom (item (c)). If instead the constraint gets shrunk by just omitting the atom from the body, then for some interpretation $\hat{I}$, the body may be satisfied
, causing $\hat{I} \,{\notin}\, AS(\widehat{\Pi}_{\ol{\omitt}})$, while this was not the case in $\Pi$ 
for any $I \,{\in}\, AS(\Pi)$ with $I|_{\overline{\omitt}}\,{=}\,\hat{I}$. Thus $I$ cannot be mapped to an abstract answer set of $\widehat{\Pi}_{\ol{\omitt}}$, i.e., $\widehat{\Pi}_{\ol{\omitt}}$ is not an over-approximation of $\Pi$. The next example illustrates this.

\begin{exmp}[Example~\ref{ex:main} continued]
Consider an additional rule $\{\bot \leftarrow c,b.\}$ in $\Pi$, which does
not change its answer sets. If however in the abstraction
$\widehat{\Pi}_{\overline{\omitt}}$ this constraint only gets shrunk
to $\{\bot \leftarrow c.\}$, by omitting $b$ from its body, we get
$AS(\widehat{\Pi}_{\overline{\omitt}})=\{\emptyset\}$. This causes $\widehat{\Pi}_{\overline{\omitt}}$ to have no
abstract answer set to which the original answer set $\{c,a\}$ can be
mapped to. Omitting the constraint from
$\widehat{\Pi}_{\overline{\omitt}}$ as described above avoids such
cases of losing the original answer sets in the abstraction.
\end{exmp}

\paragraph{Abstracting choice rules} We focused above on
rules of the form $\alpha \leftarrow B$ only. However, the same principle 
is applicable to choice rules $r: \{\alpha\}\leftarrow B(r)$. When
building $\mi{omit}(r,A)$, item (a) keeps the rule as it is, item (b)
removes the omitted atom from $B(r)$ and keeps the choice in the head,
and item (c) omits the rule.  This would be syntactically different
from considering the expanded version $(1)\ \alpha\leftarrow B(r),
\mi{not}\;\ol{\alpha}.\ (2)\ \ol{\alpha} \leftarrow B(r), \mi{not}\;\alpha.$
where $\ol{\alpha}$ is an auxiliary atom. If $\alpha$ is omitted, the rule
(2) turns into a guessing rule, but it is irrelevant as $\ol{\alpha}$ 
occurs nowhere else. If $\alpha$ is not omitted but some atom in
$B$, both rules are turned into guessing rules and the same
answer set combinations are achieved as with keeping $r$ 
as a choice rule in item (b). However, the number of auxiliary atoms
would increase, in contrast to treating choice rules $r$ genuinely.

\subsection{Over-Approximation}
\label{sec:over-approx}

The following result shows that $\mi{omit}(\Pi,\omitt)$ can be seen as an over-approximation of $\Pi$.

\begin{thm}\label{thm:abs}
For every answer set $I \in AS(\Pi)$ and atoms $\omitt \subseteq \Lits$, it holds that $I|_{\overline{\omitt}} \in AS(\mi{omit}(\Pi,\omitt))$.
\end{thm}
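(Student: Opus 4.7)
The plan is to show that $I' := I|_{\overline{A}}$ is a minimal model of the reduct $f\widehat{\Pi}_{\overline{A}}^{I'}$ of $\widehat{\Pi}_{\overline{A}} = \mi{omit}(\Pi,A)$. First I would verify by case analysis on the three clauses defining $\mi{omit}(r,A)$ that $I' \models \widehat{\Pi}_{\overline{A}}$: rules kept verbatim in case (a) mention only atoms of $\overline{A}$, so satisfaction transfers from $I$; the choice rules produced in case (b) are satisfied regardless of whether their head lies in $I'$; the rules dropped in case (c) impose no constraint. The substantive part of the proof is minimality.

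For minimality I would argue by contradiction. Suppose some $J' \subsetneq I'$ is a model of $f\widehat{\Pi}_{\overline{A}}^{I'}$, and set $J := J' \cup (I \cap A)$, so that $J \cap \overline{A} = J'$ and $J \cap A = I \cap A$, whence $J \subsetneq I$. The aim is to show that $J$ is a model of $f\Pi^I$, contradicting the minimality of $I$. Fix any $r \in \Pi$ with $I \models B(r)$ (so $r \in f\Pi^I$), assume $J \models B(r)$; one must derive $H(r) \in J$. Because $J \subseteq I$, we have $B^+(r) \subseteq I$ and $B^-(r) \cap I = \emptyset$, and hence $H(r) \in I$ already.

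The case split on $\mi{omit}(r,A)$ then closes the argument. In case (a), $r$ itself lies in $f\widehat{\Pi}_{\overline{A}}^{I'}$; since $B^\pm(r) \subseteq \overline{A}$, one has $B^+(r) \subseteq J'$ and $B^-(r) \cap J' = \emptyset$, so $J' \models r$ forces $H(r) \in J' \subseteq J$. In case (c), either $H(r) \in A$ and thus $H(r) \in I \cap A \subseteq J$, or $r$ is a constraint and $B^+(r) \subseteq I$ would already violate $I \models \Pi$. The delicate case is (b): the abstract rule is a choice rule $\{\alpha\} \leftarrow B^+(r)\setminus A, \mi{not}\ (B^-(r)\setminus A)$ with $\alpha = H(r) \in \overline{A}$, and one has to pass through its expansion into $\alpha \leftarrow B^+(r)\setminus A, \mi{not}\ (B^-(r)\setminus A), \mi{not}\ \ol{\alpha}$ and its companion for $\ol{\alpha}$. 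Since $\alpha \in I'$ (from $H(r) \in I \cap \overline{A}$), the auxiliary atom $\ol{\alpha}$ is absent from the canonical extension of $I'$, so the first rule lies in the reduct; because $J' \subseteq I'$ keeps $\ol{\alpha}$ absent from the corresponding extension of $J'$ too, and $B^+(r)\setminus A \subseteq J'$ holds (from $B^+(r) \subseteq J$), that rule fires to give $\alpha \in J' \subseteq J$. The main obstacle to handle carefully is precisely this choice-rule reduct step: one must track the auxiliary atom $\ol{\alpha}$ through the expansion and confirm that the rule deriving $\alpha$ survives in $f\widehat{\Pi}_{\overline{A}}^{I'}$ and fires on $J'$.
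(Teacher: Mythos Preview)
Your proposal is correct and follows essentially the same approach as the paper: both establish that $I|_{\overline{A}}$ is a model of the reduct and then prove minimality by contradiction via $J = J' \cup (I \cap A)$, with the same case split over (a)/(b)/(c) and the same use of the auxiliary atom $\ol{\alpha}$ in case (b) to force $\alpha \in J'$. The only cosmetic difference is that the paper packages both parts as a single proof by contradiction, and its case (b) argument is phrased as ``either $\alpha \in J'$ or $\ol{\alpha} \in J' \setminus I|_{\overline{A}}$'' rather than tracking the expanded rule through the reduct explicitly as you do.
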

\begin{proof}
Towards a contradiction, assume $I$ is an answer set of $\Pi$, but
$I|_{\overline{\omitt}}$ is not an answer set of
$\mi{omit}(\Pi,\omitt)$. This can occur because either (i)
$I|_{\overline{\omitt}}$ is not a model of $\Pi'=
\mi{omit}(\Pi,\omitt)^{I|_{\overline{\omitt}}}$ or (ii)
$I|_{\overline{\omitt}}$ is not a minimal model of $\Pi'$.

(i) If $I|_{\overline{\omitt}}$ is not a model of $\Pi'$, then there
exists some rule $r \in \Pi'$
such that $I|_{\overline{\omitt}} \models B(r)$ and $I|_{\overline{\omitt}} \nmodels H(r)$.
By the construction of $\mi{omit}(\Pi,\omitt)$, $r$ is not
obtained by case (b), i.e., by modifying some original rule to get rid
of $\omitt$, because then $r$ \rev{would be a choice rule}{would be an instantiation of a choice rule} with head $H(r) =
\{ \alpha \}$, and \rev{$r$ would be satisfied}{thus instantiated to a rule satisfied by $I|_{\overline{\omitt}}$}. 
Consequently, $r$ is a rule from case (a), and thus $r\in \Pi$. We
note that $I|_{\overline{\omitt}}$ and $I$ coincide on all atoms that occur in $r$.
Thus, $I|_{\overline{\omitt}} \models B(r)$ implies that $I \models
B(r)$, and as $I \models r$, it follows $I\models H(r)$, which then means 
$I|_{\overline{\omitt}} \models H(r)$; this is a contradiction.

(ii) Suppose $I' \subset I|_{\overline{\omitt}}$ is a model of $\Pi'$. We
claim that then $J = I' \cup (I \cap \omitt) \subset I$ is a model of
$\Pi^I$, which would contradict that $I\in AS(\Pi)$. Assume that $J
\not\models \Pi^I$.
Then $J$ does not satisfy some rule $r : \alpha \leftarrow B(r)$ in $\Pi^I$, i.e., $J \models
B(r)$ but $J \nmodels \alpha$, i.e., $\alpha \notin J$. The rule $r$ can
either be (a) a rule which is not changed for
$\Pi'$, (b) a rule that was
changed to $\{\alpha\} \leftarrow \widehat{B}$ in
$\Pi'$, or (c) a rule that
was omitted, i.e., $\alpha \in \omitt$. In each case (a)-(c), we
arrive at a contradiction:

\be[(a)] 
\item Since $r \in \Pi^I$ and $r$ involves no atom in $\omitt$, 
we have $r \in \Pi'$. As $I|_{\overline{\omitt}} \models r$ and 
$J|_{\overline{\omitt}}$ coincides with $I'|_{\overline{\omitt}}$, we
have that $J|_{\overline{\omitt}} \models r$, and thus 
$J \models r$; this contradicts $J \nmodels \alpha$.
\item By definition of $J$, we have $\alpha\in I|_{\overline{\omitt}}
\setminus I'$. Since $J\models B(r)$, it follows that
$J|_{\overline{\omitt}} \models \widehat{B}$ and since $I' =
J|_{\overline{\omitt}}$ that $I'\models \widehat{B}$. As $I'$ is a
model of $\Pi'$, we have that $I'$ satisfies the choice atom $\{\alpha\}$
in the head of the rewritten rule, i.e., either (1) $\alpha \in I'$ or (2)  $\alpha \notin I'$; but (1)
contradicts $\alpha\in I|_{\overline{\omitt}} \setminus I'$, while (2)
means that $I'$ is not a smaller model of $\Pi'$ than $I|_{\overline{\omitt}}$, as
then $\overline{\alpha}\in I' \setminus I|_{\overline{\omitt}}$ would hold,
which is again a contradiction.

\item As $r$ is in $\Pi^I$, we have $I \models B(r)$ and 
since $I$ is an answer set of $\Pi$, that $I \models \alpha$. As $\alpha\notin J$, by
construction of $J$ it follows that $\alpha \notin I$, which
contradicts $I\models \alpha$.\hfill\proofbox
\ee
\end{proof}

\noindent By introducing choice rules for any rule that contains an
omitted atom, all possible cases that would be achieved by having the
omitted atom in the rule are covered. Thus, the abstract answer sets
cover the original answer sets. On the other hand, not every abstract answer set may
cover some original answer set, which motivates the following notion.

\begin{defn}\label{def:concrete}
Given a program $\Pi$ and a set $A$ of atoms, 
an answer set 
$\hat{I}$ of $\mi{omit}(\Pi,\omitt)$ is \emph{concrete}, if $\hat{I}
\in AS(\Pi)|_{\overline{\omitt}}$ holds, and \emph{spurious}
otherwise.
\end{defn}
In other words, a spurious abstract answer set $\hat{I}$ cannot be completed to any
original answer set, i.e., no extension $I\,{=}\, \hat{I}
\,{\cup}\, X$ of $\hat{I}$ to all atoms (where $X \,{\subseteq}\, \omitt$)
is an answer set of $\Pi$. This can be alternatively defined in the
following way. 
We introduce the following set of
constraints for $A$ and $\hat{I}$:
\begin{equation}
\label{query}
Q_{\hat{I}}^{\overline{\omitt}}\,{=}\,\{\bot \,{\leftarrow}\, \mi{not}\ \alpha \,{\mid}\, \alpha \,{\in}\, \hat{I}\} \cup \{\bot \,{\leftarrow}\, \alpha \mid \alpha \,{\in}\, \overline{\omitt}\setminus \hat{I}\}.
\end{equation}
Informally, $Q_{\hat{I}}^{\overline{\omitt}}$ is a query for an
answer set that concides on the non-omitted atoms with $\hat{I}$.
The following is then easy to see. 
\begin{prop}
\label{prop:query-program}
For any program $\Pi$ and set $A$ of atoms, 
an abstract answer set $\hat{I} \in AS(\mi{omit}(\Pi,\omitt))$
is spurious iff $\Pi \cup Q_{\hat{I}}^{\overline{\omitt}}$ is unsatisfiable.
\end{prop}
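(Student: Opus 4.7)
The plan is to reduce the statement to two standard observations about ASP constraints and then combine them. First, I would observe that every rule in $Q_{\hat{I}}^{\overline{\omitt}}$ is a constraint, and moreover involves only atoms from $\ol{A} \subseteq \Lits$, i.e.\ atoms already in $\Pi$. Hence adding $Q_{\hat{I}}^{\overline{\omitt}}$ neither supplies new support for any atom nor changes the reduct on atoms occurring in $\Pi$; the effect is only to filter out those answer sets of $\Pi$ that violate a constraint in $Q_{\hat{I}}^{\overline{\omitt}}$. Concretely, I would argue that
\[
AS(\Pi \cup Q_{\hat{I}}^{\overline{\omitt}}) \;=\; \{\, I \in AS(\Pi) \mid I \models Q_{\hat{I}}^{\overline{\omitt}} \,\},
\]
by noting that for any $I$, the reduct $(\Pi \cup Q_{\hat{I}}^{\overline{\omitt}})^I$ equals $\Pi^I$ extended by the positive constraints $\bot \leftarrow$ (from first-type rules with $\alpha \notin I$) and $\bot \leftarrow \alpha$ (from the second-type rules), and $I$ minimally models this reduct precisely when $I$ minimally models $\Pi^I$ and additionally satisfies $Q_{\hat{I}}^{\overline{\omitt}}$.

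Second, I would unfold what it means for $I \subseteq \Lits$ to satisfy $Q_{\hat{I}}^{\overline{\omitt}}$. The first family of constraints $\{\bot \leftarrow \mi{not}\ \alpha \mid \alpha \in \hat{I}\}$ forces $\hat{I} \subseteq I$, while the second family $\{\bot \leftarrow \alpha \mid \alpha \in \overline{\omitt}\setminus \hat{I}\}$ forces $I \cap (\overline{\omitt}\setminus\hat{I}) = \emptyset$. Together these are equivalent to $I \cap \overline{\omitt} = \hat{I}$, i.e.\ $I|_{\overline{\omitt}} = \hat{I}$.

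Combining the two observations yields the chain of equivalences: $\Pi \cup Q_{\hat{I}}^{\overline{\omitt}}$ is unsatisfiable iff no $I \in AS(\Pi)$ satisfies $I|_{\overline{\omitt}} = \hat{I}$ iff $\hat{I} \notin AS(\Pi)|_{\overline{\omitt}}$ iff, by Definition~\ref{def:concrete}, $\hat{I}$ is spurious. I do not foresee a genuine obstacle: the only subtlety is to argue carefully in the first step that no answer set of $\Pi \cup Q_{\hat{I}}^{\overline{\omitt}}$ arises outside $AS(\Pi)$, which follows since constraints do not contribute to the minimal-model condition on atoms in $\Lits$.
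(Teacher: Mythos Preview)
Your proposal is correct. The paper does not actually supply a proof of this proposition; it is introduced with the phrase ``The following is then easy to see'' and left without further argument. Your two-step decomposition (constraints merely filter answer sets; the particular constraints in $Q_{\hat{I}}^{\overline{\omitt}}$ encode exactly the condition $I|_{\overline{\omitt}} = \hat{I}$) is the natural way to make this precise, and nothing in it is in tension with the paper's surrounding development.
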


\begin{exmp}[Example \ref{ex:main} continued]
The program $\widehat{\Pi}_{\overline{A}}$ constructed for $\ol{A} =
\{a,c\}$ has the answer sets $AS(\widehat{\Pi}_{\overline{A}}){=}\{\{\},\{c\},\{c,a\}\}$. The
abstract answer sets $\hat{I}_1=\{\}$ and $\hat{I}_2=\{c,a\}$
are concrete since they can be extended
to the answers sets $I_1=\{d,b\}$ and $I_2=\{c,a\}$ of $\Pi$, as $I_1|_{\overline{A}}=\hat{I}_1$ and
$I_2|_{\overline{A}}=\hat{I}_2$, respectively.
On the other hand, the abstract answer set $\hat{I}=\{c\}$ is
spurious: the program $\Pi \cup Q_{\hat{I}}^{\overline{\omitt}}$,
where $Q_{\hat{I}}^{\overline{\omitt}}=\{\bot \leftarrow \mi{not}\ c.;\; \bot \leftarrow
a.\}$ is unsatisfiable, since the constraints in
$Q_{\hat{I}}^{\overline{\omitt}}$ require that $c$ is true and $a$ is
false, which in turn affects that $b$ and $d$ must be false in $\Pi$
as well; this however violates rule $a \leftarrow \mi{not}\ b,c.$ in $\Pi$. 
\end{exmp}
\subsubsection{ Refining abstractions} Upon encountering a spurious
answer set, one can either continue checking other abstract answer
sets until a concrete one is found, or \emph{refine} the abstraction
in order to reach an abstract program with less spurious answer sets.
Formally, refinements are defined as follows.

\begin{defn}
Given a omission mapping $m_A=\Lits \rightarrow \Lits \cup \{\top\}$,
a mapping $m_{A'}=\Lits \rightarrow \Lits \cup \{\top\}$ is a
\emph{refinement} of $m_A$ if $A' \subseteq A$.
\end{defn}
Intuitively, a refinement is made by adding some of the omitted atoms back.

\begin{exmp}[Example \ref{ex:main} continued]
\label{ex:refine} 
A mapping that omits the set $A'=\{b\}$ is a refinement of the mapping
that omits $A=\{b,d\}$, as $d$ is added back. This affects that in the
abstraction program the choice rule $\{c\}.$ is turned back to $c \leftarrow
\mi{not}\ d.$ and the rule $d \leftarrow \mi{not}\ c.$ is undeleted,
i.e., $\mi{omit}(\Pi,\omitt') = \{ c \leftarrow \mi{not}\ d.;$ $d
\leftarrow \mi{not}\ c.;$ $\{a\} \leftarrow c \}$, which has the
abstract answer sets $\hat{J}_1\eqs\{d\}$, $\hat{J}_2\eqs\{c,a\}$ and
$\hat{J}_3\eqs\{ c\}$.  Note that while $\hat{J}_1$ and $\hat{J}_2$
are concrete, $\hat{J}_3$ is spurious; intuitively, adding $d$ back
does not eliminate the spurious answer set $\{c\}$ of $\mi{omit}(\Pi,\omitt)$.
\end{exmp}

The previous example motivates us to introduce a notion for 
\rev{sets of}{a set of} omitted atoms that needs to be added back in order to get rid of a spurious answer
set. 

\begin{defn}
Let $\hat{I} \in AS(\mi{omit}(\Pi,\omitt))$ be any spurious abstract
answer set of a program $\Pi$ for omitted atoms $A$. A \emph{put-back set}
for $\hat{I}$ is any set $PB \subseteq \omitt$ 
of atoms such that no abstract answer set $\hat{J}$ of
$\mi{omit}(\Pi,\omitt')$ where $\omitt' = \omitt \setminus PB$
exists with $\hat{J}|_{\ol{\omitt}}=\hat{I}$.
\end{defn}

That is, re-introducing the put-back atoms in the abstraction, the
spurious answer set $\hat{I}$ is \emph{eliminated} in the modified abstract
program. Notice that multiple put-back sets (even incomparable ones)
are possible, and the existence of some  put-back set is 
guaranteed, as putting all atoms back, i.e., setting $PB=\omitt$, eliminates the spurious answer set.

\begin{exmp}[Example \ref{ex:main} continued ] 
The discussion in Example~\ref{ex:refine} 
shows that $\{d\}$ is not a put-back set, 
for the spurious answer set $\hat{I}=\{c\} \in \widehat{\Pi}_{\overline{A}}$, 
and neither $\{b\}$ is a put-back set: the abstract program for $A'=\omitt\setminus
\{b\}=\{d\}$ is $\mi{omit}(\Pi,\omitt') = \{ \{c\}.;$ $a \leftarrow
\mi{not}\,b,c.;$ $\{b\}. \}$, which has $\{b,c\}=\{b,c\}|_{\overline{A}}=\hat{I}$ among its abstract
answer sets. Thus, $\hat{I}$ has only the trivial put-back set $\{b,d\}$.
\end{exmp}

In practice, small put-back sets are intuitively preferable to large ones as 
they keep higher abstraction; we shall consider such preference in Section~\ref{sec:complexity}.

\subsection{Properties of Omission Abstraction}

We now consider some basic but useful semantic properties of our
formulation of program abstraction. Notably, it amounts to the
original program in the extreme case and reflects the inconsistency of 
it in properties of spurious answer sets.

\begin{prop} \label{prop:gen}
For any program $\Pi$, 
\begin{enumerate}[(i)]
\item $\mi{omit}(\Pi,\emptyset) = \Pi$ and  $\mi{omit}(\Pi,\Lits \revisedversion{\cup \{\bot\}})= \emptyset\footnote{\revisedversion{$\bot$ is added to the set of omitted atoms in order to ensure that constraints of form $\bot \lars$ are omitted as well.}}$. 
\item $AS(\Pi) = \emptyset$ iff $I=\{\}$ is spurious w.r.t.\ $\omitt=\Lits$.

\item $AS(\mi{omit}(\Pi,\omitt)) = \emptyset$ implies $AS(\Pi)=\emptyset$.

 \item $AS(\Pi) = \emptyset$ iff some $\omitt\subseteq \Lits$ has only spurious
                answer sets iff every $\mi{omit}(\Pi,\omitt)$, $\omitt\subseteq \Lits$, has only spurious
                answer sets.
\end{enumerate}
\end{prop}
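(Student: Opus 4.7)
The proof naturally splits into the four items; each one reduces to either an unfolding of the definition of $\mi{omit}(\Pi,A)$ or an application of the previously established Theorem~\ref{thm:abs} and Proposition~\ref{prop:query-program}, so no genuinely new technique is needed.

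For (i) I would do a case analysis on the three rewrite cases (a)--(c) of the construction. When $A=\emptyset$, every rule $r$ trivially satisfies case~(a), since $A\cap B^\pm(r)=\emptyset$ and $H(r)\notin A$; hence $\mi{omit}(\Pi,\emptyset)=\Pi$. When $A=\Lits\cup\{\bot\}$, the head of every rule lies in $A$ (heads are either atoms of $\Lits$ or $\bot$), which excludes cases~(a) and~(b) and forces case~(c), so all rules are dropped and $\mi{omit}(\Pi,A)=\emptyset$. Item (iii) is then the contrapositive of Theorem~\ref{thm:abs}: given any $I\in AS(\Pi)$, the theorem yields $I|_{\overline{A}}\in AS(\mi{omit}(\Pi,A))$, so $AS(\mi{omit}(\Pi,A))=\emptyset$ forces $AS(\Pi)=\emptyset$.

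Item (ii) follows from Proposition~\ref{prop:query-program} applied with $A=\Lits$: then $\overline{A}=\emptyset$, so $Q_{\emptyset}^{\overline{A}}$ is the empty set of constraints and $\Pi\cup Q_{\emptyset}^{\overline{A}}=\Pi$. Consequently $\emptyset$ is spurious iff $\Pi$ is unsatisfiable, which is exactly the claim. The small side-condition to verify is that $\emptyset$ is actually an abstract answer set of $\mi{omit}(\Pi,\Lits)$: under the harmless convention that $\Pi$ contains no empty constraint $\bot\leftarrow$, every rule of $\Pi$ is deleted by case~(b) or (c) of the construction, so $\mi{omit}(\Pi,\Lits)$ is empty and its unique answer set is $\emptyset$.

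For item (iv) I would establish a cyclic chain of implications, labelling the three statements (a)--(c) in the stated order. The implication (c)$\Rightarrow$(b) is immediate. For (a)$\Rightarrow$(c): if $AS(\Pi)=\emptyset$, then $AS(\Pi)|_{\overline{A}}=\emptyset$ for every $A\subseteq\Lits$, so by Definition~\ref{def:concrete} no abstract answer set of any $\mi{omit}(\Pi,A)$ can be concrete, i.e.\ all of them are spurious. The main step is (b)$\Rightarrow$(a), which I plan to argue by contrapositive: if some $I\in AS(\Pi)$ existed, Theorem~\ref{thm:abs} would yield $I|_{\overline{A}}\in AS(\mi{omit}(\Pi,A))$ for every $A$, and this abstract answer set would be concrete by definition, so $\mi{omit}(\Pi,A)$ would possess a non-spurious answer set, contradicting (b). The one point that needs brief care is the reading of ``only spurious'' when $\mi{omit}(\Pi,A)$ has no abstract answer set at all; this borderline case is absorbed by item~(iii), which already delivers $AS(\Pi)=\emptyset$ and hence (a).
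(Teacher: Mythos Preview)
Your proposal is correct and follows essentially the same route as the paper: item~(i) by case analysis on the rewrite rules, item~(ii) via Proposition~\ref{prop:query-program} with $Q_{\hat{I}}^{\overline{\omitt}}=\emptyset$, item~(iii) as the contrapositive of Theorem~\ref{thm:abs}, and item~(iv) by the same cyclic chain (a)$\Rightarrow$(c)$\Rightarrow$(b)$\Rightarrow$(a). If anything, you are slightly more careful than the paper in explicitly verifying that $\emptyset\in AS(\mi{omit}(\Pi,\Lits))$ before invoking Proposition~\ref{prop:query-program}, and in addressing the vacuous reading of ``only spurious'' in item~(iv).
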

\begin{proof}
\begin{enumerate}[(i)]
\item  Omitting the set $\emptyset$ from $\Pi$ causes no change in the rules, while omitting the set $\Lits\revisedversion{\cup \{\bot\}}$ causes all the rules to be omitted.
\item Since $\hat{I}=\{\}$ and $\omitt=\Lits$, we have $Q_{\hat{I}}^{\overline{\omitt}}=\{\}$. Thus, by \oldversion{the alternative definition}\revisedversion{Proposition~\ref{prop:query-program}}, $I=\{\}$ is spurious w.r.t.\ $\omitt=\Lits$ iff $AS(\Pi \cup Q_{\hat{I}}^{\overline{\omitt}})=\emptyset$ iff $AS(\Pi)=\emptyset$.
\item Corollary of Theorem~\ref{thm:abs}.
\item 
If $AS(\Pi)=\emptyset$, then no  
$\hat{I}\in AS(\mi{omit}(\Pi,\omitt))$ for any $A\subseteq \Lits$
can be extended to an answer set of $\Pi$; thus, all
abstract answer sets of $\mi{omit}(\Pi,\omitt)$  are spurious. This in
turn trivially implies that $\mi{omit}(\Pi,\omitt)$ has for some $A\subseteq \Lits$
only spurious answer sets. Finally, assume
the latter holds but $AS(\Pi)\neq\emptyset$; then $\Pi$ has some
answer set $I$, and by Theorem~\ref{thm:abs}, $I|_{\ol{A}} \in
AS(\mi{omit}(\Pi,\omitt))$, which would contradict that $\mi{omit}(\Pi,\omitt)$
has only spurious answer sets.\hfill\proofbox
\end{enumerate}
\end{proof}

The abstract program is built by a syntactic transformation, given the
set $A$ of atoms to be omitted. It turns out that we can omit the
atoms sequentially, and the order does not matter.

\begin{lemma}
\label{lem:order}
For any program $\Pi$ and atoms $a_1,a_2 \in \Lits$, $\mi{omit}(\mi{omit}(\Pi,\{a_1\}),\{a_2\})=\mi{omit}(\mi{omit}(\Pi,\{a_2\}),\{a_1\})$.
\end{lemma}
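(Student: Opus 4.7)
The plan is to establish the equality rule-by-rule, exploiting the fact that $\mi{omit}$ is applied independently to each rule and that $\mi{omit}(\top,\{a\})=\top$ by the convention that $\top$ stands for ``pick no rule.'' Thus it suffices to verify, for every rule $r$ of $\Pi$, that $\mi{omit}(\mi{omit}(r,\{a_1\}),\{a_2\}) = \mi{omit}(\mi{omit}(r,\{a_2\}),\{a_1\})$; the lemma then follows by taking the (multiset) union over $r \in \Pi$. I would also dispose of the trivial case $a_1=a_2$ up front.

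Fix $r: \alpha \leftarrow B(r)$ and perform a case split on the positions of $a_1, a_2$ relative to $r$. \emph{Case A (head or constraint triggers case (c)).} If $\alpha \in \{a_1,a_2\}$, or if $\alpha = \bot$ and $\{a_1,a_2\} \cap B^\pm(r) \neq \emptyset$, then whichever of $a_1,a_2$ qualifies first under case (c) turns $r$ into $\top$, which is absorbed by the outer omission; in the reverse order, after the (possibly trivial) inner step we reach a rule whose head is still $\alpha$ (for case (a)/(b)) or a constraint still containing the second atom, so the outer omission triggers case (c) again. Both compositions yield $\top$. \emph{Case B (head preserved).} Assume $\alpha \notin \{a_1,a_2,\bot\}$, and let $k=|\{a_1,a_2\} \cap B^\pm(r)|$. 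If $k=0$, both orderings keep $r$ unchanged by case (a). If $k=1$, say with $a_1 \in B^\pm(r)$, then one order invokes (b) then (a), and the other order invokes (a) then (b); in each case the final rule is $\{\alpha\} \leftarrow B(r) \setminus \{a_1\}$. If $k=2$, the first omission produces, by case (b), a choice rule $\{\alpha\} \leftarrow B(r) \setminus \{a_i\}$ still containing $a_j$, and the second omission, again by case (b), strips $a_j$ from the body while keeping the choice, so both orderings end with $\{\alpha\} \leftarrow B(r) \setminus \{a_1,a_2\}$.

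The only genuine subtlety is that after the first omission we may be applying $\mi{omit}(\cdot,\{a_j\})$ to a choice rule $\{\alpha\} \leftarrow B'$ rather than to a rule of the base form $\alpha \leftarrow B$. Here I would explicitly invoke the paragraph ``Abstracting choice rules'' immediately preceding the over-approximation subsection, which extends cases (a)--(c) verbatim to choice rules: (a) leaves the rule intact, (b) strips the omitted atom from $B'$ while retaining $\{\alpha\}$, and (c) discards it. Once that extension is in place, every subcase above collapses to a routine check that the surviving body is $B(r) \setminus (\{a_1,a_2\} \cap B^\pm(r))$ and that a choice on $\alpha$ is present iff this intersection is non-empty—both conditions depending only on the set $\{a_1,a_2\}$ and not on its enumeration order. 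Aggregating over $r \in \Pi$ then yields the stated program-level identity.
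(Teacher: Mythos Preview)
Your proposal is correct and follows essentially the same rule-by-rule case analysis as the paper's proof, only more carefully: the paper merely sketches the ``one atom'' case and omits the constraint case entirely, whereas you spell out each subcase and explicitly invoke the choice-rule paragraph to justify iterating case (b). One tiny gap in your case split: a constraint $\bot \leftarrow B$ with $\{a_1,a_2\}\cap B^\pm(r)=\emptyset$ falls into neither your Case~A nor your Case~B (the latter assumes $\alpha\neq\bot$), but it is trivially handled by two applications of case~(a).
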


\begin{proof}
The rules of $\Pi$ that do not contain $a_1$ or $a_2$ remain
unchanged, and the rules that contain one of $a_1$ or $a_2$ will be
updated at the respective abstraction steps. The rules that contain
both $a_1$ and $a_2$ are treated as follows: 
\bi
\item Consider a rule $a_1 \leftarrow B$ with $a_2 \in \rev{B}{B^\pm}$ (wlog). Omitting first $a_2$ from the rule causes to have $\{a_1\}\leftarrow B\setminus \{a_2\}$, and omitting then $a_1$ results in omission of the rule. Omitting first $a_1$ from the rule causes the omission of the rule at the first abstraction step.
\item Consider a rule $\alpha \leftarrow B$, with $a_1,a_2 \in
\rev{B}{B^\pm}$ and $\alpha \neq a_1,a_2$. Omitting first $a_2$ from the rule causes to have
$\{a\}\leftarrow B\setminus \{a_2\}$, and omitting then $a_1$ causes
to have $\{a\}\leftarrow B\setminus \{a_1,a_2\}$. The same rule is
obtained when omitting first $a_1$ and then $a_2$. \hfill\proofbox
\ei
\end{proof}

An easy induction argument shows then the property mentioned above.
\begin{prop}
\label{prop:order}
For any program $\Pi$ and set $A=\{ a_1,\ldots, a_n\}$ of atoms, 
\[\mi{omit}(\Pi,A) = \mi{omit}(\mi{omit}(\cdots
(\mi{omit}(\Pi,\{a_{\pi(1)}\}),\cdots\{a_{\pi(n-1)}\}),\{a_{\pi(n)}\})
\]
where $\pi$ is any permutation of $\{1,\ldots,n\}$.
\end{prop}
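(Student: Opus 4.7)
The plan is to prove the statement by induction on $n = |A|$, leveraging Lemma~\ref{lem:order} as the essential commutation tool. The argument naturally splits into two parts: (i) establishing that sequential omission under some fixed order yields the same program as the one-shot construction $\mi{omit}(\Pi, A)$, and (ii) showing that any two orders yield the same sequential result.

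For part (ii), I would invoke the well-known fact that every permutation $\pi$ of $\{1, \ldots, n\}$ can be written as a finite composition of adjacent transpositions. By Lemma~\ref{lem:order}, swapping two adjacent omissions inside any sequential composition leaves the resulting program unchanged. A straightforward bubble-sort-style argument then shows that the sequential omission under any permutation $\pi$ agrees with that under the identity permutation; hence it suffices to establish the equality for one fixed order.

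For part (i), I would proceed by induction on $n$. Writing $\Pi_0 = \Pi$ and $\Pi_i = \mi{omit}(\Pi_{i-1}, \{a_i\})$ for $i \,{=}\, 1, \dots, n$, the cases $n \,{=}\, 0, 1$ are immediate from the definitions. For the inductive step, it suffices to verify, rule by rule, that $\mi{omit}(\Pi_{n-1}, \{a_n\})$ agrees with $\mi{omit}(\Pi, A)$ on the image of each $r \in \Pi$. Tracing $r$ through the three cases of the $\mi{omit}$ definition yields: if $H(r) \in A$ or $r$ is a constraint with $A \cap B^\pm(r) \neq \emptyset$, then $r$ is dropped in both processes (sequentially, at the first step where a triggering atom of $A$ is omitted); if $A$ never touches $r$, then $r$ is preserved in both; and otherwise, one-shot omission produces $\{H(r)\} \leftarrow B^+(r)\setminus A,\ \naf (B^-(r)\setminus A)$, which is also the end result of the sequential process, since each single-atom omission strips exactly one atom from the body, and once a choice head has been introduced, subsequent omissions preserve it (using $H(r) \notin A$).

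The main obstacle lies in the sequential handling of rules that become choice rules at some intermediate step $\Pi_i$: one must carefully apply the extension of $\mi{omit}$ to choice rules as described in the paragraph on ``Abstracting choice rules'' following Theorem~\ref{thm:abs}, and verify that it contracts the body correctly and keeps the head as a choice, so that the end result matches the one-shot construction. No new phenomena arise here, but the bookkeeping must be made precise; in particular, one needs to be sure that when an atom in $A$ appears both as a body atom of some rule and as the head of another, the two updates remain independent across the sequential steps, which follows directly from the fact that each single-atom $\mi{omit}$ depends only on the current program and the atom being omitted.
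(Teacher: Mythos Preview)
Your proposal is correct and follows essentially the same approach as the paper, which merely states ``An easy induction argument shows then the property mentioned above'' after Lemma~\ref{lem:order}; you have simply spelled out the details that the paper leaves implicit. One minor correction: the paragraph on ``Abstracting choice rules'' appears \emph{before} Theorem~\ref{thm:abs}, not after it.
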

Thus, the abstraction can be done one atom at a time. 

Omitting atoms in a program means projecting them away from the answer
    sets. Thus, for a mapping
$m_\omitt$, the concrete answer sets in $\mi{omit}(\Pi,\omitt)$ always
have corresponding answer sets in the programs computed for refinements of $m_\omitt$. 

\begin{prop}
Suppose $\hat{I}$ is a concrete answer set of $\mi{omit}(\Pi,\omitt)$
for a program $\Pi$ and a set $A$ of atoms.
Then, for every $\omitt'\subseteq \omitt$ some answer set
$\hat{I}'\in AS(\mi{omit}(\Pi,\omitt'))$
exists such that $\hat{I}'|_{\overline{\omitt}} = \hat{I}$. 
\end{prop}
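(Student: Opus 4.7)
The plan is to unfold the definition of ``concrete'' and then invoke Theorem~\ref{thm:abs} once more for the smaller omission set $A'$. Since $\hat{I}$ is a concrete answer set of $\mi{omit}(\Pi,A)$, by Definition~\ref{def:concrete} there exists an original answer set $I \in AS(\Pi)$ with $I|_{\overline{A}} = \hat{I}$. My candidate abstract answer set for the refined mapping is then the natural one, namely $\hat{I}' := I|_{\overline{A'}}$.

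The two things left to verify are (i) that $\hat{I}' \in AS(\mi{omit}(\Pi,A'))$, and (ii) that $\hat{I}'|_{\overline{A}} = \hat{I}$. For (i), I would simply apply Theorem~\ref{thm:abs} to $I$ with the omission set $A'$, which immediately yields $I|_{\overline{A'}} \in AS(\mi{omit}(\Pi,A'))$. For (ii), since $A' \subseteq A$ we have $\overline{A} \subseteq \overline{A'}$, and projection is transitive in the sense that $(I|_{\overline{A'}})|_{\overline{A}} = I|_{\overline{A}}$; the right-hand side equals $\hat{I}$ by choice of $I$.

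There is no real obstacle here: the statement is essentially a corollary of Theorem~\ref{thm:abs} combined with the definition of concreteness, exploiting the fact that refinements only reintroduce atoms and thus do not disturb the projection to $\overline{A}$. The only mild care needed is to make explicit that ``concrete'' gives us a witness $I$ at the concrete (fully atomized) level, rather than merely at some intermediate abstraction level, so that Theorem~\ref{thm:abs} is applicable with any $A' \subseteq A$.
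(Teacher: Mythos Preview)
Your proposal is correct and follows essentially the same argument as the paper's proof: obtain a witness $I \in AS(\Pi)$ from the definition of concreteness, apply Theorem~\ref{thm:abs} with the smaller omission set $A'$ to get $I|_{\overline{A'}} \in AS(\mi{omit}(\Pi,A'))$, and then use the projection identity $(I|_{\overline{A'}})|_{\overline{A}} = I|_{\overline{A}} = \hat{I}$.
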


\begin{proof}
By Definition~\ref{def:concrete}, $\hat{I} \in 
AS(\Pi)|_{\overline{\omitt}}$, i.e.\ there exists some $I \in AS(\Pi)$
s.t. $I|_{\overline{\omitt}}=\hat{I}$. By Theorem~\ref{thm:abs},
for every $B \subseteq \Lits$, $I|_{\overline{B}} \in
AS(\mi{omit}(\Pi,B))$ holds, and in particular for $B\subseteq \omitt$;
we thus obtain $(I|_{\overline{B}})|_{\overline{\omitt}}=
I|_{\overline{\omitt}} = \hat{I}$.
\end{proof}

The next property is convexity of spurious answer sets.

\begin{prop}
\label{prop:convex}
Suppose $\hat{I} \in AS(\mi{omit}(\Pi,\omitt))$ is spurious 
and that $\mi{omit}(\Pi,\omitt')$, where
$\omitt'\,{\subseteq}\,\omitt$, has some 
answer set $\hat{I'}$ such that $\hat{I'}|_{\ol{\omitt}}=\hat{I}$.
Then,  
for every $\omitt''$ such that $\omitt' \subseteq \omitt'' \subseteq
\omitt$, it holds that $\hat{I'}|_{\ol{\omitt''}}\in
AS(\mi{omit}(\Pi,\omitt''))$ and $\hat{I'}|_{\ol{\omitt''}}$ is spurious.
\end{prop}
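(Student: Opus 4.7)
The plan is to prove both claims by exploiting the incremental nature of omission (Proposition~\ref{prop:order}) together with Theorem~\ref{thm:abs}, and to argue spuriousness by contradiction via a projection chase.

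For the membership claim $\hat{I'}|_{\ol{\omitt''}}\in AS(\mi{omit}(\Pi,\omitt''))$, I would first observe that since $\omitt'\subseteq \omitt''\subseteq \omitt$, Proposition~\ref{prop:order} gives
\[
\mi{omit}(\Pi,\omitt'') \;=\; \mi{omit}\bigl(\mi{omit}(\Pi,\omitt'),\,\omitt''\setminus\omitt'\bigr).
\]
Now apply Theorem~\ref{thm:abs} to the program $\mi{omit}(\Pi,\omitt')$, its answer set $\hat{I'}$, and the additional omission set $\omitt''\setminus\omitt'$: this yields $\hat{I'}|_{\ol{\omitt''\setminus\omitt'}}\in AS(\mi{omit}(\Pi,\omitt''))$. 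Since $\hat{I'}$ already contains no atoms from $\omitt'$, projecting away $\omitt''\setminus\omitt'$ is the same as projecting away $\omitt''$, so $\hat{I'}|_{\ol{\omitt''\setminus\omitt'}} = \hat{I'}|_{\ol{\omitt''}}$, and the membership follows.

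For the spuriousness claim, I would argue by contradiction: suppose $\hat{I'}|_{\ol{\omitt''}}$ is concrete in $\mi{omit}(\Pi,\omitt'')$. Then by Definition~\ref{def:concrete} there exists $J\in AS(\Pi)$ with $J|_{\ol{\omitt''}} = \hat{I'}|_{\ol{\omitt''}}$. Since $\omitt''\subseteq\omitt$, we have $\ol{\omitt}\subseteq\ol{\omitt''}$, hence projecting further from $\ol{\omitt''}$ down to $\ol{\omitt}$ yields
\[
J|_{\ol{\omitt}} \;=\; \bigl(J|_{\ol{\omitt''}}\bigr)\bigl|_{\ol{\omitt}} \;=\; \bigl(\hat{I'}|_{\ol{\omitt''}}\bigr)\bigl|_{\ol{\omitt}} \;=\; \hat{I'}|_{\ol{\omitt}} \;=\; \hat{I},
\]
where the last equality is the hypothesis on $\hat{I'}$. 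By Theorem~\ref{thm:abs} applied to $J$ and $\omitt$, we conclude $\hat{I}=J|_{\ol{\omitt}}\in AS(\Pi)|_{\ol{\omitt}}$, which contradicts the assumption that $\hat{I}$ is spurious in $\mi{omit}(\Pi,\omitt)$.

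The routine but error-prone step, and the one I would write out most carefully, is the projection calculus: verifying that $\hat{I'}$ lives in $\ol{\omitt'}$ so that omitting $\omitt''\setminus\omitt'$ from $\hat{I'}$ is identical to omitting all of $\omitt''$, and dually that projecting a ground model of $\Pi$ through $\ol{\omitt''}$ and then through $\ol{\omitt}$ equals projecting once through $\ol{\omitt}$. No substantive obstacle is expected; the result is essentially the composition of Theorem~\ref{thm:abs} with itself along the chain $\omitt'\subseteq\omitt''\subseteq\omitt$, witnessed by the same interpretation $\hat{I'}$.
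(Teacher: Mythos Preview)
Your proposal is correct and follows essentially the same approach as the paper: both use Proposition~\ref{prop:order} together with Theorem~\ref{thm:abs} to obtain membership, and both argue spuriousness by projecting a hypothetical witness $J\in AS(\Pi)$ down to $\ol{\omitt}$ to contradict the spuriousness of $\hat{I}$. The only cosmetic difference is that the paper first establishes that $\hat{I'}$ itself is spurious before treating $\hat{I'}|_{\ol{\omitt''}}$, whereas you go directly to the latter; your use of $\omitt''\setminus\omitt'$ rather than $\omitt''$ as the second omission set is also slightly more explicit, but the content is the same.
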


\begin{proof}
We first note that $\hat{I'}$ is spurious as well: if not, 
some $I \in AS(\Pi)$ exists such that $I|_{\ol{A'}} = \hat{I'}$;
but then $I|_{\ol{A}} = (I|_{\ol{A}'})|_{\ol{A}} =
\hat{I'}|_{\ol{A}} =\hat{I}$, which contradicts that
$\hat{I}$ is spurious. Applying Theorem~\ref{thm:abs} to 
$\mi{omit}(\Pi,\omitt')$ and $A''$, we obtain that $\widehat{I'}|_{\ol{A''}}$ is an answer set of 
$\mi{omit}(\mi{omit}(\Pi,\omitt'),\omitt'')$, which by 
Proposition~\ref{prop:order} coincides with $\mi{omit}(\Pi,\omitt'')$.
Moreover, $\hat{I'}|_{\ol{A''}}$ is spurious, since otherwise 
$\hat{I}$ would not be spurious either, which would be a contradiction.
\end{proof}

The next proposition  intuitively shows that once a spurious answer set
is eliminated by adding back some of the omitted atoms, no extension of this answer
set will show up when further omitted atoms are added back.

\begin{prop}\label{prop:eliminate}
Suppose that 
$\hat{I} \in AS(\mi{omit}(\Pi,\omitt))$ is a spurious
answer set and $PB\subseteq A$ is a
put-back set for $\hat{I}$. 
Then, for every $\omitt' \subseteq \omitt\setminus PB$
and answer set $\hat{I'} \in AS(\mi{omit}(\Pi,\omitt\setminus
(PB\cup \omitt'))$ it holds that $\hat{I'}|_{\overline{\omitt}}\neq\hat{I}$.
\end{prop}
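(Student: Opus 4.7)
The plan is to argue by contradiction, leveraging Proposition~\ref{prop:order} to decompose the omission and Theorem~\ref{thm:abs} to push down from a coarser to a finer abstraction. Suppose for contradiction that there exists $\omitt' \subseteq \omitt \setminus PB$ and an answer set $\hat{I'} \in AS(\mi{omit}(\Pi,\omitt\setminus(PB\cup \omitt')))$ such that $\hat{I'}|_{\overline{\omitt}} = \hat{I}$. The goal will be to produce from $\hat{I'}$ an abstract answer set of $\mi{omit}(\Pi,\omitt\setminus PB)$ that projects onto $\hat{I}$, which would contradict the definition of a put-back set.

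First I would verify the set-theoretic bookkeeping. Since $\omitt' \subseteq \omitt\setminus PB$, the sets $\omitt\setminus(PB\cup \omitt')$ and $\omitt'$ are disjoint and their union equals $\omitt\setminus PB$. Hence by Proposition~\ref{prop:order},
\[
  \mi{omit}(\Pi,\omitt\setminus PB) \,=\, \mi{omit}\bigl(\mi{omit}(\Pi,\omitt\setminus(PB\cup \omitt')),\,\omitt'\bigr).
\]

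Next I would apply Theorem~\ref{thm:abs} to the inner program $\mi{omit}(\Pi,\omitt\setminus(PB\cup \omitt'))$ with omission set $\omitt'$: from $\hat{I'} \in AS(\mi{omit}(\Pi,\omitt\setminus(PB\cup \omitt')))$ it follows that $\hat{J} := \hat{I'}|_{\overline{\omitt'}}$ is an answer set of $\mi{omit}(\Pi,\omitt\setminus PB)$. Since $\omitt' \subseteq \omitt$, we have $\overline{\omitt} \subseteq \overline{\omitt'}$, so restricting $\hat{J}$ further to $\overline{\omitt}$ agrees with restricting $\hat{I'}$ directly, yielding $\hat{J}|_{\overline{\omitt}} = \hat{I'}|_{\overline{\omitt}} = \hat{I}$.

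Thus $\hat{J}$ is an answer set of $\mi{omit}(\Pi,\omitt\setminus PB)$ with $\hat{J}|_{\overline{\omitt}} = \hat{I}$, directly contradicting the definition of $PB$ as a put-back set for $\hat{I}$. I do not anticipate any serious obstacle here; the one point to watch is the careful tracking of projections and complement sets (in particular the inclusions $\overline{\omitt}\subseteq \overline{\omitt\setminus PB}\subseteq \overline{\omitt'}$) to make sure the composed restriction really equals $\hat{I}$, but this is routine once the decomposition via Proposition~\ref{prop:order} is in place.
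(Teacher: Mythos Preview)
Your proof is correct and follows essentially the same route as the paper. The only cosmetic difference is that the paper packages the step ``$\hat{I'}|_{\overline{\omitt\setminus PB}} \in AS(\mi{omit}(\Pi,\omitt\setminus PB))$'' as an application of Proposition~\ref{prop:convex} (convexity), whereas you unwrap that proposition and invoke Theorem~\ref{thm:abs} together with Proposition~\ref{prop:order} directly; the underlying mechanics are identical, and the spuriousness part of Proposition~\ref{prop:convex} that the paper also cites is not actually needed for the contradiction.
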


\begin{proof}
Towards a contradiction, assume that for some $\omitt' \subseteq \omitt\setminus PB$
and answer set $\hat{I'} \in AS(\mi{omit}(\Pi,\omitt\setminus
(PB\cup \omitt'))$ it holds that $\hat{I'}|_{\overline{\omitt}}=\hat{I}$.
By Proposition~\ref{prop:convex}, we obtain that $\hat{I'}$ is spurious 
and moreover that $\hat{I'}|_{\ol{\omitt\setminus PB}} \in AS(\mi{omit}(\Pi,\omitt\setminus PB)$ is spurious. However, as $(\hat{I'}|_{\ol{\omitt\setminus PB}})|_{\overline{A}} = \hat{I}'|_{\overline{A}} = \hat{I}$, 
this contradicts that $PB$ is a put-back set for $\hat{I}$.
\end{proof}

\subsection{Faithful Abstractions}

Ideally, abstraction simplifies a program but does not change its
semantics. Our next notion serves to describe such abstractions.
%
\begin{defn}
An abstraction $\mi{omit}(\Pi,\omitt)$ is \emph{faithful} if it has no spurious answer sets.
\end{defn}
Faithful abstractions are a syntactic representation of projected
answer sets, i.e.,
$AS(\mi{omit}(\Pi,\omitt))=AS(\Pi)|_{\ol{\omitt}}$. They fully
preserve the information contained in the answer sets, and allow for 
reasoning (both brave and cautious) that is sound and complete over
the projected answer sets.

\begin{exmp}[Example~\ref{ex:main} continued]
Consider omitting the set $A=\{a,c\}$ from $\Pi$. The resulting $\widehat{\Pi}_{\overline{\omitt}}$ is faithful, since its answer sets $\{\{\},\{b,d\}\}$ are the ones obtained from projecting $\{a,c\}$ away from $AS(\Pi)$.
\begin{center}
{
\begin{tabular}{cl|l}
&$\Pi$ & $\widehat{\Pi}_{\overline{\omitt}}$\\
\cline{2-3}
&$c \leftarrow \mi{not}\ d.$ & \\
&$d \leftarrow \mi{not}\ c.$ & $\{d\}.$\\
&$a \leftarrow \mi{not}\ b,c.$& \\
&$b \leftarrow d.$& $b \leftarrow d.$\\
\cline{1-3}
$AS$&$\{c,a\},\{d,b\}$&$\{\},\{d,b\}$
\end{tabular}
}
\end{center}
\end{exmp}

However, while an abstraction may be faithful, by adding back omitted
atoms the faithfulness might get lost. In particular, if the program
$\Pi$ is satisfiable, then $\omitt=\Lits$ is a faithful abstraction;
by adding back atoms, spurious answer sets might arise.  This
motivates the following notion.

\begin{defn}
A faithful abstraction $\mi{omit}(\Pi,\omitt)$ is \emph{refine\-ment-safe} if for all $\omitt' \subseteq \omitt$, $\mi{omit}(\Pi,\omitt')$ has no spurious answer sets.
\end{defn}

In a sense, a refinement-safe faithful abstraction allows us to zoom in details
without losing already established relationships between atoms, as they
appear in the abstract answer sets, and no spuriousness check is
needed.  In particular, this applies to programs that are
unsatisfiable. By Proposition~\ref{prop:gen}-(iii), unsatisfiability
of an abstraction $\mi{omit}(\Pi,A)$ implies that the original program
is unsatisfiable, and hence the abstraction is faithful. Moreover, we obtain:
\begin{prop}
\label{prop:unsatrefsafe}
Given $\Pi$ and $A$, if $\mi{omit}(\Pi,\omitt)$ is unsatisfiable, then it is refinement-safe faithful.
\end{prop}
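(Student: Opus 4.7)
The plan is to reduce refinement-safety to unsatisfiability of every intermediate abstraction, and then use Theorem~\ref{thm:abs} together with Proposition~\ref{prop:order} to propagate unsatisfiability downward from $\mi{omit}(\Pi,A)$ to each $\mi{omit}(\Pi,A')$ with $A'\subseteq A$.

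The first step is to observe that a program with no answer sets at all is vacuously faithful, since there are then no answer sets that could be spurious. So it suffices to show the following stronger statement: if $\mi{omit}(\Pi,A)$ is unsatisfiable, then $\mi{omit}(\Pi,A')$ is unsatisfiable for every $A'\subseteq A$. Applied to $A'=A$ this already gives that $\mi{omit}(\Pi,A)$ itself is faithful, and applied to arbitrary $A'\subseteq A$ it yields refinement-safety.

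For the stronger statement, I would argue by contradiction: suppose $\mi{omit}(\Pi,A')$ has some answer set $\hat{I'}$. Using Proposition~\ref{prop:order} (order independence of omission, iterated over the atoms in $A\setminus A'$), we have
\[
\mi{omit}(\Pi,A) \;=\; \mi{omit}(\mi{omit}(\Pi,A'),\,A\setminus A').
\]
Now apply Theorem~\ref{thm:abs} to the program $\mi{omit}(\Pi,A')$ with the omitted set $A\setminus A'$: the answer set $\hat{I'}$ restricts to an answer set $\hat{I'}|_{\overline{A\setminus A'}}$ of $\mi{omit}(\mi{omit}(\Pi,A'),\,A\setminus A')$, i.e., of $\mi{omit}(\Pi,A)$. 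This contradicts the assumption that $\mi{omit}(\Pi,A)$ is unsatisfiable.

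There is no real obstacle here beyond getting the composition right; the main point is to exploit Proposition~\ref{prop:order} so that over-approximation (Theorem~\ref{thm:abs}) can be invoked to lift an answer set of the coarser abstraction $\mi{omit}(\Pi,A')$ to an answer set of the finer $\mi{omit}(\Pi,A)$. Combined with the trivial observation that an unsatisfiable program is faithful, this closes the proof.
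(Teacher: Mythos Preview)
Your proof is correct and follows essentially the same route as the paper's: assume some refinement $\mi{omit}(\Pi,A')$ has an answer set and use over-approximation (Theorem~\ref{thm:abs}) to push it up to an answer set of $\mi{omit}(\Pi,A)$, a contradiction. You are in fact slightly more careful than the paper, which applies Theorem~\ref{thm:abs} to $\mi{omit}(\Pi,A')$ without explicitly invoking Proposition~\ref{prop:order} to identify $\mi{omit}(\mi{omit}(\Pi,A'),A\setminus A')$ with $\mi{omit}(\Pi,A)$; your explicit use of that identification is the right way to justify the step.
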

\begin{proof}
Assume that $\omitt$ is refined to some $\omitt' \subset \omitt$, where some atoms are added back in the abstraction, and the constructed $\mi{omit}(\Pi,\omitt')$ is not unsatisfiable, i.e., $AS(\mi{omit}(\Pi,\omitt'))\neq \emptyset$. By Theorem~\ref{thm:abs}, it must hold that $AS(\mi{omit}(\Pi,\omitt'))|_{\overline{A}} \subseteq AS(\mi{omit}(\Pi,\omitt))$, which contradicts to the fact that $\mi{omit}(\Pi,\omitt)$ is unsatisfiable.
\end{proof}

\section{Computational Complexity}
\label{sec:complexity}

In this section, we turn to the computational complexity of reasoning tasks 
that are associated with program abstraction. 
We start with noting that constructing the abstract program and model
checking on it is tractable.

\begin{lemma}
\label{lem:do-abstract}
Given $\Pi$ and $\omitt$, 
(i) the program $\mi{omit}(\Pi,\omitt)$ is constructible in
logarithmic space, and (ii) checking whether $I \in AS(\mi{omit}(\Pi,\omitt))$
holds for a given $I$ is feasible in polynomial time.
\end{lemma}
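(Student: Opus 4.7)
\medskip

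\noindent\textbf{Proof plan.}
The plan is to argue both parts by a direct inspection of the construction of $\mi{omit}(\Pi,\omitt)$, exploiting that the transformation is purely local (rule-by-rule) and that, apart from the choice construct in case (b), the resulting program is essentially a normal logic program.

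For part (i), I would observe that $\mi{omit}(\Pi,\omitt)$ can be produced by streaming through the rules of $\Pi$ one at a time and emitting, for each $r$, the appropriate transformation from cases (a)--(c). For a single rule $r$ the decision between the three cases only needs two membership tests: ``does any atom of $B^\pm(r)$ occur in $\omitt$?'' and ``does $H(r) \in \omitt \cup \{\bot\}$ hold?''. Each such test can be performed by a double loop (pointer into the body/head of $r$, pointer into $\omitt$) using only logarithmic space counters in the input size; outputting the rule under case (a), (b), or (c) likewise requires only constant additional memory beyond these pointers, since choice rules and projected bodies can be written out token-by-token while scanning. Standard composition of log-space transducers gives the overall $\mi{L}$ bound; the only minor care is that choice rules $\{\alpha\} \leftarrow B'$ in case (b) can be written directly as a single syntactic object (per the extension defined in the preliminaries), so no auxiliary atoms need to be invented during the transformation.

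For part (ii), I would combine (i) with the fact that answer set checking for normal logic programs is in $\mathbf{P}$. Concretely, I first invoke (i) to obtain $\mi{omit}(\Pi,\omitt)$ in polynomial time. Choice rules $\{\alpha\} \leftarrow B$ are, by the definition fixed earlier in Section~2, shorthand for the two normal rules $\alpha \leftarrow B, \mi{not}\ \ol{\alpha}$ and $\ol{\alpha} \leftarrow B, \mi{not}\ \alpha$ with a fresh atom $\ol{\alpha}$; expanding them yields a normal program $\Pi^\star$ whose size is still polynomial in $|\Pi|$. Given $I$, I would extend it to $I^\star = I \cup \{\ol{\alpha} \mid \alpha \notin I, \{\alpha\}\leftarrow B \text{ occurs in case (b)}\}$ and then check in polynomial time whether $I^\star$ is the minimal model of the reduct $(\Pi^\star)^{I^\star}$: one first computes the reduct in polynomial time, and then applies the standard polynomial procedure that computes the least model of a positive (i.e., negation-free) program bottom-up and tests equality with $I^\star$. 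Projecting the auxiliaries away shows that $I \in AS(\mi{omit}(\Pi,\omitt))$ iff this check succeeds, so the overall test is in $\mathbf{P}$.

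The only mildly delicate point, which I would flag but not belabour, is the correct handling of the choice construct: one must make sure that when case (b) replaces $r$ by $\{H(r)\}\leftarrow B(r)\setminus A$, the auxiliary atom $\ol{H(r)}$ is unique to the rule's head (shared across all choice rules with that head), so that $I^\star$ is determined by $I$ and not by some guess. Beyond this, no genuine obstacle arises; in particular the log-space bound in (i) is straightforward because the transformation is strictly local to each rule and requires no global bookkeeping.
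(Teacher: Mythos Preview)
Your approach is correct and coincides with the paper's own justification, which is just a two-sentence sketch: item~(i) follows from a rule-by-rule linear scan, and item~(ii) reduces to the well-known polynomial-time answer set check for normal programs. One small technical slip worth fixing: your extension $I^\star$ should include $\ol{\alpha}$ only when $\alpha\notin I$ \emph{and} some body $B$ of a case-(b) choice rule with head $\alpha$ satisfies $I\models B$; as written, you add $\ol{\alpha}$ even when no such body fires, which would make $I^\star$ non-minimal and cause the check to reject a legitimate $I$. With that correction (still clearly polynomial), the argument goes through.
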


As for item (i), the abstract program $\mi{omit}(\Pi,\omitt)$ is
    easily constructed in a linear scan of
the rules in $\Pi$; item (ii) reduces then to answer set
checking of an ordinary normal logic program, which is well-known to
be feasible in polynomial time (and in fact \Pol-complete).

However, tractability of abstract answer set checking is lost if we
ask in addition for concreteness or  spuriousness.

\begin{prop}
\label{prop:check-spurious}
Given a program $\Pi$, a set $\omitt$ of atoms, and an interpretation
$I$, deciding whether $I|_{\ol{\omitt}}$, is a concrete (resp.,
spurious) abstract answer set of $\mi{omit}(\Pi,\omitt)$ is
\NP-complete (resp.\ \coNP-complete).
\end{prop}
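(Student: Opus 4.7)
The plan is to leverage Proposition~\ref{prop:query-program}, which reformulates spuriousness of $\hat{I} = I|_{\ol{\omitt}}$ as unsatisfiability of $\Pi \cup Q_{\hat{I}}^{\ol{\omitt}}$, and then to ride on the standard complexity of answer set existence for normal logic programs, which is \NP-complete (so unsatisfiability is \coNP-complete). In both directions of the proposition the ``is an abstract answer set'' precondition is tractable by Lemma~\ref{lem:do-abstract}, so the interesting part is the concrete/spurious distinction.

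For the upper bounds, I would proceed as follows. First, verify in polynomial time that $\hat{I}\in AS(\mi{omit}(\Pi,\omitt))$ using Lemma~\ref{lem:do-abstract}. For concreteness, an \NP\ algorithm then guesses a candidate $J\subseteq\Lits$ with $J|_{\ol{\omitt}} = \hat{I}$ and verifies in polynomial time that $J\in AS(\Pi)$; this is equivalent to guessing a satisfying answer set of $\Pi \cup Q_{\hat{I}}^{\ol{\omitt}}$. Spuriousness is the complementary condition that no such $J$ exists, so it lies in \coNP.

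For the lower bounds, I would give a simple reduction from the \NP-complete problem of deciding whether a normal logic program has an answer set. Given a normal program $P$ over atoms $\mathcal{B}$, set $\Pi := P$, $\omitt := \mathcal{B}$ (extended with $\bot$ as in Proposition~\ref{prop:gen}-(i) so that constraints are treated uniformly), and $I := \emptyset$. By the construction rules for $\mi{omit}$, every head of a proper rule is in $\omitt$ and every constraint body lies entirely inside $\omitt$; hence case (c) deletes all rules, so $\mi{omit}(\Pi,\omitt) = \emptyset$ whose unique answer set is $\hat{I} = \emptyset = I|_{\ol{\omitt}}$. Since $\ol{\omitt} = \emptyset$, the condition $J|_{\ol{\omitt}} = \hat{I}$ holds vacuously for every $J \subseteq \mathcal{B}$. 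Therefore $\hat{I}$ is concrete iff $AS(P)\neq\emptyset$, yielding \NP-hardness, and $\hat{I}$ is spurious iff $AS(P)=\emptyset$, yielding \coNP-hardness.

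The main (and only mildly technical) point to check carefully is that the ``omit everything'' construction really erases all rules of $P$, including its constraints; this follows directly from the definition of $\mi{omit}(r,A)$ in case~(c), so no real obstacle arises. The rest is bookkeeping: combining the polynomial-time abstract-answer-set check with the guess-and-verify routine for $\Pi$ gives the matching upper bounds, completing the characterization.
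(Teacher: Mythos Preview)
Your proposal is correct and follows essentially the same approach as the paper: the upper bounds are obtained by a guess-and-check of an extension $J$ (with the abstract answer-set membership handled via Lemma~\ref{lem:do-abstract}), and the hardness is obtained by omitting all atoms and invoking the \NP-completeness of answer-set existence---the paper phrases this last step as ``immediate from Proposition~\ref{prop:gen}'' (item~(ii) there is exactly your reduction), whereas you spell it out directly. The only minor difference is that the paper folds the abstract answer-set check into the nondeterministic certificate rather than performing it as a separate polynomial-time precheck, but this is cosmetic.
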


\begin{proof}
Indeed, we can guess an interpretation $J$ of $\Pi$ such that (a)
$J_{\ol{\omitt}} = I_{\ol{\omitt}}$, (b) $J_{\ol{\omitt}} \in
AS(\mi{omit}(\Pi,\omitt))$, and (c) $J \in AS(\Pi)$. By
Lemma~\ref{lem:do-abstract}, (b) and (c) are feasible in polynomial
time, and thus deciding whether $I_{\ol{\omitt}}$ is a concrete
abstract answer set is in \NP. Similarly,
$I_{\ol{\omitt}}$ is not a spurious abstract answer set iff for some $J$
condition (a) holds and either (b) fails or (c) holds; this implies
\coNP membership.

The \NP-hardness (resp.\ \coNP-hardness) is immediate from Proposition~\ref{prop:gen} and
the \NP-comp\-leteness of deciding answer set existence. 
\end{proof}

Thus, determining whether a particular abstract answer set causes a
loss of information is intractable in general. If we do not have a
candidate answer set at hand, but want to know 
whether the abstraction causes a loss of information with respect to
all answer sets of the original program, then the complexity increases.
               
\begin{thm}
\label{thm-complex-spurious}
Given a program $\Pi$ and a set $\omitt$ of atoms, deciding
whether some $\hat{I} \in AS(\mi{omit}(\Pi,\omitt))$ exists that
is spurious is $\Sigma^p_2$-complete.
\end{thm}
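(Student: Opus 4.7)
My plan is to prove membership and hardness separately.

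For membership in $\Sigma^p_2 = \NP^{\NP}$, I would nondeterministically guess an interpretation $\hat I \subseteq \ol{A}$, verify in polynomial time via Lemma~\ref{lem:do-abstract}(ii) that $\hat I \in AS(\mi{omit}(\Pi,A))$, and then make a single call to a \coNP{} oracle to check that $\hat I$ is spurious; the oracle query is in \coNP{} by Proposition~\ref{prop:check-spurious}.

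For $\Sigma^p_2$-hardness, I would reduce from validity of the QBF $\exists X\,\forall Y\,\phi(X,Y)$, with $\phi = c_1 \wedge \cdots \wedge c_m$ a CNF over $X \cup Y$. I would build $\Pi$ containing: choice rules $\{x_i\}$ for every $x_i \in X$ and $\{y_j\}$ for every $y_j \in Y$ (using the expansion with fresh complements $\ol{x_i},\ol{y_j}$); for each clause $c_k$ and each literal $\ell_{k,p}$ of $c_k$ a rule $\mathit{sat}_k \leftarrow \hat\ell_{k,p}$, where $\hat\ell_{k,p}$ is $x_i$ if $\ell_{k,p}=x_i$ and $\ol{x_i}$ if $\ell_{k,p}=\neg x_i$ (analogously for $Y$); a collecting rule $\mathit{sat} \leftarrow \mathit{sat}_1,\ldots,\mathit{sat}_m$; and the constraint $\bot \leftarrow \mathit{sat}$. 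Standard reasoning on the GL-reduct then shows $AS(\Pi)$ corresponds precisely to the assignments $(\sigma_X,\sigma_Y)$ that falsify $\phi$. I would set $A = \{y_j,\ol{y_j} \mid j\} \cup \{\mathit{sat}_k \mid k\} \cup \{\mathit{sat}\}$.

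Following the abstraction definition, every $Y$-choice rule and every $\mathit{sat}_k$-defining rule has a head in $A$ and thus falls under case~(c); the collecting rule $\mathit{sat}\leftarrow\cdots$ and the constraint $\bot\leftarrow\mathit{sat}$ likewise fall under case~(c) (head in $A$, resp.\ body intersecting $A$). The $X$-choice rules survive via case~(a). Consequently $\mi{omit}(\Pi,A)$ is equivalent to just the $X$-choices and has exactly $2^{|X|}$ abstract answer sets, one per assignment $\sigma_X$. Such a $\hat I$ is concrete iff $\Pi$ admits an answer set extending it, iff some $\sigma_Y$ falsifies $\phi(\sigma_X,\sigma_Y)$; hence $\hat I$ is spurious iff $\forall Y\,\phi(\sigma_X,Y)$. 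Thus some spurious abstract answer set exists iff $\exists X\,\forall Y\,\phi(X,Y)$ is valid, giving $\Sigma^p_2$-hardness.

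The main obstacle I expect is the bookkeeping of the abstraction: carefully confirming that each auxiliary rule and the $\neg\phi$-encoding constraint is removed by case~(c) so the abstract program cleanly collapses to the $X$-choices, and that no unintended rule dependencies among $\mathit{sat}_k$, $\mathit{sat}$ alter the correspondence between $AS(\Pi)$ and falsifying assignments. Once these routine checks go through, both directions of the equivalence follow directly from the construction.
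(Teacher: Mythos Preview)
Your proposal is correct. The membership argument matches the paper's exactly.

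For hardness, the paper also reduces from a $\exists X\forall Y$ QBF but takes a different construction: it uses a DNF $E(X,Y)=\bigvee_i D_i$, guesses $Y$ via rules $y_j\leftarrow \naf \ol{y_j},\naf sat$ and $\ol{y_j}\leftarrow \naf y_j,\naf sat$, derives $sat$ directly from each disjunct, and sets $A=Y\cup\ol{Y}\cup\{sat\}$. The self-blocking $\naf sat$ in the $Y$-guess bodies is what prevents $sat$ from being in any answer set, and this requires the side condition that every $D_i$ mentions some $Y$-atom. Your CNF-based construction avoids this subtlety entirely: by routing satisfaction through per-clause atoms $sat_k$ and killing answer sets with the plain constraint $\bot\leftarrow sat$, the characterisation of $AS(\Pi)$ as the falsifying assignments is immediate, and no structural assumption on $\phi$ is needed. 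The abstraction analysis is also simpler in your version, since every rule outside the $X$-choices has its head in $A$ (or is a constraint with body in $A$) and is uniformly erased by case~(c). Both routes yield the same correspondence ``$\hat I_{\sigma_X}$ spurious iff $\forall Y\,\phi(\sigma_X,Y)$,'' but yours gets there with less machinery; the paper's construction, on the other hand, stays within a smaller atom set and avoids the auxiliary $sat_k$'s.
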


\begin{proof}
As for membership in $\Sigma^p_2$, some answer set $\hat{I} \in
\mi{omit}(\Pi,\omitt)$ that is spurious can be guessed and checked by
Proposition~\ref{prop:check-spurious} with a \coNP oracle in polynomial time. The $\Sigma^p_2$-hardness is shown
by a reduction from evaluating a QBF $\exists X\forall
Y\,E(X,Y)$, where $E(X,Y) = \bigvee_{i=1}^k D_i$ is a DNF
of conjunctions $D_i = l_{i_1} \land\cdots\land l_{i_{n_i}}$
over atoms $X = \{ x_1,\ldots, x_n \}$ and $Y=\{ y_1,\ldots, y_m\}$
where without loss of generality in each $D_i$ some atom from $Y$ occurs.

We construct a program $\Pi$ as follows;
\begin{align}
  x_i \gets & \naf \ol{x_i}. \label{guess-xi-1} \\
  \ol{x_i} \gets & \naf x_i. \quad \text{for all } x_i\in X \label{guess-xi-2} \\
  y_j \gets & \naf \ol{y_j}, \naf sat.  \label{guess-yj-1} \\
  \ol{y_j} \gets & \naf y_j, \naf sat.  \quad \text{for all } y_j\in Y \label{guess-yj-2}\\
  sat \gets & l_{i_1}^*,\ldots l_{i_{n_i}}^*. \label{sat} 
\end{align}
where $\ol{X} = \{ \ol{x}_1, \ldots \ol{x_n} \}$ and 
$\ol{Y} = \{ \ol{y}_1, \ldots \ol{y_m} \}$ are sets of fresh atoms and
for each atom $a \in X\cup Y$, we let $a* = a$ and $(\neg a)^* = \ol{a}$.
Furthermore, we set $\omitt = Y\cup \ol{Y} \cup \{sat\}$.

Intuitively, the answer sets $\hat{I}$ of
$\mi{omit}(\Pi,\omitt)$, which consists of all rules 
(\ref{guess-xi-1})-(\ref{guess-xi-2}), correspond 1-1 to the truth
assignments $\sigma$ of $X$. A particular such 
$\hat{I}=\hat{I}_\sigma = \{ x_i \in X \mid  \sigma(x_i) =
\mi{true}\}$ $\cup  \{ \ol{x_i} \mid x_i \in X, \sigma(x_i) =
\mi{false}\}$ is spurious, iff it cannot be extended after putting
back all omitted atoms to an answer set $J$ of $\Pi$. Any such $J$
must not include $sat$, as otherwise the rules (\ref{guess-yj-1}) and
(\ref{guess-yj-2}) would not be applicable w.r.t.\ $J$, which means that
all $y_j$ and $\ol{Y_j}$ would be false in $J$; but then $sat$ could
not be derived from $\Pi$ and $J$, as no rule (\ref{sat})
is applicable w.r.t.\ $J$ by the assumption on the $D_i$.

Now if $\hat{I}_\sigma$ is not spurious, then some answer set $J$
of $\Pi$ as described exists. As $sat \notin J$, the rules (\ref{guess-yj-1}) and
(\ref{guess-yj-2}) imply that exactly one of $y_j$ and $\ol{y}_j$  is
in $J$, for each $y_j$, and thus $J$ induces an assignment $\mu$  to
$Y$. As no rule (\ref{sat}) is applicable w.r.t.\ $J$, it follows that 
$E(\sigma(X),\mu(Y))$ evaluates to false, and thus $\forall Y E(\sigma(X),Y)$ does not evaluate to true. 
Conversely, if $\forall Y E(\sigma(X),Y)$ does not evaluate to true,
then some answer set $J$ of $\Pi$ that coincides with $\hat{I}_\sigma$ on
$X\cup \ol{X}$ exists, and hence $\hat{I}_\sigma$ is not spurious.
In conclusion, it follows that $\mi{omit}(\Pi,\omitt)$ has some
spurious answer set iff  $\exists X\forall Y E(X,Y)$ evaluates to true.
\end{proof}
An immediate consequence of the previous theorem is that 
checking whether an abstraction $\mi{omit}(\Pi,\omitt)$ is faithful has
complementary complexity.
\begin{cor}
\label{cor:recognize-faithful}
Given a program $\Pi$ and a set $\omitt\subseteq \Lits$ of atoms,
deciding whether $\mi{omit}(\Pi,\omitt)$ is
faithful
is $\Pi^p_2$-complete.
\end{cor}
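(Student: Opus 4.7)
The plan is to derive the corollary directly from Theorem~\ref{thm-complex-spurious} by complementation. By definition, $\mi{omit}(\Pi,\omitt)$ is faithful iff it has no spurious answer sets, so $\mi{omit}(\Pi,\omitt)$ is \emph{not} faithful iff there exists some $\hat{I}\in AS(\mi{omit}(\Pi,\omitt))$ that is spurious. This is exactly the decision problem shown $\SigmaP{2}$-complete in Theorem~\ref{thm-complex-spurious}. Since $\mathbf{co}\SigmaP{2}=\Pi^p_2$, faithfulness checking is $\Pi^p_2$-complete.

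For completeness I would also spell out the direct algorithmic argument. For membership in $\Pi^p_2$, the complementary problem (non-faithfulness) is solved by guessing an interpretation $\hat{I}\subseteq \ol{\omitt}$ and checking that (a) $\hat{I}\in AS(\mi{omit}(\Pi,\omitt))$, which by Lemma~\ref{lem:do-abstract}(ii) is polynomial, and (b) $\hat{I}$ is spurious, which by Proposition~\ref{prop:check-spurious} is in \coNP. A polynomial-time guess followed by a \coNP\ oracle call places non-faithfulness in $\SigmaP{2}$, so faithfulness is in $\Pi^p_2$.

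For hardness, the reduction in the proof of Theorem~\ref{thm-complex-spurious} encodes a $\SigmaP{2}$-hard QBF $\exists X\forall Y\,E(X,Y)$ as the existence of a spurious abstract answer set of a constructed pair $(\Pi,\omitt)$. The same construction, read for the complementary QBF $\forall X\exists Y\,\neg E(X,Y)$, produces an instance whose yes-answers coincide with faithfulness of $\mi{omit}(\Pi,\omitt)$, giving $\Pi^p_2$-hardness.

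The only subtle point, and essentially the main thing to verify, is that the definition of faithful correctly treats the vacuous case $AS(\mi{omit}(\Pi,\omitt))=\emptyset$ as faithful. This is consistent with Proposition~\ref{prop:unsatrefsafe} and with the framing of Theorem~\ref{thm-complex-spurious}, where ``some spurious answer set exists'' fails precisely when every abstract answer set (possibly none) is concrete; hence no separate case analysis is needed and the complementation argument goes through cleanly.
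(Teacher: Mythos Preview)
Your proposal is correct and follows exactly the paper's approach: the paper presents this corollary as an immediate consequence of Theorem~\ref{thm-complex-spurious} by complementation, without further elaboration. Your additional direct membership argument and the remark on the vacuous case are sound but go beyond what the paper provides.
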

We next consider the computation of put-back sets, which is needed for
the elimination of spurious answer sets. To describe the complexity, we 
use some complexity classes for search problems, which generalize
decision problems in that for a given input, some (possibly different
or none) output values (or solutions) might be computed. 
Specifically, $\FPNP$ consists of the search problems for which a solution can be computed in
polynomial time with an \NP oracle, and $\FPNPpar$ is analogous but under
the restriction that all oracle calls have to
be made at once in parallel. The class $\FPSigmaP{k}[log,wit]$, for $k\geq 1$,
contains all search problems that can be solved in polynomial time
with a witness oracle for $\Sigma^p_k$ \cite{buss-etal-1993}; a {\em
witness} oracle for $\Sigma^p_k$ returns in case of a yes-answer to an
instance a polynomial size witness string
that can be checked with a $\Sigma^p_{k-1}$
oracle in polynomial time. In particular, for $k=1$, i.e., for $\FPNP[log,wit]$, one can use a SAT oracle and the 
witness is a satisfying assignment to a given SAT instance, cf. \cite{DBLP:journals/ai/JanotaM16}.

While an arbitrary put-back set $PB \subseteq \omitt$ 
can be trivially obtained (just set $PB=\omitt$),
computing a minimal put-back set is more involved. Specifically, we have:

\begin{thm}
\label{thm-one}
Given a program $\Pi$, a set $\omitt$ of atoms, and a
spurious answer set $\hat{I}$ of $\mi{omit}(\Pi,\omitt)$, computing 
(i) some $\subseteq$-minimal put-back set $PB$ resp.\ (ii) some smallest size put-back set $PB$ for $\hat{I}$ is in
case (i) feasible in $\FPNP$ and $\FPNPpar$-hard resp.\ is in case (ii)  $\FPSigmaP{2}[log,wit]$-complete.  
\end{thm}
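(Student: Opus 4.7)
\medskip\noindent
\emph{Proof proposal.} The plan rests on first pinning down the complexity of the verification problem ``is $PB$ a put-back set for $\hat{I}$?''\ and then layering standard search-reduction machinery on top.  Observe that $PB \subseteq A$ fails to be a put-back set exactly when some $\hat{J} \in AS(\mi{omit}(\Pi, A\setminus PB))$ satisfies $\hat{J}|_{\overline{A}} = \hat{I}$.  Since by Lemma~\ref{lem:do-abstract} both constructing $\mi{omit}(\Pi, A\setminus PB)$ and checking the candidate $\hat{J}$ for being an answer set are polynomial, the ``not a put-back set'' problem is in \NP, hence being a put-back set is in \coNP.  This already yields the $\FPNP$ bound in part~(i): start with $PB := A$ and iterate through the atoms $a\in A$ in an arbitrary order, querying an \NP oracle whether $PB\setminus\{a\}$ is still a put-back set, and deleting $a$ from $PB$ whenever the answer is yes.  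After $|A|$ adaptive calls one obtains a $\subseteq$-minimal put-back set, which is the standard greedy-minimization pattern.

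For $\FPNPpar$-hardness in (i), the plan is to reduce from computing the whole satisfiability vector $(\sigma(\varphi_1),\ldots,\sigma(\varphi_k))$ of $k$ independent SAT instances, which is $\FPNPpar$-complete.  The idea is to extend the construction from the proof of Theorem~\ref{thm-complex-spurious}: for each $\varphi_i$ introduce a dedicated ``switch'' atom $a_i\in A$ together with auxiliary rules that make $a_i$ provably dispensable from any put-back set for a designated spurious $\hat{I}$ iff $\varphi_i$ is (un)satisfiable; the independence of the $\varphi_i$'s across disjoint variable copies guarantees that the greedy choice on $a_i$ is not entangled with the choices on $a_j$ for $j\neq i$, so the $\subseteq$-minimal put-back set reveals all $k$ bits simultaneously.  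The main obstacle here is setting the rules up so that the atoms $a_i$ are pairwise independent in the minimality sense; I would obtain this by using fresh copies of the QBF-style gadget of Theorem~\ref{thm-complex-spurious} (with $Y_i$ replaced by trivial quantifier blocks) glued only via the spurious answer set $\hat{I}$.

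For part~(ii), membership in $\FPSigmaP{2}[log,wit]$ follows by binary search over the cardinality bound.  The predicate ``there exists a put-back set of size $\leq k$'' is in $\Sigma^p_2$: guess $PB$ of size $\leq k$ and verify with a \coNP check that it is a put-back set, and a witness oracle returns such a $PB$ on a yes-answer.  Binary search on $k\in\{0,\ldots,|A|\}$ determines the minimum cardinality $k^\ast$ within $O(\log|A|)$ witness queries, the last of which produces a witness of size exactly $k^\ast$.  For the matching hardness, the plan is a reduction from the canonical $\FPSigmaP{2}[log,wit]$-complete problem of computing a minimum-size $X\subseteq \{x_1,\ldots,x_n\}$ satisfying a given QBF $\exists X\forall Y\,\psi(X,Y)$: adapting the construction of Theorem~\ref{thm-complex-spurious} so that the $x_i$ variables become the omitted atoms $A$ and the QBF is encoded into $\Pi$ together with a designated spurious $\hat{I}$, the minimum-cardinality put-back set corresponds precisely to a minimum-size $X$-assignment certifying the outer $\exists$.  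The delicate point will again be ensuring a tight correspondence between cardinality on both sides (avoiding dummy put-backs), which I would enforce by introducing a small gadget that makes any superfluous atom inclusion strictly detectable in the spuriousness check.
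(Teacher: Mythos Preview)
Your proposal is essentially correct and aligns with the paper's approach. The membership arguments for both (i) and (ii) are identical to the paper's: greedy elimination with an \NP oracle for (i), and binary search with a $\Sigma^p_2$ witness oracle for (ii). For the hardness parts you identify the correct reduction sources---a vector of independent \NP instances for (i), and a smallest-size $X$-witness for an $\exists X\forall Y$ QBF for (ii)---which match the paper exactly.

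Two remarks on the distance between your sketch and the paper's actual constructions. For (i), the paper does not reuse the Theorem~\ref{thm-complex-spurious} gadget; instead it builds, for each program $\Pi_i$, a small wrapper with fresh atoms $a_i,b_i$ and rules of the form $y \gets x,\naf x$ (for $x,y\in X_i$) whose abstraction generates the spurious answer set $\{a_1,\ldots,a_n\}$ and forces every $x\in X_i$ into any put-back set, leaving $b_i$ as the single discretionary atom whose membership in the minimal $PB$ encodes the satisfiability of $\Pi_i$. Your plan of recycling the QBF gadget with trivial $Y$ blocks would still need analogous ``all of $X_i$ is forced'' machinery to make the minimal put-back set \emph{unique}, which that gadget does not provide out of the box. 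For (ii), the paper's reduction is substantially more intricate than you anticipate: it introduces auxiliary copies $z_i,\ol{z_i}$ of $x_i,\ol{x_i}$ to decouple the facts $x_i.,\ol{x_i}.$ (which fix the base size of any $PB$) from the evaluation of $E$, and then adds further penalty atoms $c_i$ together with a rule scheme that makes putting back both $x_i$ and $\ol{x_i}$ strictly costlier than committing to one polarity. This is exactly the ``delicate point'' you flag, but it is realized through several layers of auxiliary atoms rather than a single small gadget.
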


Note that few $\FPSigmaP{2}[log,wit]$-complete problems are known.  
The notions of hardness and completeness are here with respect to
a natural polynomial-time reduction between two problems $P_1$  and
$P_2$: there are polynomial-time functions $f_1$ and $f_2$
such that (i) for every instance $x_1$ of $P_1$, $x_2=f_1(x_1)$ is an
instance of $P_2$, such that $x_2$ has solutions iff $x_1$ has, and
(ii) from every solution \rev{$s_1$}{$s_2$} of $x_2$, some solution $s_1 =
f_2(x_1,s_2)$ is obtainable\rev{.}{; note that $x_1$ is here an input
parameter to have access to the original input.} 

\revisedversion{
Here we give a proof sketch for Theorem~\ref{thm-one}. The detailed proof is moved to \ref{app:proofs} for readability of the paper.
\begin{proof}[Proof sketch for Theorem~\ref{thm-one}] As for (i), we can compute such a set $S$ by an
elimination procedure: starting with  $\omitt'= \emptyset$, we repeatedly pick
some atom $\alpha \in \omitt\setminus \omitt'$ and test (+) whether for 
$\omitt'' = \omitt' \cup \{ \alpha\}$, the program $omit(\Pi,\newrev{\omitt''}{\omitt\setminus\omitt''})$ has no answer
set $I''$ such that $I''|_{\ol{\omitt}} = I$; if yes, we set $\omitt' := \omitt''$ and
make the next pick from $\omitt'$. 
Upon termination, $S=\newrev{\omitt\setminus \omitt'}{\omitt'}$ is a minimal put-back set. The test (+) can be done
with an \NP oracle. The hardness for $\FPNPpar$  is shown by a
reduction from computing, given  programs $P_1,\ldots,P_n$, 
the answers $q_1,\ldots,q_n$ to whether $P_i$ has some answer set.

The membership in case (ii) can be established by a binary search over
put-back sets of bounded size using a $\Sigma^p_2$ witness oracle.  The
$\FPSigmaP{2}[log,wit]$ hardness is shown by a reduction from the
following problem: given a QBF $\Phi=\exists X \forall Y E(X,Y)$,
compute a smallest size assignment $\sigma$ to $X$ such that $\forall
Y E(\sigma(X),Y)$ evaluates to true, knowing that some $\sigma$ exists,
where the size of $\sigma$ is
the number of atoms set to true. The core idea is similar to the one
in the proof of Theorem~\ref{thm-complex-spurious}, but the construction is much
more involved and needs significant modifications and extensions.
\end{proof}
}
                      
We remark that the problem is solvable in polynomial time, if the                       
smallest put-back set $PB$ has a size bounded by a constant $k$. Indeed, in this
case we can explore all $PB$ of that size, and find 
some answer set $\widehat{I'}$ of $\mi{omit}(\Pi,A\setminus PB)$ 
that coincide with $I$ on $\ol{A}$ in
polynomial time.

We finally consider the problem of computing some refinement-safe abstraction
that does not remove a given set $\omitt_0$ of atoms.

\begin{thm}
\label{thm:refsafe}
Given a set $\omitt_0 \subseteq \Lits$, computing (i) some $\subseteq$-maximal
set $\omitt\subseteq \Lits\setminus\omitt_0$ resp.\ (ii) some $\omitt
\subseteq \Lits\setminus \omitt_0$  of largest size such that $\mi{omit}(\Pi,\omitt)$ is a
refinement-safe faithful
abstraction is in case (i) in 
$\FPNP$ and  $\FPNPpar$-hard and in case (ii)
$\FPSigmaP{2}[log,wit]$-complete, with
$\FPSigmaP{2}[log,wit]$-hardness even if $\omitt_0=\emptyset$.
\end{thm}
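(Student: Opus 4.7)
The plan is to follow the template of Theorem~\ref{thm-one}, treating Theorem~\ref{thm:refsafe} as its dual: we seek a maximal (resp., largest-size) $\omitt$ whose abstraction is refinement-safe faithful, rather than a minimal (resp., smallest-size) put-back set. The pivotal observation is that refinement-safe faithfulness is $\subseteq$-monotone in $\omitt$: any subset of a refinement-safe faithful $\omitt$ is itself refinement-safe faithful, which is immediate from the definition. This licenses greedy maximization for~(i) and binary search on $|\omitt|$ for~(ii); the trivial starting point $\omitt:=\emptyset$ is always refinement-safe faithful since $\mi{omit}(\Pi,\emptyset)=\Pi$ by Proposition~\ref{prop:gen}(i), and answer sets of $\Pi$ itself cannot be spurious.

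For case~(i), membership in \FPNP, I initialize $\omitt:=\emptyset$ and iterate over atoms $\alpha\in\Lits\setminus\omitt_0$ in some fixed order, invoking the NP oracle to test whether $\omitt\cup\{\alpha\}$ is still refinement-safe faithful; if yes, I commit the addition, otherwise $\alpha$ is permanently discarded (by monotonicity it cannot belong to any refinement-safe faithful extension of the current $\omitt$). The final $\omitt$ is $\subseteq$-maximal. The delicate point, which I expect to be the main obstacle, is implementing the membership test within a polynomial number of NP-oracle queries, since refinement-safe faithfulness sits in $\Pi^p_2$ a~priori. The idea is to exploit the invariant that the current $\omitt$ is already refinement-safe faithful, so the only candidate violations in $\omitt\cup\{\alpha\}$ are subsets of the form $S\cup\{\alpha\}$ with $S\subseteq\omitt$; combining Propositions~\ref{prop:convex} and~\ref{prop:query-program}, the existence of a spurious witness $(S,\hat{I})$ is reformulated as the satisfiability of an auxiliary program encoding ``$\hat{I}\in AS(\mi{omit}(\Pi,S\cup\{\alpha\}))$ and $\Pi\cup Q_{\hat{I}}^{\ol{S\cup\{\alpha\}}}$ is unsatisfiable,'' which is amenable to NP-oracle resolution. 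Hardness for $\FPNPpar$ follows a construction analogous to Theorem~\ref{thm-one}(i): $\Pi$ consists of $n$ independent blocks, each encoding a SAT instance $F_i$ via a fresh indicator atom $\beta_i$, arranged so that $\beta_i$ is admissible in a maximal refinement-safe faithful $\omitt$ iff $F_i$ is unsatisfiable.

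For case~(ii), membership in $\FPSigmaP{2}[log,wit]$ is obtained by binary search on $k\in\{0,\ldots,|\Lits\setminus\omitt_0|\}$, asking at each step the $\Sigma^p_2$ witness oracle whether some refinement-safe faithful $\omitt\subseteq\Lits\setminus\omitt_0$ of size at least~$k$ exists and retrieving one as a witness; $\mathcal{O}(\log|\Lits|)$ queries suffice, and the witness returned at the largest feasible~$k$ is the output. For $\FPSigmaP{2}[log,wit]$-hardness, and in particular already for $\omitt_0=\emptyset$, I adapt the reduction of Theorem~\ref{thm-one}(ii) from the problem of computing a smallest-size satisfying assignment for a promised-satisfiable QBF $\exists X\,\forall Y\,E(X,Y)$: atoms encoding $X$-values are arranged so that the $X$-atoms excluded from $\omitt$ indicate the ``true'' positions of an assignment $\sigma$, while refinement-safe faithfulness of $\omitt$ is engineered to capture validity of $\forall Y\,E(\sigma,Y)$ through the spuriousness/faithfulness correspondence; a largest-size $\omitt$ then yields a smallest-size satisfying $\sigma$, completing the reduction.
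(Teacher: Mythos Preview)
Your overall plan mirrors the paper's: greedy extension for (i), binary search with a $\Sigma^p_2$ witness oracle for (ii), and duality with Theorem~\ref{thm-one} for both hardness parts. The monotonicity observation is correct and is exactly what justifies the greedy discard.

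There is, however, a genuine gap in your argument for membership in~(i). You correctly flag that the oracle test is ``the main obstacle'' and that the invariant (the current $\omitt$ is already refinement-safe faithful) must be exploited, but the condition you actually write down---``$\hat{I}\in AS(\mi{omit}(\Pi,S\cup\{\alpha\}))$ and $\Pi\cup Q_{\hat{I}}^{\ol{S\cup\{\alpha\}}}$ is unsatisfiable''---still tests spuriousness against the \emph{original} program $\Pi$, which is a \coNP\ condition. Guessing $(S,\hat{I})$ and verifying that pair of conditions is a $\Sigma^p_2$ computation, not an \NP\ one, and neither Proposition~\ref{prop:convex} nor Proposition~\ref{prop:query-program} collapses it. The missing idea, which the paper uses, is that faithfulness of $\mi{omit}(\Pi,S)$ lets you replace $\Pi$ by $\mi{omit}(\Pi,S)$ in the spuriousness check: $\hat{I}$ is spurious for $\mi{omit}(\Pi,S\cup\{\alpha\})$ iff neither $\hat{I}$ nor $\hat{I}\cup\{\alpha\}$ is an answer set of $\mi{omit}(\Pi,S)$. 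Both of the latter are polynomial-time tests, so the oracle call ``guess $S\subseteq\omitt$ and $\hat{I}$; check $\hat{I}\in AS(\mi{omit}(\Pi,S\cup\{\alpha\}))$, $\hat{I}\notin AS(\mi{omit}(\Pi,S))$, and $\hat{I}\cup\{\alpha\}\notin AS(\mi{omit}(\Pi,S))$'' is genuinely in \NP.

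For (ii) hardness your plan works but is more effort than needed. The paper simply observes that in the construction already built for Theorem~\ref{thm-one}(ii), every put-back set $PB$ for the spurious $\hat{I}=\emptyset$ at $\omitt=\Lits$ makes $\mi{omit}(\Pi,\Lits\setminus PB)$ unsatisfiable, hence refinement-safe faithful by Proposition~\ref{prop:unsatrefsafe}; smallest $PB$ then corresponds to largest $\omitt'=\Lits\setminus PB$, and the reduction transfers without modification (in particular with $\omitt_0=\emptyset$). Your sketch of the (i) hardness construction is essentially the paper's, though note that the paper uses $\omitt_0$ nontrivially there to force the only omittable atoms to be the indicators $b_i$.
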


\begin{proof}
(i) One sets  $\omitt := \emptyset$  and $S := \Lits \setminus
  \omitt_0$ initially and then picks an atom $\alpha$
from $S$ and sets $S := S \setminus \{\alpha\}$. One tests whether (*) omitting $\omitt' \cup \{ \alpha\}$,
for every subset $\omitt'\subseteq \omitt$, is a faithful abstraction; if so,
then one sets $\omitt \,{:=}\, \omitt \cup \{\alpha\}$. Then a next atom $\alpha$ is picked from $S$ etc.
When this process terminates, we have a largest set $\omitt$ such that omitting
$\omitt$ from $\Pi$ is a faithful abstraction.  Indeed, by
construction the final set $A$ fulfills that for each $A'\subseteq A$,
$\mi{omit}(\Pi,A')$  is faithful, and thus $A$ is refinement-safe;
furthermore $A$ is maximal: if a larger set $A' \supset A$ would
exist, then at the point when $\alpha\in A'\setminus A$ was considered
in constructing $A$ the test (*) would not have failed and $\alpha\in A$
would hold.

Notably, (*) can be tested with an \NP oracle: the conditions fails iff for
some $\omitt'$, the program $\mi{omit}(\Pi,\omitt'\cup \{\alpha\})$ has a spurious answer set $\hat{I}$. In principle,
the spurious check for $\hat{I}$ is difficult (a \coNP-complete problem, by
our results), but we can take advantage of knowing that $\mi{omit}(\Pi,\omitt')$ is faithful: so
we only need to check whether an extension of $\hat{I}$ is an answer set of
$\mi{omit}(\Pi,\omitt')$, and not of $\Pi$ itself; i.e., we only need to check 
$\hat{I}\notin AS(\mi{omit}(\Pi,\omitt'))$ and $\hat{I}\cup \{\alpha\} \notin AS(\mi{omit}(\Pi,\omitt'))$.

\rev{}{The $\FPNPpar$-hardness is shown with a variant of the
reduction provided in the $\FPNPpar$-hardness proof of
item (i) of Theorem~\ref{thm-one}. Similar as there, we construct a
program $\Pi'_i$ for $\Pi_i$, $1 \leq i \leq n$, that comprises the first four rules of
the program $\Pi'_i$ there, i.e., $\Pi'_i = \{ a_i \gets \naf b_i.,$
$b_i \gets \naf a_i.$, 
$\bot \gets \naf b_i.\} \cup \{H(r)\gets B(r), a_i. \mid r\in \Pi_i\}$.
Notice that $\Pi'_i$ has the single answer set $\{b_i\}$. If we omit
$b_i$, i.e.,
for $\omitt_0 = \Lits_i\setminus \{b_i\}$ where
$\Lits_i = X_i \cup \{ a_i, b_i \}$, we have that 
$\mi{omit}(\Pi'_i,\{b_i\}) = \{\, \{a_i\}. \,\} \cup \{H(r)\gets B(r), a_i. \,\mid\, r\in \Pi_i\}$ has the answer sets 
$\emptyset$ and $S \cup \{ a_i \}$, for each answer set $S$ of $\Pi_i$.
Consequently, $\mi{omit}(\Pi'_i,\{b_i\})$  is faithful iff $\Pi_i$ has
no answer set, and the (unique) maximal  $A_i \subseteq \Lits_i \setminus A_0$ such that 
$\mi{omit}(\Pi'_i,\newrev{A}{A_i})$ is a refinement-safe abstraction of $\newrev{\Pi}{\Pi_i'}$ 
is (a) $A_i = \Lits_i \setminus A_0  = \{ b_i\}$
if $\newrev{\Pi}{\Pi_i}$ is unsatisfiable and (b) $A_i = \emptyset$ otherwise. Furthermore,
since each $\mi{omit}(\Pi'_i,\{b_i\})$ admits answer sets, every maximal
\newrev{$A \subseteq \bigcup_i (\Lits_i \setminus \{b_i\})$}{$A \subseteq \{b_1,\dots,b_n\}$} such that 
$\mi{omit}(\Pi',A)$ is a refinement-safe abstraction of 
$\Pi' = \bigcup_i \Pi'_i$  consists for each $\Pi'_i$ of a maximal
$A_i \subseteq \Lits_i \setminus A_0$. Thus, $A$ is unique and $b_i
\in A$ iff $\Pi_i$ is unsatisfiable. This establishes $\FPNPpar$-hardness. 
}

(ii) The proof of $\FPSigmaP{2}[log,wit]$-completeness is similar as above
for Theorem~\ref{thm-one}. First, we note that to decide whether some refinement-safe
faithful $A \subseteq \Lits\setminus\omitt_0$ of size $|A|\geq k$
exists is in $\Sigma^p_2$: a nondeterministic variant of the algorithm
for item (i), that picks $\alpha$ always nondeterministically and
finally checks that $|A|\geq k$ holds establishes this. We then can
run a binary search, using a $\Sigma^p_2$ witness oracle, to find a
refinement-safe faithful abstraction $A$ of largest size. This shows
$\FPSigmaP{2}[log,wit]$-membership.

As for the $\FPSigmaP{2}[log,wit]$-hardness part,
in the proof of $\FPSigmaP{2}[log,wit]$-hardness for
Theorem~\ref{thm-one}-(ii) each put-back set 
$PB$ for the spurious answer set $\hat{I}=\emptyset$ for
$\omitt=\Lits$ satisfies $AS(\mi{omit}(\Pi$, $A\setminus PB)) =
\emptyset$, and is thus by Proposition~\ref{prop:unsatrefsafe}
refinement-safe faithful. As the smallest size $PB$ sets correspond to the
maximum size $\omitt' =  \Lits\setminus PB$ sets, the
$\FPSigmaP{2}[log,wit]$-hardness follows, even for $\omitt_0=\emptyset$. 
\end{proof}

\rev{}{Thus, computing some subset-maximal set of atoms whose omission
  does not create spurious answer sets is significantly easier (under
  widely adopted complexity hypotheses) than computing a 
  set of largest size with this property (i.e., retain only as few atoms as necessary)
  in the worst case. It also means that for largest size sets, a
  polynomial time algorithm with access to an NP oracle is unlikely to
  exist, and multiple calls to a solver for $\exists\forall$-QBFs are
  needed, but a sublinear (logarithmic) number of calls is sufficient
  if the QBF-calls return witness assignments (which often applies in
  practice). In contrast, whether some subset-maximal set can
  always be computed with a logarithmic (or more liberal, sublinear)
  number of NP oracle calls (with or without witness output) remains
  to be seen.}

We remark that without refinement safety, the problem  \rev{}{of
part (i) of Theorem~\ref{thm:refsafe}} is likely to be
more complex: deciding whether an abstraction is faithful is
$\Pi^p_2$-complete \rev{}{by
Corollary~\ref{cor:recognize-faithful}}, and this question is
trivially reducible 
\rev{to this problem.}{to computing some $\subseteq$-maximal
set $\omitt\subseteq \Lits\setminus\omitt_0$ such that 
$\mi{omit}(\Pi,\omitt)$ is a faithful abstraction (as  $\omitt =
\Lits\setminus\omitt_0$ iff $\mi{omit}(\Pi,\Lits\setminus\omitt_0)$ is
a faithful abstraction).}

\section{Refinement using Debugging}\label{sec:refine}
\label{sec:refinement}

Over-approximation of a program unavoidably introduces spurious answer
sets, which makes it necessary to have an abstraction refinement
method. We show how to employ an ASP debugging approach in order to
debug the inconsistency of the original program $\Pi$ caused by
checking a spurious answer set $\hat{I}$, referred to as \emph{inconsistency of $\Pi$ w.r.t. $\hat{I}$}.

We use a  meta-level debugging language \cite{brain2007debugging},
which is based on a tagging technique that allows one to control the
building of answer sets and to manipulate the evaluation of the
program. This is a useful technique for our need to shift the focus
from ``debugging the original program" to ``debugging the
inconsistency caused by the spurious answer set". We alter the
meta-program, in a way that hints for refining the abstraction can be obtained. Through debugging, some of the atoms are determined as \emph{badly omitted}, and by adding them back in the refinement the spurious answer set can be eliminated.

\subsection{Debugging Meta-Program} 

The meta-program constructed by \texttt{spock}
\cite{brain2007debugging} introduces \emph{tags} to control the 
building of answer sets. Given a program $\Pi$ over $\Lits$ and a set ${\cal N}$ of names for all rules in $\Pi$, it creates an enriched alphabet $\Lits^+$ obtained from $\Lits$ by adding atoms such as $\mi{ap}(n_r), \mi{bl}(n_r),\mi{ok}(n_r),\mi{ko}(n_r)$ where $n_r \in {\cal N}$ for each $r \in \Pi$. The atoms $\mi{ap}(n_r)$ and $\mi{bl}(n_r)$ express whether a rule $r$ is applicable or blocked, respectively, while $\mi{ok}(n_r),\mi{ko}(n_r)$ are used for manipulating the application of $r$. We omit the atoms $\mi{ok}(n_r)$, as they are not needed. The (altered) meta-program that is created is as follows.

\begin{defn}
Given $\Pi$, the program $\mathcal{T}_{meta}[\Pi]$ consists of the following rules for $r \in \Pi, \alpha_1 \in B^+(r), \alpha_2 \in B^-(r)$:
\beeq
\begin{split}
&H(r) \leftarrow \mi{ap}(n_r), \mi{not}\ \mi{ko}(n_r).\\
&\mi{ap}(n_r) \leftarrow \mi{B}(r).\\
&\mi{bl}(n_r) \leftarrow \mi{not}\ \alpha_1.\\
&\mi{bl}(n_r) \leftarrow \mi{not}\ \mi{not}\ \alpha_2.
\end{split}
\eeeq
\end{defn}
Here the last rules use double (nested) negation
$\mi{not}\ \mi{not}\ \alpha_2$ \cite{DBLP:journals/amai/LifschitzTT99}, which in the reduct w.r.t.\ an
interpretation $I$ is replaced by $\top$ if $I \models \alpha_2$, and
by $\bot$ otherwise.
The role of $\mi{ko}(r)$ is to avoid the application of the rule $H(r) \leftarrow \mi{ap}(r), \mi{not}\ \mi{ko}(r)$ if necessary. We use it for the rules that are changed due to some omitted atom in the body.

\revisedversion{
The following properties follow from  \cite{brain2007debugging}.

\begin{prop}[\cite{brain2007debugging}]
\label{prop:brain2007debuging}
For a program $\Pi$ over $\Lits$, and an answer set $X$ of $\mathcal{T}_{meta}[\Pi]$, the following holds for any $r \in \Pi$ and $a \in \Lits$:
\be
\item $\mi{ap}(n_r) \in X$ iff $r \in \Pi^X$ iff $\mi{bl}(n_r) \notin X$;
\item if $a\in X$, then $\mi{ap}(n_r) \in X$ for some $r \in \mi{def}(a,\Pi)$;
\item if $a \notin X$, then $\mi{bl}(n_r) \in X$ for all $r \in \mi{def}(a,\Pi)$.
\ee
\end{prop}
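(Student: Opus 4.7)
The plan is to prove the three properties directly by unfolding the definition of the meta-program and exploiting the structure of the rules, since the claim essentially expresses how answer sets of $\mathcal{T}_{meta}[\Pi]$ encode the usual satisfaction/support conditions of rules in $\Pi$. A preliminary observation I would make at the start is that no rule in $\mathcal{T}_{meta}[\Pi]$ has an atom $\mi{ko}(n_r)$ in its head, hence for every answer set $X$ of $\mathcal{T}_{meta}[\Pi]$ we have $\mi{ko}(n_r) \notin X$ for all $r \in \Pi$. This collapses the ``gating'' rule $H(r) \leftarrow \mi{ap}(n_r),\mi{not}\ \mi{ko}(n_r)$ to an effective $H(r) \leftarrow \mi{ap}(n_r)$ in the reduct $\mathcal{T}_{meta}[\Pi]^X$.

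For part (1), I would read both equivalences off the two rule groups defining $\mi{ap}(n_r)$ and $\mi{bl}(n_r)$. Since $\mi{ap}(n_r) \leftarrow B(r)$ is the unique definition of $\mi{ap}(n_r)$, minimality of $X$ on the reduct gives $\mi{ap}(n_r) \in X$ iff $X \models B(r)$, which is exactly the condition for $r \in \Pi^X$. For the second equivalence, the rules $\mi{bl}(n_r) \leftarrow \mi{not}\ \alpha_1$ and $\mi{bl}(n_r) \leftarrow \mi{not}\ \mi{not}\ \alpha_2$, taken over $\alpha_1 \in B^+(r)$ and $\alpha_2 \in B^-(r)$, jointly fire iff $X \not\models B(r)$, giving $\mi{bl}(n_r) \in X$ iff $X \not\models B(r)$; combining with the first equivalence yields $\mi{ap}(n_r) \in X$ iff $\mi{bl}(n_r) \notin X$.

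For part (2), I would use the justification property of answer sets: every atom of $X$ must be supported by a rule whose body is satisfied by $X$ in the reduct. Since the only rules in $\mathcal{T}_{meta}[\Pi]$ with head $a \in \Lits$ are of the form $a \leftarrow \mi{ap}(n_r),\mi{not}\ \mi{ko}(n_r)$ with $r \in \mi{def}(a,\Pi)$, the fact that $a \in X$ forces $\mi{ap}(n_r) \in X$ for at least one such $r$ (using that $\mi{ko}(n_r) \notin X$, from the preliminary observation). For part (3), I would proceed by contrapositive: if $a \notin X$ then every rule $a \leftarrow \mi{ap}(n_r),\mi{not}\ \mi{ko}(n_r)$ with $r \in \mi{def}(a,\Pi)$ must have an unsatisfied body in $X$, and since $\mi{ko}(n_r) \notin X$ this forces $\mi{ap}(n_r) \notin X$; then part (1) immediately yields $\mi{bl}(n_r) \in X$ for all $r \in \mi{def}(a,\Pi)$.

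I do not expect a serious obstacle here, since the argument is a direct unfolding of the meta-program's rules against the definition of answer sets and the fact that the result is already established in \cite{brain2007debugging}. The only subtle point to handle carefully is the treatment of the $\mi{not}\ \mi{not}\ \alpha_2$ construct, which in the reduct is replaced by $\top$ if $\alpha_2 \in X$ and by $\bot$ otherwise; this has to be spelled out when checking the ``iff'' for $\mi{bl}(n_r)$, but it is exactly the standard semantics of nested negation recalled in the excerpt.
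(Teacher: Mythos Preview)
Your proof is correct. Note, however, that the paper does not actually give a proof of this proposition: it is attributed to \cite{brain2007debugging} and simply imported as a known result (``The following properties follow from \cite{brain2007debugging}''). So there is no paper-side argument to compare against; you have supplied a self-contained proof where the paper only cites one.

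Your unfolding argument is the natural one and matches what one would expect from the original source: the key ingredients---that $\mi{ko}(n_r)$ never occurs in a head and hence is absent from any answer set, that $\mi{ap}(n_r)$ has a unique defining rule whose body is exactly $B(r)$, that the $\mi{bl}(n_r)$ rules collectively fire iff $X \not\models B(r)$, and that support in an answer set forces parts (2) and (3)---are all handled correctly. The treatment of $\mi{not}\ \mi{not}\ \alpha_2$ via the reduct semantics recalled in the paper is the right way to close the $\mi{bl}$ direction.
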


The relation between the auxiliary atoms and the original atoms are described below.

\begin{thm}[\cite{brain2007debugging}]
\label{thm:debug_mainprog_rel}
For a program $\Pi$ over $\Lits$, the answer sets $\AS(\Pi)$ and $\AS(\mathcal{T}_{meta}[\Pi])$ satisfy the following conditions:
\be
\item If $X \in \AS(\Pi)$, then $X \cup \{\mi{ap}(n_r) \mid r \in \Pi^X\} \cup \{\mi{bl}(n_r) \mid r \in \Pi \setminus \Pi^X\} \in \AS(\mathcal{T}_{meta}[\Pi])$.
\item If $Y \in \AS(\mathcal{T}_{meta}[\Pi])$, then $Y \cap \Lits \in \AS(\Pi)$. 
\ee
\end{thm}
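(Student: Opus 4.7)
The plan is to prove both directions by computing the Gelfond--Lifschitz style reduct of $\mathcal{T}_{meta}[\Pi]$ with respect to the candidate meta-answer set and verifying minimality. First, a preliminary observation simplifies matters: the atoms $\mi{ko}(n_r)$ have no defining rule in $\mathcal{T}_{meta}[\Pi]$, so $\mi{ko}(n_r) \notin Z$ for every $Z \in \AS(\mathcal{T}_{meta}[\Pi])$, which means that in any reduct the literal $\mi{not}\ \mi{ko}(n_r)$ is $\top$ and the generating rule collapses to $H(r) \leftarrow \mi{ap}(n_r)$. Similarly, for each literal-indexed rule, $\mi{bl}(n_r) \leftarrow \mi{not}\ \alpha_1$ contributes the fact $\mi{bl}(n_r) \leftarrow$ iff $\alpha_1$ is not in the candidate, and the nested-negation rule $\mi{bl}(n_r) \leftarrow \mi{not}\ \mi{not}\ \alpha_2$ contributes $\mi{bl}(n_r) \leftarrow$ iff $\alpha_2$ is in the candidate, following the standard reduct semantics for nested expressions of \cite{DBLP:journals/amai/LifschitzTT99}.

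For item (1), given $X \in \AS(\Pi)$, I would set $Y = X \cup \{\mi{ap}(n_r) \mid X \models B(r)\} \cup \{\mi{bl}(n_r) \mid X \not\models B(r)\}$. The reduct $\mathcal{T}_{meta}[\Pi]^Y$ then consists of the rules $H(r) \leftarrow \mi{ap}(n_r)$ for all $r \in \Pi$, the rules $\mi{ap}(n_r) \leftarrow B^+(r)$ for each $r$ with $B^-(r) \cap X = \emptyset$, and the facts $\mi{bl}(n_r) \leftarrow$ for precisely those $r$ with $X \not\models B(r)$. A rule-by-rule check shows $Y$ satisfies this reduct, using that $X$ satisfies every rule of $\Pi$ and that for any constraint $r$ the condition $X \not\models B(r)$ ensures $\mi{ap}(n_r) \notin Y$. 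For minimality, any strict submodel $Y' \subsetneq Y$ must still contain all the forced $\mi{bl}$ facts, so $X' := Y' \cap \Lits \subsetneq X$; chaining $B^+(r) \to \mi{ap}(n_r) \to H(r)$ in the reduct then exhibits $X'$ as a model of the GL reduct of $\Pi$ w.r.t.\ $X$ strictly smaller than $X$, contradicting $X \in \AS(\Pi)$.

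For item (2), given $Y \in \AS(\mathcal{T}_{meta}[\Pi])$, I set $X := Y \cap \Lits$. Proposition~\ref{prop:brain2007debuging} already supplies the key equivalence $\mi{ap}(n_r) \in Y$ iff $X \models B(r)$, and combined with $H(r) \leftarrow \mi{ap}(n_r)$ in the reduct this yields $X \models f\Pi^X$; for constraint rules this simultaneously forces $\mi{ap}(n_r) \notin Y$ via the constraint $\bot \leftarrow \mi{ap}(n_r)$ in the reduct. For minimality, if $X' \subsetneq X$ were a model of $f\Pi^X$, I would lift it to $Y' := X' \cup \{\mi{ap}(n_r) \mid B^+(r) \subseteq X',\ B^-(r) \cap X = \emptyset\} \cup \{\mi{bl}(n_r) \in Y\}$ and verify by a short case distinction on each rule in the reduct that $Y'$ is a model of $\mathcal{T}_{meta}[\Pi]^Y$ with $Y' \subsetneq Y$, contradicting minimality of $Y$.

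The main subtleties I anticipate are the correct treatment of the nested-negation rule $\mi{bl}(n_r) \leftarrow \mi{not}\ \mi{not}\ \alpha_2$ in forming the reduct, and the careful handling of constraint rules with $H(r) = \bot$ throughout, where the simplified constraint $\bot \leftarrow \mi{ap}(n_r)$ must be used to exclude any $\mi{ap}$ atom for a constraint from $Y$. Once these two ingredients are in place, both directions reduce to routine bookkeeping on the reduct, leveraging the equivalence between models of $f\Pi^X$ and of the GL reduct on subsets of $X$.
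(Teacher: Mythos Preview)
The paper does not actually prove this theorem: it is stated with the citation \cite{brain2007debugging} and used as a black box (e.g., in the proof of Theorem~\ref{thm:debug-prop-rel}), so there is no in-paper proof to compare against. Your proposal is therefore not redundant with anything in the paper; it supplies an argument the paper simply imports.

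On the merits, your approach is the standard and correct one. Two small points worth tightening. First, in item~(1), the inference ``$Y'$ must contain all forced $\mi{bl}$ facts, so $X' := Y' \cap \Lits \subsetneq X$'' skips a step: containing the $\mi{bl}$ facts alone does not yet force $X' \subsetneq X$, since $Y'$ could in principle drop an $\mi{ap}(n_r)$ while keeping $Y' \cap \Lits = X$. You need the additional observation that if $X' = X$, then for every $r$ with $X \models B(r)$ the rule $\mi{ap}(n_r) \leftarrow B^+(r)$ in the reduct forces $\mi{ap}(n_r) \in Y'$ as well, giving $Y' = Y$; only then does $Y' \subsetneq Y$ imply $X' \subsetneq X$. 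Second, be aware that the paper defines answer sets via the FLP reduct $f\Pi^I = \{r \mid I \models B(r)\}$ rather than the GL reduct you use; for the normal (non-disjunctive) programs and the nested-negation rules at hand the two coincide on answer sets, so your computation is fine, but it would be cleanest to note this equivalence explicitly or to phrase the minimality arguments uniformly in FLP terms to match the paper's conventions.
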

}

Abnormality atoms are introduced to indicate the cause of inconsistency: $\mi{ab_p}(r)$ signals that rule $r$ is falsified under some interpretation, $\mi{ab_c}(\alpha)$ points out that $\alpha$ is true but has no support, and $\mi{ab_l}(\alpha)$ indicates that $\alpha$ may be involved in a faulty loop (unfounded or odd).

\begin{defn}
\label{debug:metaprogs_aux}
Given a program $\Pi$ over $\Lits$, \revisedversion{and a set $A \subseteq \Lits$ of atoms}, the following additional meta-programs are constructed:
\be[1.]
\item ${\cal T}_P[\Pi]$: for all $r \in \Pi$ with $\rev{B(r)}{B^\pm(r)} \cap \omitt \neq \emptyset, \newrev{H(r) \neq \bot}{H(r) \notin \omitt}$:
\bi
\item[] \newrev{}{If $H(r) \neq \bot$:}
\beeq
\begin{split}
 &\mi{ko}(n_r).\\
&\{H(r)\} \leftarrow  \mi{ap}(n_r).\\
&\mi{ab_p}(n_r) \leftarrow \mi{ap}(n_r), \mi{not}\ H(r).
\end{split}
\eeeq
\newrev{}{\item[] If $H(r) = \bot$:
\beeq
\begin{split}
 &\mi{ko}(n_r).\\
&\mi{ab_p}(n_r) \leftarrow \mi{ap}(n_r).
\end{split}
\eeeq}
\ei
\item ${\cal T}_C[\Pi,\Lits]$: for all $\alpha \,{\in}\, \Lits
  {\setminus} \omitt$ with the defining rules $\mi{def}(\alpha,\Pi){=}\{r_1,...,\!r_k\}$:
\beeq
\begin{split}
&\{\alpha\} \leftarrow \mi{bl}(n_{r_1}),...,\mi{bl}(n_{r_k}).\\
&\mi{ab_c}(\alpha) \leftarrow \alpha, \mi{bl}(n_{r_1}),...,\mi{bl}(n_{r_k}).
\end{split}
\eeeq
\item ${\cal T}_\omitt[\Lits]$: for all $\alpha \in \Lits $:
\beeq
\begin{split}
&\{\mi{ab_l}(\alpha)\} \leftarrow \mi{not}\ \mi{ab_c}(\alpha).\\
&\alpha \leftarrow \mi{ab_l}(\alpha).
\end{split}
\eeeq
\ee
\end{defn}

\oldversion{In ${\cal T}_C[\Pi,\Lits]$, we do not guess over the atoms $\omitt$ if
all rules that have them in the head are blocked.}
\revisedversion{
The difference from the abnormality atoms in  \cite{brain2007debugging} is that the auxiliary atoms $\mi{ab_p}(n_r)$ are only created for the rules which will be changed in the abstraction (but not omitted\newrev{}{, except for constraints which get omitted instead of getting changed to choice rules}) due to $A$, denoted by $\Pi_{A}^{c} = \{r \mid r\in \Pi, B^\pm(r) \cap \omitt \neq \emptyset, \newrev{H(r) \nsubseteq \omitt}{H(r) \notin \omitt}\newrev{, H(r) \neq \bot}{}\}$, and the auxiliary atoms $\mi{ab_c}(a)$ are created only for the non-omitted atoms.
} 
This helps the
search of a concrete interpretation for the partial/abstract
interpretation by avoiding ``bad'' (i.e., non-supported) guesses of the omitted atoms. Notice that for the rules $r_i$ with $H(r_i) = \alpha$ and empty body, we also put $\mi{bl}(n_{r_i})$ so that $\mi{ab_c}(\alpha)$ does not get determined, since one can always guess over $\alpha$ in $\Pi$.

Having $\mi{ab_l}(\alpha)$ indicates that $\alpha$ is determined through a loop, but it does not necessarily show that the loop is unfounded (as described through \emph{loop formulas} in \cite{brain2007debugging}). By checking whether $\alpha$ only gets support by itself, the unfoundedness can be caught. In some cases, $\alpha$ could be involved in an odd loop that was disregarded in the abstraction due to omission, which 
requires an additional check. 

\revisedversion{

The basic properties of the abnormality atoms follow from \cite{brain2007debugging}.

\begin{prop}[\cite{brain2007debugging}]
\label{prop:debug-aux-basic}
Consider a program $\Pi$ over $\Lits$, a set $A\subseteq \Lits$ of atoms, and an answer set $X$ of $\mathcal{T}_{meta}[\Pi] \cup {\cal T}_P[\Pi] \cup {\cal T}_C[\Pi,\Lits] \cup {\cal T}_\omitt[\Lits]$. 

For each rule $r \in \Pi_{A}^{c}$: 
\be
\item $\mi{ab_p}(n_r) \in X$ iff $\mi{ap}(n_r) \in X, \mi{bl}(n_r) \notin X$, and $H(r) \notin X$;
\item $\mi{ab_p}(n_r) \notin X$ if $\mi{ab_c}(H(r)) \in X$ or $\mi{ab_l}(H(r)) \in X$.
\ee
Moreover, for every $a \in \Lits \setminus A$, it holds that:
\be
\item $\mi{ab_c}(a) \in X$ and $\mi{ab_l}(a) \notin X$ iff $a \in X$ and $(X \cap \newrev{A}{\Lits}) \nmodels (\bigvee_{r\in \mi{def}(a,\Pi)} B(r))$;
\item $\mi{ab_c}(a) \notin X$ if $a \in X$ and  $(X \cap \newrev{A}{\Lits}) \models (\bigvee_{r\in \mi{def}(a,\Pi)} B(r))$;
\item $\mi{ab_c}(a) \notin X$ and $\mi{ab_l}(a) \notin X$ if $a \notin X$;
\item $\mi{ab_c}(a) \notin X$ if $\mi{ab_l}(a) \in X$.
\ee
\end{prop}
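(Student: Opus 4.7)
The plan is to prove each item by a direct case analysis on the rules in the meta-program ${\cal T}_P[\Pi] \cup {\cal T}_C[\Pi,\Lits] \cup {\cal T}_{\omitt}[\Lits]$ that introduce the abnormality atoms, invoking Proposition~\ref{prop:brain2007debuging} to relate the tag atoms $\mi{ap}(n_r)$ and $\mi{bl}(n_r)$ to rule applicability under $X$. Since $X$ is an answer set of the meta-program, membership of an abnormality atom in $X$ is governed by the supporting rule(s) whose body is satisfied by $X$, which keeps the reasoning essentially syntactic.

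For Part 1 (rules $r\in\Pi_A^c$): the only rule deriving $\mi{ab_p}(n_r)$ is $\mi{ab_p}(n_r) \leftarrow \mi{ap}(n_r), \mi{not}\ H(r)$ when $H(r)\neq\bot$, and $\mi{ab_p}(n_r) \leftarrow \mi{ap}(n_r)$ when $H(r)=\bot$. Item 1 follows immediately: the body requires $\mi{ap}(n_r)\in X$ and $H(r)\notin X$ (trivially satisfied in the constraint case), and $\mi{ap}(n_r)\in X$ iff $\mi{bl}(n_r)\notin X$ by Proposition~\ref{prop:brain2007debuging}(1). For item 2, if $\mi{ab_c}(H(r))\in X$ then the body of the defining rule of $\mi{ab_c}$ forces $H(r)\in X$; similarly if $\mi{ab_l}(H(r))\in X$ then the rule $\alpha \leftarrow \mi{ab_l}(\alpha)$ forces $H(r)\in X$; either way the body of the $\mi{ab_p}$ rule is blocked. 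Note the constraint case cannot arise here because $\mi{ab_c}, \mi{ab_l}$ are introduced only for $\alpha \in \Lits\setminus A$, whereas for $r \in \Pi^c_A$ with $H(r)=\bot$ there is no corresponding $\mi{ab_c}(H(r))$ or $\mi{ab_l}(H(r))$.

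For Part 2 (atoms $a\in\Lits\setminus A$): the rule $\mi{ab_c}(\alpha) \leftarrow \alpha, \mi{bl}(n_{r_1}),\dots,\mi{bl}(n_{r_k})$ (with $\mi{def}(a,\Pi)=\{r_1,\dots,r_k\}$) is the unique source of $\mi{ab_c}(a)$. Hence $\mi{ab_c}(a)\in X$ iff $a\in X$ and every $\mi{bl}(n_{r_i})\in X$, which by Proposition~\ref{prop:brain2007debuging}(1) is equivalent to $X\cap\Lits \nmodels \bigvee_{r\in\mi{def}(a,\Pi)} B(r)$. The conjunct $\mi{ab_l}(a)\notin X$ in item 1 follows because the choice rule $\{\mi{ab_l}(\alpha)\} \leftarrow \mi{not}\ \mi{ab_c}(\alpha)$ is inapplicable whenever $\mi{ab_c}(a)\in X$, which also yields item 4. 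Item 2 is the contrapositive on the body condition of the same $\mi{ab_c}$ rule. Item 3 holds because $\mi{ab_c}(a)\in X$ forces $a\in X$ via the body, and $\mi{ab_l}(a)\in X$ forces $a\in X$ via the rule $\alpha \leftarrow \mi{ab_l}(\alpha)$; both contradict $a\notin X$.

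The main obstacle is checking that no unintended interaction between the rules for $\mi{ab_c}$ and $\mi{ab_l}$ (together with $\alpha \leftarrow \mi{ab_l}(\alpha)$) creates a self-supporting loop that would spoil the biconditional in item 1. This is handled by observing that in the answer set $X$ the choice atom $\mi{ab_l}(\alpha)$ is only generated when $\mi{ab_c}(\alpha)\notin X$, which combined with the monotone direction already established is enough to rule out $\mi{ab_l}(a)\in X$ whenever $\mi{ab_c}(a)\in X$. Beyond this, the remainder is purely syntactic bookkeeping and requires no semantic argument deeper than Proposition~\ref{prop:brain2007debuging}.
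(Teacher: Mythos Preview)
The paper does not give its own proof of this proposition; it is stated with the citation \cite{brain2007debugging} and the remark that ``the basic properties of the abnormality atoms follow from \cite{brain2007debugging}.'' So there is no paper proof to compare against, and your write-up is effectively supplying the argument the paper omits by citation.

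Your argument is essentially correct and proceeds in the natural way: for each abnormality atom you identify its unique defining rule(s) in ${\cal T}_P[\Pi]\cup{\cal T}_C[\Pi,\Lits]\cup{\cal T}_\omitt[\Lits]$ and read off the characterisation from the body. One small point to tighten: you invoke Proposition~\ref{prop:brain2007debuging} for the equivalence $\mi{ap}(n_r)\in X$ iff $\mi{bl}(n_r)\notin X$, but that proposition is stated for answer sets of $\mathcal{T}_{meta}[\Pi]$ alone, whereas here $X$ is an answer set of the larger program. This is harmless, since the only rules with $\mi{ap}(n_r)$ or $\mi{bl}(n_r)$ in the head are those in $\mathcal{T}_{meta}[\Pi]$ and the additional meta-programs do not interfere, so the same body-satisfaction reasoning goes through; but it is worth saying explicitly rather than citing Proposition~\ref{prop:brain2007debuging} verbatim.
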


The next result shows that the answer sets of the translated program
that are free from abnormality atoms correspond to the answer sets of
the correctness checking of an abstract answer set $\hat{I}$ over
$\Pi$ using the query $Q_{\hat{I}}^{\overline{\omitt}}$. We denote by
$\mi{AB}_A(\Pi)$ the set of abnormality atoms according to the omitted
atoms $A$, i.e., \newrev{$\mi{AB}_A(\Pi)=\{ab_p(n_r) \mid r\in\Pi, \rev{B(r)}{B^\pm(r)} \cap \omitt \neq \emptyset, H(r) \notin \omitt , H(r) \neq \bot\} \cup \{ab_c\}$}{$\mi{AB}_A(\Pi)=\{ab_p(n_r) \mid r\in\Pi, B^\pm(r) \cap \omitt \neq \emptyset, H(r) \notin \omitt \} \cup \{ab_c(\alpha) \mid \alpha \in \Lits\setminus A\} \cup \{ab_l(\alpha) \mid \alpha \in \Lits\}$}.

\begin{thm}
\label{thm:debug-prop-rel}
For a program $\Pi$ over $\Lits$, a set $A\subseteq \Lits$ of atoms and answer set $\hat{I}$ of $\mi{omit}(\Pi,A)$, the following holds.
\be
\item If $X$ is an answer set of $\Pi \cup Q_{\hat{I}}^{\overline{\omitt}}$, then
$$X \cup \{\mi{ko}(n_r) \mid r \in \Pi_{A}^{c}\} \cup \{\mi{ap}(n_r) \mid r \in \Pi^X\} \cup \{\mi{bl}(n_r) \mid r \in \Pi \setminus \Pi^X\}$$
is an answer set of $\mathcal{T}_{meta}[\Pi] \cup {\cal T}_P[\Pi] \cup {\cal T}_C[\Pi,\Lits] \cup {\cal T}_\omitt[\Lits] \cup Q_{\hat{I}}^{\overline{\omitt}}$.
\item If $Y$ is an answer set of $\mathcal{T}_{meta}[\Pi] \cup {\cal T}_P[\Pi] \cup {\cal T}_C[\Pi,\Lits] \cup {\cal T}_\omitt[\Lits] \cup Q_{\hat{I}}^{\overline{\omitt}}$ such that 
$(Y \cap \mi{AB}_A(\Pi))=\emptyset$, then $(Y \cap \Lits)$ is an answer set of $\Pi \cup Q_{\hat{I}}^{\overline{\omitt}}$.
\ee
\end{thm}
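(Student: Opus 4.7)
The plan is to prove both directions by carefully analyzing how the auxiliary rules in ${\cal T}_P[\Pi]$, ${\cal T}_C[\Pi,\Lits]$, and ${\cal T}_\omitt[\Lits]$ interact with the underlying meta-program $\mathcal{T}_{meta}[\Pi]$ and the constraints in $Q_{\hat{I}}^{\overline{\omitt}}$, building on Theorem~\ref{thm:debug_mainprog_rel} and Propositions~\ref{prop:brain2007debuging}, \ref{prop:debug-aux-basic}. For part (1), given $X \in \AS(\Pi \cup Q_{\hat{I}}^{\overline{\omitt}})$, define $Y$ as stated. First I would verify that $Y$ satisfies the entire extended program: the $\mathcal{T}_{meta}[\Pi]$ part follows directly from Theorem~\ref{thm:debug_mainprog_rel} (applied to $\Pi \cup Q_{\hat{I}}^{\overline{\omitt}}$ regarded as a program, since the query consists of constraints over $\Lits$ only). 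For ${\cal T}_P[\Pi]$, the $\mi{ko}(n_r)$ facts are trivially satisfied; the choice rules $\{H(r)\} \leftarrow \mi{ap}(n_r)$ are satisfied because $H(r) \in X$ whenever $\mi{ap}(n_r) \in Y$ (as $r \in \Pi^X$ and $X \models \Pi$); and the $\mi{ab_p}(n_r)$ rules do not fire for exactly this reason. For ${\cal T}_C[\Pi,\Lits]$ and ${\cal T}_\omitt[\Lits]$, no $\mi{ab_c}$ or $\mi{ab_l}$ atom needs to be included since every $a \in X$ has a supporting rule in $\Pi^X$ (whence some $\mi{bl}(n_{r_i}) \notin Y$).

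The more delicate part of (1) is minimality. I would argue that any model $Y' \subsetneq Y$ of the reduct, when projected to $\Lits$, would yield a model $X' \subseteq X$ of $\Pi^X \cup Q_{\hat{I}}^{\overline{\omitt}}$ that is strictly smaller, contradicting $X \in \AS(\Pi \cup Q_{\hat{I}}^{\overline{\omitt}})$. The $\mi{ko}(n_r)$ facts must stay in $Y'$, so the original rule $H(r) \leftarrow \mi{ap}(n_r), \mi{not}\ \mi{ko}(n_r)$ is blocked in the reduct; this forces the choice rule in ${\cal T}_P[\Pi]$ to carry the weight of deriving $H(r)$, which it does in the reduct as an unconstrained rule, preserving the derivation structure of $\Pi^X$ through the $\mi{ap}/\mi{bl}$ atoms.

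For part (2), let $Y$ be an answer set of the full program with $Y \cap \mi{AB}_A(\Pi) = \emptyset$, and set $X = Y \cap \Lits$. The query constraints in $Q_{\hat{I}}^{\overline{\omitt}}$ force $X|_{\ol{A}} = \hat{I}$. I would then show $X \in \AS(\Pi \cup Q_{\hat{I}}^{\overline{\omitt}})$ by verifying $X \models \Pi$ and then minimality of $X$ w.r.t.\ $\Pi^X$. For the model property, consider any $r \in \Pi$ with $X \models B(r)$: if $r \notin \Pi_{A}^{c}$ and $H(r) \neq \bot$, then Proposition~\ref{prop:brain2007debuging} yields $\mi{ap}(n_r) \in Y$ and the $\mathcal{T}_{meta}[\Pi]$ rule with $\mi{ko}(n_r) \notin Y$ derives $H(r)$; if $r \in \Pi_{A}^{c}$ with $H(r) \neq \bot$, the absence of $\mi{ab_p}(n_r)$ combined with $\mi{ap}(n_r) \in Y$ forces $H(r) \in Y$; if $H(r) = \bot$, the absence of $\mi{ab_p}(n_r)$ (or the $\mathcal{T}_{meta}$ constraint) rules out $\mi{ap}(n_r) \in Y$, so $X \not\models B(r)$.

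The main obstacle, and the step I would pay the most attention to, is verifying minimality of $X$ w.r.t.\ $\Pi^X$ in part (2): an atom $a \in X$ could in principle be true in $Y$ only via a choice in ${\cal T}_C[\Pi,\Lits]$ or a self-loop from ${\cal T}_\omitt[\Lits]$, rather than via a properly supporting rule of $\Pi^X$. The absence of $\mi{ab_c}(a)$ rules out the former (by Proposition~\ref{prop:debug-aux-basic} items~3--4, some rule $r \in \mi{def}(a,\Pi)$ must have $X \models B(r)$, so $r \in \Pi^X$ supports $a$), and the absence of $\mi{ab_l}(a)$ prevents unfounded support through loops. Combining these, any strictly smaller model $X' \subsetneq X$ of $\Pi^X$ could be used to build a smaller model of the meta-program's reduct w.r.t.\ $Y$ by trimming the associated $\mi{ap}/\mi{bl}$ atoms and re-deriving consistent tags, contradicting that $Y$ is an answer set. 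Finally, the query constraints carry over trivially, so $X \in \AS(\Pi \cup Q_{\hat{I}}^{\overline{\omitt}})$.
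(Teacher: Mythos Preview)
Your proposal is essentially correct and follows the same approach as the paper's proof: both directions are handled by a model/minimality split, relying on Theorem~\ref{thm:debug_mainprog_rel} for the $\mathcal{T}_{meta}$ core and then checking the auxiliary programs ${\cal T}_P$, ${\cal T}_C$, ${\cal T}_\omitt$ case by case, with minimality established by translating a hypothetical smaller model back and forth between $\Pi^X$ and the meta-program's reduct. The paper frames everything by contradiction while you verify directly, but the underlying reasoning is identical; your identification of the ``main obstacle'' in part~(2)---that the choice rules in ${\cal T}_C$ and ${\cal T}_\omitt$ could in principle support an atom without a genuine $\Pi$-rule, and that this is ruled out precisely by $Y\cap\mi{AB}_A(\Pi)=\emptyset$---matches the paper's case analysis in (2)(ii).

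One small remark: your sentence that the choice rules $\{H(r)\}\leftarrow \mi{ap}(n_r)$ ``are satisfied because $H(r)\in X$ whenever $\mi{ap}(n_r)\in Y$'' is harmless but slightly off---choice rules are always satisfied; the real point (which you make immediately after) is that this is why $\mi{ab_p}(n_r)$ does not fire. Also, your appeal to ``absence of $\mi{ab_l}(a)$ prevents unfounded support through loops'' is a bit telegraphic; the actual mechanism, which your final sentence does capture, is that the positive dependency structure of $\Pi$ is preserved through the $\mi{ap}$ atoms in $\mathcal{T}_{meta}$, so an unfounded set in $\Pi^X$ lifts to an unfounded set in the meta-reduct, contradicting $Y\in\AS(\cdot)$.
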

The proof is moved to \ref{app:proofs} for clarity of the presentation.
}
\subsection{Determining Bad-Omission Atoms}

Whether or not $\Pi$ is consistent, our focus is on debugging the
cause of inconsistency introduced through checking for a spurious
answer set $\hat{I}$, i.e., evaluating the program $\Pi \cup
Q_{\hat{I}}^{\overline{\omitt}}$ from Proposition~\ref{prop:query-program}
in Section~\ref{sec:over-approx}. 
We reason about the inconsistency by inspecting the reason for having $\hat{I} \in AS(\mi{omit}(\Pi,\omitt))$ due to some modified rules.
\begin{defn}\label{def:choice}
Let $r: \alpha \leftarrow B$ be a rule in $\Pi$ such that $\rev{B}{B^\pm} \cap
\omitt \neq \emptyset$ and $\alpha \notin \omitt$. The abstract rule
$\hat{r}: \{\alpha\} \leftarrow m_A(B)$ in $\mi{omit}(\Pi,\omitt)$ introduces
w.r.t.\ an abstract interpretation $\hat{I} \in AS(\mi{omit}(\Pi,\omitt))$
\be[(i)]
\item a \emph{spurious choice}, if $\hat{I} \models m_A(B)$ and
$\hat{I} \models \overline{\alpha}$, i.e., 
$\hat{I} \not\models \alpha$, but some model $I$ of $\Pi \setminus
  \{r\}$ exists s.t.\
$I|_{\overline{\omitt}} = \hat{I}$ and $I \models B$.
\item a \emph{spurious support}, if $\hat{I} \models m_A(B)$ and
$\hat{I} \models \alpha$, but some model $I$ of $\Pi$ exists s.t.\ $I|_{\overline{\omitt}}
= \hat{I}$ and for all $r' \in \mi{def}(\alpha,\Pi), I \nmodels B(r')$.
\ee
\end{defn}

Any occurrence of the above cases shows that $\hat{I}$ is spurious. In case (i), due to $\hat{I} \not\models \alpha$, the rule $r$ is not satisfied by $I$ while $I$ is a model of the remaining rules. In case (ii), an $I$ that matches $\hat{I} \models \alpha$ does not give a supporting rule for $\alpha$.

\begin{defn}
Let $r: \alpha \leftarrow B$ be a rule in $\Pi$ such that $\rev{B}{B^\pm} \cap
\omitt \neq \emptyset$. The abstract rule \oldversion{$\hat{r} = m_A(r)$}\revisedversion{$\hat{r}=\mi{omit}(r,A)$} introduces
a \emph{spurious loop-behavior} w.r.t. $\hat{I}$, if some model $I$ of $\Pi$ exists
s.t. $I|_{\overline{\omitt}} = \hat{I}$ and $I\models r$, but $\alpha$ is involved in a loop that is unfounded or is odd, due to some $\alpha' \in \omitt \cap \rev{B}{B^\pm}$. 
\end{defn}
The need for reasoning about the two possible faulty loop behaviors is
shown by the following examples.
\begin{exmp} 
\label{ex:loops}
Consider the programs $\Pi_1,\Pi_2$ and their abstractions $\widehat{\Pi}_1=\widehat{\Pi}_{1{\overline{\{a\}}}}$, $\widehat{\Pi}_2=\widehat{\Pi}_{2{\overline{\{a,b\}}}}$.
{
\begin{center}
\begin{tabular}{@{~}l@{~~}|l@{~~}||l@{~~}|l@{~}}
$\Pi_1$ & $\widehat{\Pi}_{1}$ & $\Pi_2$&$\widehat{\Pi}_{2}$\\
\cline{1-4}
$r1:$ $a \leftarrow \ b.$ & & $r1:$ $a \leftarrow b.$&\\
$r2:$ $b \leftarrow \mi{not}\ c, a.$&$\{b\} \leftarrow \mi{not}\ c.$ & $r2:$ $b \leftarrow \mi{not}\ a,c.$&\\
\multicolumn{2}{c||}{} &$r3:$ $c$.&$c$.\\
\end{tabular}
\end{center}
}

The program $\Pi_1$ has the single answer set $\emptyset$, and omitting $a$ creates a spurious answer set
$\{b\}$ disregarding that $b$ in unfounded. The program $\Pi_2$ is unsatisfiable due to the odd loop of $a$ and $b$. When both atoms are omitted, this loop is disregarded, which causes a spurious answer set $\{c\}$.
\end{exmp}
Bad omission of atoms are then defined as follows.
\begin{defn}[bad omission atoms]
An atom $\alpha \in \omitt$ is a \emph{bad omission} w.r.t.\ a spurious answer
set $\hat{I}$ of $\mi{omit}(\Pi,\omitt)$, if some rule $r\,{\in}\, \Pi$ with $\alpha \,{\in}\,
\rev{B(r)}{B^\pm(r)}$ exists s.t.\ $\hat{r} = m_A(r)$ introduces 
either (i) a spurious choice, or (ii) a spurious support or (iii) a spurious loop-behavior w.r.t.\ $\hat{I}$.
\end{defn}

Intuitively, for case (i) of Definition~\ref{def:choice}, as
$\overline{\alpha}$ was decided due to choice in $H(\hat{r})$, we
infer that the omitted atom which caused $r$ to become a choice rule
is a bad omission. Also for case (ii), as $\alpha$ is decided with
$\hat{I} \models B(\hat{r})$, we infer that the omitted atom that
caused $B(r)$ to be modified is a bad omission. As for case (iii), it
shows that the modification made on $r$ (either omission or change to
choice rule) ignores an unfoundedness or an odd loop. Case (i) also
catches
issues that arise due to omitting a constraint in the abstraction.

We now describe how we determine when an omitted atom is a bad omission. 
\begin{defn}[bad omission determining program]
The bad omission determining program ${\cal T}_{badomit}$ is constructed using the abnormality atoms obtained from ${\cal T}_P[\Pi]$, ${\cal T}_C[\Pi,\Lits]$ and ${\cal T}_\omitt[\Lits]$ as follows:
\be
\item A bad omission is inferred if the original rule is not satisfied, but applicable (and satisfied) in the abstract program:
\beeq
\begin{split}
\mi{badomit}(X,\mi{type1}) \leftarrow &~ \mi{ab_p}(R), \mi{absAp}(R), \mi{modified}(R),\mi{omittedAtomFrom}(X,R).
\end{split}
\eeeq
\item A bad omission is inferred if the original rule is blocked and the head is 
unsupported, while it is applicable (and satisfied) in the abstract program:
\beeq
\begin{split}
\mi{badomit}(X,\mi{type2}) \leftarrow &~ \mi{head}(R,H),\mi{ab_c}(H),\mi{absAp}(R), \mi{changed}(R),\\
&\mi{omittedAtomFrom}(X,R).
\end{split}
\eeeq
\item A bad omission is inferred in case there is unfoundedness or an involvement of an odd loop, via an omitted atom:
\beeq
\begin{split}
\mi{faulty}(X) \leftarrow&~ \mi{ab_l}(X),\mi{inOddLoop}(X,X_1),\mi{omittedAtom}(X_1).\\
\mi{faulty}(X) \leftarrow&~ \mi{ab_l}(X),\mi{inPosLoop}(X,X_1),\mi{omittedAtom}(X_1).\\
\mi{badomit}(X_1,\mi{type3}) \leftarrow &~ \mi{faulty}(X), \mi{head}(R,X),\mi{modified}(R), \mi{absAp}(R),  \\
&\mi{omittedAtomFrom}(X_1,R).
\end{split}
\eeeq
\ee
where $\mi{absAp}(r)$ is an auxiliary atom to keep track of which original rule becomes applicable with the remaining non-omitted atoms for the abstract interpretation, $\mi{changed}(r)$ shows that $r$ is changed to a choice rule in the abstraction, $\mi{modified}(r)$ shows that $r$ is either changed or omitted in the abstraction, \revisedversion{and $\mi{omittedAtomFrom}(x,r)$ is an auxiliary atom that states which atoms are omitted from a rule}. 
\end{defn}

For defining $\mi{type3}$, we check for loops using the encoding in \cite{syrjanen2006debugging} and determine $\mi{inOddLoop}$ and (newly defined) $\mi{inPosLoop}$ atoms of $\Pi$\rev{.}{ (see Figure~\ref{fig:debug-loop}).
}
\begin{figure}[t!]
\caption{\revisedversion{Loop checking}}
\label{fig:debug-loop}
\centering
\revisedversion{
\beeq
\ba{ll}
\mi{posEdge}(H, A) &\leftarrow \mi{head}(R, H), \mi{posBody}(R, A).\\
\mi{negEdge}(H, B) &\leftarrow  \mi{head}(R, H),\mi{negBody}(R, B).\\
\mi{even}(X, Y) &\leftarrow  \mi{posEdge}(X, Y).\\
\mi{odd}(X, Y) &\leftarrow  \mi{negEdge}(X, Y).\\
\mi{even}(X, Z) &\leftarrow  \mi{posEdge}(X, Y), \mi{even}(Y, Z), \mi{atom}(Z).\\
\mi{odd}(X, Z) &\leftarrow  \mi{posEdge}(X, Y), \mi{odd}(Y, Z), \mi{atom}(Z).\\
\mi{odd}(X, Z) &\leftarrow  \mi{negEdge}(X, Y), \mi{even}(Y, Z), \mi{atom}(Z).\\
\mi{even}(X, Z) &\leftarrow  \mi{negEdge}(X, Y), \mi{odd}(Y, Z), \mi{atom}(Z).\\
\mi{inOddLoop}(X, Y) &\leftarrow  \mi{odd}(X, Y), \mi{even}(Y, X).\\
\\
\mi{posDep}(X, Y) &\leftarrow \mi{posEdge}(X, Y).\\
\mi{posDep}(X, Z) &\leftarrow \mi{posEdge}(X, Y), \mi{posDep}(Y, Z), \mi{atom}(Z).\\
\mi{inPosLoop}(X, Y) &\leftarrow \mi{posDep}(X, Y), \mi{posDep}(Y, X).
\ea
\eeeq
}
\end{figure}

The cases for $\mi{type2}$ and $\mi{type3}$ introduce as bad omissions the omitted atoms of all the rules that add to $\mi{ab_c}(H)$ being true, or of all rules that have $X$ in the head for $ab_l(X)$, respectively. Modifying $\mi{badomit}$ determination to have a choice over such rules to be refined (and their omitted atoms to be $\mi{badomit}$) and minimizing the number of $\mi{badomit}$ atoms reduces the number of added back atoms in a refinement step, at the cost of increasing the search space.

In order to avoid the guesses of $ab_l$ for omitted atoms even if
there is no faulty loop behavior related with them (i.e., this is not the cause of inconsistency of $\hat{I}$), we add the constraint
$\leftarrow \mi{ab_l}(X), \mi{not}\ \mi{someFaulty}.$ \revisedversion{with the auxiliary definition $\mi{someFaulty} \leftarrow \mi{faulty}(X)$}.

With all this in place, the program for debugging a spurious answer
set is composed as follows.

\begin{defn}[spurious answer set debugging program]
For an abstract answer set $\hat{I}$, we denote by ${\cal T}[\Pi,\hat{I}]$ the program ${\cal T}_{meta}[\Pi] \cup {\cal T}_P[\Pi] \cup {\cal T}_C[\Pi,\Lits] \cup {\cal T}_\omitt[\Lits] \cup {\cal T}_{badomit} \cup Q_{\hat{I}}^{\overline{\omitt}}$. 
\end{defn}
\newrev{}{Let $\Lits^*_A$  denote the set of all atoms occurring in ${\cal T}[\Pi,\hat{I}]$ including $\Lits^+$ and additional atoms introduced in ${\cal T}_{badomit} \cup Q_{\hat{I}}^{\overline{\omitt}}$ for the set $A$ of omitted atoms.} From the answer sets of ${\cal T}[\Pi,\hat{I}]$, we can see 
bad omissions and their types.
\begin{exmp}
\label{ex:spurious-meta-debug}
For the following program $\Pi$, $\hat{I} = \{b\}$ is a spurious
answer set of the abstraction for $\omitt=\{a,d\}$:
\begin{center}
{
\begin{tabular}{l|c}
$\Pi$ & $\widehat{\Pi}_{\overline{a,d}}$\\
\cline{1-2}
$r1:$ $c \leftarrow \mi{not}\ d.$ & $\{c\}.$\\
$r2:$ $d \leftarrow \mi{not}\ c.$ &\\
$r3:$ $a \leftarrow \mi{not}\ d,c.$& \\
$r4:$ $b \leftarrow a.$& $\{b\}$.\\
\end{tabular}
}
\end{center}
\oldversion{${\cal T}[\Pi,\hat{I}]$ gives
the answer set $\{$
$\mi{ap}(r2)$, $\mi{bl}(r1)$, $\mi{bl}(r4)$, $\mi{bl}(r3)$, $\mi{ab_c}(b)$, $\mi{badomit}(a,type2)\}$.}
\revisedversion{
Figure~\ref{fig:debug-metaprogs} shows the constructed meta-programs of $\Pi$.
${\cal T}[\Pi,\hat{I}]$ gives
the answer set that contains $\{$
$\mi{ap}(r2)$, $\mi{bl}(r1)$, $\mi{bl}(r4)$, $\mi{bl}(r3)$, $\mi{ab_c}(b)$, $\mi{badomit}(a,type2)\}$. The answer set shows that since $c \notin \hat{I}$, the rule $r1$ gets blocked and the rule $r2$ becomes applicable (which means $d$ is derived).
However, as the rule $r3$ is blocked, $a$ cannot be derived, and thus the occurrence of $b$ is unsupported in $\Pi$ (w.r.t $\{b,d\}$), which was avoided in $\widehat{\Pi}_{\overline{a,d}}$ due to (badly) omitting $a$ from the body of $r4$.}
\end{exmp}

\begin{figure}[t!]
\caption{\revisedversion{Meta-programs $\mathcal{T}_{meta}[\Pi]$ (left) and ${\cal T}_P[\Pi] \cup {\cal T}_C[\Pi,\Lits] \cup {\cal T}_\omitt[\Lits]$ (right) for Example~\ref{ex:spurious-meta-debug}}}
\label{fig:debug-metaprogs}
\revisedversion{
\begin{minipage}{0.45\textwidth}
\centering
\beeq
\ba{l@{}l}
c &\lars \mi{ap}(r1), \mi{not}\  \mi{ko}(r1).\\
\mi{ap}(r1) &\lars \mi{not}\  d.\\
\mi{bl}(r1) &\lars \mi{not}\  \mi{not}\ d.\\
\\
d &\lars \mi{ap}(r2), \mi{not}\  \mi{ko}(r2).\\
\mi{ap}(r2) &\lars \mi{not}\  c.\\
\mi{bl}(r2) &\lars \mi{not}\  \mi{not}\ c.\\
\\
a &\lars \mi{ap}(r3), \mi{not}\  \mi{ko}(r3).\\
\mi{ap}(r3) &\lars \mi{not}\  d, c.\\
\mi{bl}(r3) &\lars \mi{not}\  c.\\
\mi{bl}(r3) &\lars \mi{not}\  \mi{not}\ d.\\
\\
b &\lars \mi{ap}(r4), \mi{not}\  \mi{ko}(r4).\\
\mi{ap}(r4) &\lars  a.\\
bl(r4) &\lars  \mi{not}\  a.
\ea
\eeeq
\end{minipage}
\begin{minipage}{0.45\textwidth}
\centering
\beeq
\ba{l@{}l}
\mi{ko}(r1).&\\
\{c\} &\lars \mi{ap}(r1).\\
\mi{ab_p}(r1) &\lars \mi{ap}(r1), \mi{not}\   c.\\
\mi{ko}(r4).&\\
\{b\} &\lars \mi{ap}(r4).\\
\mi{ab_p}(r4) &\lars \mi{ap}(r4), \mi{not}\   b.\\
\\
\{b\} &\lars \mi{bl}(r4).\\
\mi{ab_c}(b) &\lars \mi{bl}(r4), b.\\
\{c\} &\lars \mi{bl}(r1).\\
\mi{ab_c}(c) &\lars \mi{bl}(r1), c.\\
\\
\{\mi{ab_l}(a)\} &\lars \mi{not}\   \mi{ab_c}(a).\\
a &\lars \mi{ab_l}(a).\\
\{\mi{ab_l}(b)\} &\lars \mi{not}\   \mi{ab_c}(b).\\
b &\lars \mi{ab_l}(b).\\
\{\mi{ab_l}(c)\} &\lars \mi{not}\   \mi{ab_c}(c).\\
c &\lars \mi{ab_l}(c).\\
\{\mi{ab_l}(d)\} &\lars \mi{not}\   \mi{ab_c}(d).\\
d &\lars \mi{ab_l}(d).
\ea
\eeeq
\end{minipage}
}
\end{figure}

The next example shows the need for reasoning about the disregarded positive loops and odd loops, due to omission.

\begin{exmp} [Example~\ref{ex:loops} continued]
\label{ex:loops-ctd}
\revisedversion{Figure~\ref{fig:debug-metaprogs-loops} shows the constructed meta-programs for $\Pi_1$ and $\Pi_2$.}
Recall that the program $\Pi_1$ has an unfounded loop between $a$ and
$b$, and the abstraction $\widehat{\Pi}_1=\widehat{\Pi}_{1{\overline{\{a\}}}}$
has the spurious answer set $\{b\}$.  The program ${\cal
  T}[\Pi_1,\{b\}]$ yields $\mi{inPosLoop}(b,a)$, $
\mi{ap}(r1),\mi{ap}(r2), \mi{ab_l}(b), \mi{bad\mi{omit}(a,type3)}$. 
Omitting from the program $\Pi_2$ the loop atoms $a,b$ causes the
spurious answer set $\{c\}$. Accordingly, ${\cal T}[\Pi_2,\{c\}]$
yields $\mi{ap}(r3),\mi{inOddLoop}(b,a)$, $\mi{inOddLoop}(a,b),\mi{ab_l}(b), \mi{ap}(r1)$,
$\mi{bl}(r2)$, $\mi{bad\mi{omit}(a,type3)}, \mi{bad\mi{omit}(b,type3)}$, as desired.
\end{exmp}

\begin{figure}[t!]
\caption{\revisedversion{Meta-programs $\mathcal{T}_{meta}[\Pi_1] \cup {\cal T}_P[\Pi_1] \cup {\cal T}_C[\Pi_1,\Lits] \cup {\cal T}_\omitt[\Lits]$ (left) and $\mathcal{T}_{meta}[\Pi_2] \cup {\cal T}_P[\Pi_2] \cup {\cal T}_C[\Pi_2,\Lits] \cup {\cal T}_\omitt[\Lits]$ (right) for Example~\ref{ex:loops-ctd}}}
\label{fig:debug-metaprogs-loops}
\revisedversion{
\begin{minipage}{0.45\textwidth}
\centering
\beeq
\ba{l@{}l}
a &\lars \mi{ap}(r1), \mi{not}\  \mi{ko}(r1).\\
\mi{ap}(r1) &\lars b.\\
\mi{bl}(r1) &\lars  \mi{not}\  b.\\
\\
b &\lars \mi{ap}(r2), \mi{not}\  \mi{ko}(r2).\\
\mi{ap}(r2) &\lars \mi{not}\  c, a.\\
\mi{bl}(r2) &\lars  \mi{not}\  a.\\
\mi{bl}(r2) &\lars \mi{not}\  \mi{not}\ c.\\
\\
\mi{ko}(r2).&\\
\{b\} &\lars \mi{ap}(r2).\\
\mi{ab_p}(r2) &\lars \mi{ap}(r2), \mi{not}\   b.\\
\\
\{b\} &\lars \mi{bl}(r2).\\
\mi{ab_c}(b) &\lars \mi{bl}(r2), b.\\
\{c\}. &\\
\mi{ab_c}(c) &\lars c.\\
\\
\{\mi{ab_l}(a)\} &\lars \mi{not}\   \mi{ab_c}(a).\\
a &\lars \mi{ab_l}(a).\\
\{\mi{ab_l}(b)\} &\lars \mi{not}\   \mi{ab_c}(b).\\
b &\lars \mi{ab_l}(b).\\
\{\mi{ab_l}(c)\} &\lars \mi{not}\   \mi{ab_c}(c).\\
c &\lars \mi{ab_l}(c).\\
\ea
\eeeq
\end{minipage}
\begin{minipage}{0.45\textwidth}
\centering
\beeq
\ba{l@{}l}
a &\lars \mi{ap}(r1), \mi{not}\  \mi{ko}(r1).\\
\mi{ap}(r1) &\lars  b.\\
\mi{bl}(r1) &\lars \mi{not}\  b.\\
\\
b &\lars \mi{ap}(r2), \mi{not}\  \mi{ko}(r2).\\
\mi{ap}(r2) &\lars \mi{not}\  a, c.\\
\mi{bl}(r2) &\lars  \mi{not}\ c.\\
\mi{bl}(r2) &\lars  \mi{not}\  \mi{not}\ a.\\
\\
c &\lars \mi{ap}(r3), \mi{not}\  \mi{ko}(r3).\\
\mi{ap}(r3). &\\
\\
\{c\} &\lars \mi{bl}(r3).\\
\mi{ab_c}(c) &\lars \mi{bl}(r3), c.\\
\\
\{\mi{ab_l}(a)\} &\lars \mi{not}\   \mi{ab_c}(a).\\
a &\lars \mi{ab_l}(a).\\
\{\mi{ab_l}(b)\} &\lars \mi{not}\   \mi{ab_c}(b).\\
b &\lars \mi{ab_l}(b).\\
\{\mi{ab_l}(c)\} &\lars \mi{not}\   \mi{ab_c}(c).\\
c &\lars \mi{ab_l}(c).
\ea
\eeeq
\end{minipage}
}
\end{figure}

\oldversion{
The following result shows that ${\cal T}[\Pi,\hat{I}]$ flags in its answer
sets always bad omission of atoms, which can be utilized for refinement.
 
\begin{prop}

If $\hat{I}$ is spurious, then for every answer set $S \in
AS({\cal T}[\Pi,\hat{I}])$, $\mi{badomit}(\alpha,\mi{type\,i}) \in
S$  for some $\alpha \in \omitt$ and $i\,{\in}\,\{1,2,3\}$.

\end{prop}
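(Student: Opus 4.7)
The plan is to derive the conclusion through a case analysis on which kind of abnormality atom is forced into every answer set $S$ of ${\cal T}[\Pi,\hat{I}]$.

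First, I would invoke Proposition~\ref{prop:query-program} to rephrase spuriousness of $\hat{I}$ as the unsatisfiability of $\Pi \cup Q_{\hat{I}}^{\overline{\omitt}}$. Then by the contrapositive of Theorem~\ref{thm:debug-prop-rel}(2), no answer set $S$ of ${\cal T}[\Pi,\hat{I}]$ can satisfy $S \cap \mi{AB}_A(\Pi) = \emptyset$; otherwise $S\cap \Lits$ would be an answer set of the unsatisfiable program $\Pi \cup Q_{\hat{I}}^{\overline{\omitt}}$, a contradiction. Hence $S$ contains at least one atom of the form $\mi{ab_p}(n_r)$, $\mi{ab_c}(\alpha)$, or $\mi{ab_l}(\alpha)$.

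Next I would perform a case split on which type of abnormality atom is present. (i) If $\mi{ab_p}(n_r)\in S$, then by the construction of ${\cal T}_P[\Pi]$ the rule $r$ lies in $\Pi^c_A$, so $\mi{modified}(n_r)$ and $\mi{omittedAtomFrom}(x,n_r)$ hold for some $x\in \omitt$; Proposition~\ref{prop:debug-aux-basic}(1) gives $\mi{ap}(n_r)\in S$, whence $B(r)$ is satisfied in $S$ and therefore $B(r)\setminus A$ is too, so $\mi{absAp}(n_r)$ holds, and the type1 rule fires. (ii) If $\mi{ab_c}(\alpha)\in S$, Proposition~\ref{prop:debug-aux-basic}(3) gives $\alpha\in S$ while all rules in $\mi{def}(\alpha,\Pi)$ are blocked; since $\alpha\in \hat{I}$ and $\alpha\notin \omitt$, some abstracted rule with head $\alpha$ must support it in $\mi{omit}(\Pi,\omitt)$, and this can only be a rule that was changed to a choice rule (an unchanged rule would have the same blocked body), so $\mi{changed}(n_r)$, $\mi{omittedAtomFrom}(x,n_r)$, and $\mi{absAp}(n_r)$ all hold, and the type2 rule fires. (iii) If $\mi{ab_l}(\alpha)\in S$, the added constraint $\leftarrow \mi{ab_l}(X),\mi{not}\ \mi{someFaulty}$ forces $\mi{someFaulty}\in S$, hence $\mi{faulty}(x')$ for some $x'$; by the defining rules of $\mi{faulty}$, this requires $\mi{inOddLoop}$ or $\mi{inPosLoop}$ through an omitted atom, and the type3 rule then fires.

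The main obstacle will be case (ii): I need to justify carefully that the coexistence of $\alpha\in \hat{I}$ with the blocking of all original defining rules of $\alpha$ in $S$ forces the existence of a rule $r\in \mi{def}(\alpha,\Pi)$ that was changed in the abstraction and whose abstracted body is satisfied by $S$. This hinges on the fact that $\hat{I}\in AS(\mi{omit}(\Pi,\omitt))$ supplies support for $\alpha$ in the abstract program, and that rules preserved verbatim from $\Pi$ (item~(a) of $\mi{omit}$) share their body with the original and hence inherit the blocking. The remaining cases amount to unrolling the derivation rules of ${\cal T}_{badomit}$ and checking the auxiliary predicates against Proposition~\ref{prop:debug-aux-basic} and the construction of ${\cal T}_P$, ${\cal T}_C$, ${\cal T}_A$.
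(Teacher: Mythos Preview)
Your proposal is correct and follows essentially the same approach as the paper's proof: invoke Proposition~\ref{prop:query-program} and the contrapositive of Theorem~\ref{thm:debug-prop-rel}(2) to force $S \cap \mi{AB}_A(\Pi) \neq \emptyset$, then case-split on the abnormality type and trace through the $\mi{badomit}$ rules. Your handling of cases (i) and (ii) matches the paper closely, and you correctly flag case (ii) as the place where the argument that the supporting abstract rule must be a \emph{changed} rule requires care.

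One small point on case (iii): your use of the constraint $\leftarrow \mi{ab_l}(X), \mi{not}\ \mi{someFaulty}$ to extract $\mi{faulty}(x')$ is a clean shortcut, but ``the type3 rule then fires'' is too quick. You still need a rule $R$ with $\mi{head}(R,x')$ such that $\mi{modified}(R)$, $\mi{absAp}(R)$, and $\mi{omittedAtomFrom}(X_1,R)$ all hold, and the mere presence of $\mi{faulty}(x')$ does not give you these body atoms. The paper closes this by arguing, from the loop structure and the fact that $x'\in S$ is supported only through an omitted atom, that some defining rule of $x'$ was modified and has its abstracted body satisfied; you will need an analogous argument here.
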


\begin{proof}
Let $\hat{I}$ be a spurious answer set. 
%
Then by Proposition~\ref{prop:query-program} the program $\Pi \cup
Q_{\hat{I}}^{\overline{A}}$ is unsatisfiable. We focus on
debugging the cause of inconsistency introduced by
$Q_{\hat{I}}^{\overline{A}}$. This inconsistency can either be due
to (i) an unsatisfied rule, (ii) an unsupported atom, or (iii) an
unfounded support from a loop.

Case (i): let $r$ be an unsatisfied rule in $\Pi \cup
Q_{\hat{I}}^{\overline{\omitt}}$. This means that the constraints
in $Q_{\hat{I}}^{\overline{\omitt}}$ is causing $H(r)$ to be false
while $B(r)$ is satisfied. So in ${\cal T}[\Pi,\hat{I}]$,
according to its definition, $\mi{ab_p(r)}$ becomes true. The
remainder of $B(r)$ after the omission also holds true, i.e.,
$\mi{absAp}(r)$ is true. Thus, the definition of
$\mi{badomit}(\_,\mi{type1})$ is able to catch this case.

Case (ii): let $\alpha$ be an unsupported atom in $\Pi \cup
Q_{\hat{I}}^{\overline{\omitt}}$. First, $\alpha$ must be from
$\overline{A}$ because $Q_{\hat{I}}^{\overline{\omitt}}$ only
restricts the determined value of atoms in $\overline{\omitt}$. In ${\cal
  T}[\Pi,\hat{I}]$, according to its definition, $\mi{ab_c(\alpha)}$
becomes true. The value of $\alpha$ is determined from a changed rule $r$
with $H(r)=\alpha$, due to the remainder of $B(r)$ becoming satisfied
with $\hat{I}$, while originally it is not,
i.e. $bl(r)$ is false and $\mi{absAp}(r)$ is true. The definition
of $\mi{badomit}(\_,\mi{type2})$ is able to catch this case and choose
the omitted atoms in the body to be bad.

Case (iii): if the abstracted version of rule $r$ becomes applicable
in $\mi{omit}(\Pi,A)$, while originally the truth value of $H(r)$
requires to get support via a loop for a model $I$ that matches
$\hat{I}$, then $\mi{ab_l}(\alpha)$ catches this case.  If the loop is
the only support that $H(r)$ can obtain, and if this loop is unfounded
or odd, then this should be reflected in the abstraction of $r$. Thus,
by $\mi{badomit}(\_,\mi{type3})$ this case is caught.
\end{proof}
}
\revisedversion{
The program ${\cal T}[\Pi,\hat{I}]$ always returns an answer set for $\hat{I}$, due to relaxing $\Pi$  by tolerating abnormalities that arise from checking the concreteness for $\hat{I}$.

\begin{prop}
\label{prop:omission-debug-sat}
For each abstract answer set $\hat{I}$ of $\mi{omit}(\Pi,A)$, the program ${\cal T}[\Pi,\hat{I}]$ has an answer set $I$ such that $I \cap \overline{A} = \hat{I}$.
\end{prop}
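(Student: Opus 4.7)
The plan is to construct an answer set $I$ of ${\cal T}[\Pi,\hat{I}]$ with $I\cap\overline{A}=\hat{I}$ by extending $\hat{I}$ to a complete interpretation of $\Lits$ and then setting all meta-atoms canonically. We argue by a case split on whether $\hat{I}$ is concrete or spurious.

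If $\hat{I}$ is concrete, there is some $I^* \in AS(\Pi)$ with $I^*|_{\overline{A}}=\hat{I}$, and therefore $I^* \in AS(\Pi \cup Q_{\hat{I}}^{\overline{A}})$. Theorem~\ref{thm:debug-prop-rel}(1) lifts $I^*$ to an answer set of ${\cal T}_{meta}[\Pi]\cup{\cal T}_P[\Pi]\cup{\cal T}_C[\Pi,\Lits]\cup{\cal T}_\omitt[\Lits]\cup Q_{\hat{I}}^{\overline{A}}$ containing no abnormality atoms; appending the $\mi{badomit}$-atoms (empty in this case) and the loop-auxiliary atoms computed from $\Pi$ via the rules of Figure~\ref{fig:debug-loop} (which are stratified w.r.t.\ the meta-atoms, hence deterministic) yields the required $I \in AS({\cal T}[\Pi,\hat{I}])$.

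If $\hat{I}$ is spurious, we construct $I$ directly. First extend $\hat{I}$ to some $J \supseteq \hat{I}$ over $\Lits$ by a bottom-up fixpoint: set $J|_{\overline{A}} = \hat{I}$ and iteratively add $\alpha \in A$ whenever some $r \in \mi{def}(\alpha,\Pi)$ has body satisfied under the current $J$; since each such $r$ has $H(r) \in A$ and hence $r \notin \Pi_A^c$, no choice-head intervenes in this extension. Then define $I$ as $J$ together with: $ap(n_r)$ iff $J \models B(r)$ and $bl(n_r)$ otherwise; $ko(n_r)$ for $r \in \Pi_A^c$; $ab_p(n_r)$ for $r \in \Pi_A^c$ with $ap(n_r) \in I$ and ($H(r)=\bot$ or $H(r) \notin I$); $ab_c(\alpha)$ for $\alpha \in \overline{A} \cap I$ whose defining rules in $\Pi$ are all blocked; and $ab_l(\alpha)$ (with the corresponding $\mi{faulty}$, $\mi{someFaulty}$ atoms) exactly where needed to justify atoms of $\hat{I}$ whose support in $\Pi$ passes through a loop broken by omission; finally, derive the $\mi{badomit}$ atoms from ${\cal T}_{badomit}$. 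Verifying that $I$ stably satisfies the reduct ${\cal T}[\Pi,\hat{I}]^I$ then reduces to rule-by-rule checks, using that the choice heads $\{H(r)\} \leftarrow ap(n_r)$ in ${\cal T}_P[\Pi]$ and $\{\alpha\} \leftarrow bl(n_{r_1}),\ldots,bl(n_{r_k})$ in ${\cal T}_C[\Pi,\Lits]$ exactly mirror the choice rules introduced by $\mi{omit}(\Pi,A)$, under which $\hat{I}$ is already an answer set.

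The main obstacle is the interplay of $ab_l$-atoms with the added constraint $\leftarrow ab_l(X),\, \mi{not}\ \mi{someFaulty}$: whenever $ab_l(\alpha)$ enters $I$, some $\mi{faulty}$ atom must be derivable. The resolution is that $ab_l(\alpha)$ is needed precisely when $\alpha$'s only support in $\Pi$ cycles through an omitted atom, so the loop-detection rules of Figure~\ref{fig:debug-loop} yield the corresponding $\mi{inOddLoop}$ or $\mi{inPosLoop}$ facts, from which $\mi{faulty}(\alpha)$ and hence $\mi{someFaulty}$ follow, discharging the constraint. A routine minimization argument, showing that every abnormality atom in $I$ is forced by its (positive) defining rule and that no smaller extension on $A$ would satisfy the positive rules of $\Pi$ mirrored by ${\cal T}_{meta}[\Pi]$, completes the verification that $I$ is stable.
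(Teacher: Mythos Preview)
Your approach differs substantially from the paper's. The paper does not split on concrete versus spurious; instead it starts from an \emph{arbitrary} interpretation $X$ with $X\cap\overline{A}=\hat{I}$ and argues by an iterative repair process: whenever $X$ violates a rule of the reduct, it shows how to adjust $X$ (adding or removing atoms, possibly in $A$, or toggling $ab_l$ choices) while preserving $X\cap\overline{A}=\hat{I}$, and whenever $X$ is a non-minimal model of its reduct, it descends to a smaller model. Your concrete case is cleaner than anything in the paper's proof, since you can simply invoke Theorem~\ref{thm:debug-prop-rel}(1) and observe that ${\cal T}_{badomit}$ is stratified on top; that part is fine.

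The spurious case, however, has a genuine gap. Your ``bottom-up fixpoint'' extending $\hat{I}$ to $J$ over $A$ is not well-defined once there is default negation among $A$-atoms, and the canonical result it suggests can fail to yield an answer set. Take $\Pi=\{\,a\leftarrow\naf b;\ b\leftarrow\naf a\,\}$ with $A=\{a,b\}$, so $\hat{I}=\emptyset$. A simultaneous application of your rule (``add $\alpha\in A$ whenever some $r\in\mi{def}(\alpha,\Pi)$ has body satisfied'') produces $J=\{a,b\}$. With this $J$ you get $ap(n_{r_1}),ap(n_{r_2})\notin I$ and $bl(n_{r_1}),bl(n_{r_2})\in I$, so neither $a$ nor $b$ has support via ${\cal T}_{meta}$; since both heads lie in $A$, neither $r_i$ is in $\Pi_A^c$ and there is no $ko$, no ${\cal T}_P$ choice, and no ${\cal T}_C$ rule for them. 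The only remaining support is $a\leftarrow ab_l(a)$, forcing $ab_l(a)\in I$. But the dependency graph of $\Pi$ has only an even negative cycle, so the loop-detection rules of Figure~\ref{fig:debug-loop} yield neither $\mi{inOddLoop}(a,\cdot)$ nor $\mi{inPosLoop}(a,\cdot)$; hence $\mi{faulty}(a)$ and $\mi{someFaulty}$ are underivable and the constraint $\leftarrow ab_l(X),\naf\,\mi{someFaulty}$ kills the candidate. Yet ${\cal T}[\Pi,\hat{I}]$ \emph{does} have an answer set, namely one with $a$ true and $b$ false (or vice versa) and no $ab_l$ atoms at all---your construction simply does not reach it. The same issue makes your description of when $ab_l$ is ``needed'' (``atoms of $\hat{I}$ whose support passes through a loop broken by omission'') too narrow: $ab_l$ may be required for atoms in $A$, and in some configurations must \emph{not} be used even when a loop is present.

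To repair the argument you would need to replace the naive fixpoint by a genuine choice of a stable extension on the $A$-fragment (which exists because rules with head in $A$ are untouched by ${\cal T}_P$ and behave as in $\Pi$, and $ab_l$ gives an escape for odd loops that \emph{are} detected as faulty), or else abandon the explicit construction and argue, as the paper does, by repair from an arbitrary starting point.
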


\begin{proof}
Let $X$ be an interpretation over $\Lits^*_A$ with $X \cap \overline{A} = \hat{I}$. We will show that with the help of the auxiliary rules/atoms, some interpretation $X'$ which is a minimal model of  $\newrev{({\cal T}[\Pi,\hat{I}])^{X'}}{{\cal T}[\Pi,\hat{I}]^{X'}}$ can be reached starting from $X$. 
We have the cases (i) $X \nmodels ({\cal T}[\Pi,\hat{I}])^X$, and (ii) $X \models \newrev{({\cal T}[\Pi,\hat{I}])^X}{{\cal T}[\Pi,\hat{I}]^X}$.
\be[(i)]
\item Let $r$ be a ground unsatisfied rule in $\newrev{({\cal T}[\Pi,\hat{I}])^X}{{\cal T}[\Pi,\hat{I}]^X}$. This means that $X\models B(r)$ and $X\nmodels H(r)$.  
We show that $X$ can be changed to some interpretation $X'$ that avoids the condition  for $X$ not satisfying $r$.
First, observe that since $X \cap \overline{A} = \hat{I}$ we have $X \models (Q_{\hat{I}}^{\overline{\omitt}})^X$. 
\be[(a)]
\item Assume $r$ is in ${\cal T}'^X=({\cal T}_P[\Pi] \cup {\cal T}_C[\Pi,\Lits] \cup {\cal T}_\omitt[\Lits] \cup {\cal T}_{bo})^X$. The rule $r$ cannot be an instantiation of the choice rules in ${\cal T}_P[\Pi] \cup {\cal T}_C[\Pi,\Lits] \cup {\cal T}_\omitt[\Lits]$, as it would be instantiated for $X$, and hence be satisfied. Thus $r$ can either (a-1) have $H(r) \in \mi{AB}_A(\Pi)$ and be in $({\cal T}_P[\Pi] \cup {\cal T}_C[\Pi,\Lits])^X$, (a-2) have $H(r)= \mi{ko}(n_{r'})$ for some $r' \in \Pi$ and be in ${\cal T}_P[\Pi]^X$, (a-3) be in $\newrev{({\cal T}_{bo})^X}{{\cal T}_{bo}^X}$, or (a-4) be of form $\alpha \leftarrow \mi{ab}_l(\alpha)$ for some $\alpha \in {\cal A}$ in ${\cal T}_\omitt[\Lits]^X$.
For cases (a-1),(a-2),(a-3) we can construct $X'=X \cup \{H(r)\}$ so that $X' \models H(r)$ and the reduct ${\cal T}'^{X'}$ will not have further rules. 

As for case (a-4), if $\alpha \in \overline{A}$, this means $\alpha$ is determined to be false by $\hat{I}$, so we construct $X' = (X \setminus \{\mi{ab}_l(\alpha)\}) \cup \{\mi{ab}_l(\alpha)'\}$ so that $r$ does not occur in ${\cal T}'^{X'}$.
If $\alpha \notin \overline{A}$, then we construct $X' = X \cup \{\alpha\}$.
\item Assume $r$ is in $\newrev{(\mathcal{T}_{meta}[\Pi])^X}{\mathcal{T}_{meta}[\Pi]^X}$. 
\be
\item[(b-1)] If the rule is of form $H(r') \leftarrow \mi{ap}(n_{r'}), \mi{not}\ \mi{ko}(n_{r'}).$ where  $B^\pm(r') \cap \omitt \neq \emptyset$\newrev{,}{ and} $H(r') \nsubseteq \omitt$ \newrev{and $H(r') \neq \bot$}{} for some $r' \in \Pi$, this means $\mi{ko}(n_{r'}) \notin X$. However, rules for $r'$ are added in ${\cal T}_P[\Pi]$ which uses the rule $\mi{ko}(n_{r'}).$ to deactivate the meta-rule in $\mathcal{T}_{meta}[\Pi]$, which is then also unsatisfied in the reduct $\newrev{({\cal T}_P[\Pi])^X}{{\cal T}_P[\Pi]^X}$. So we construct $X' = X \cup \{\mi{ko}(n_{r'})\}$. Thus, the rule $r$ does not appear in $\newrev{(\mathcal{T}_{meta}[\Pi])^{X'}}{\mathcal{T}_{meta}[\Pi]^{X'}}$.
\item[(b-2)] Let the rule be of form $H(r') \leftarrow \mi{ap}(n_{r'}), \mi{not}\ \mi{ko}(n_{r'}).$ for some $r' \in \Pi$ different from the one in (b-1). We have $\mi{ap}(n_{r'}) \in X$. Assume $X \models B(r')$. 
If $H(r')=\bot$, then we must have $B^\pm(r') \cap A\neq \emptyset$\newrev{}{ which is handled in (b-1)}, since otherwise $r'$ would occur in  $\mi{omit}(\Pi,A)$ and contradict that $\hat{I}$ is an answer set. 
\newrev{Then if $B^-(r')\cap A\neq \emptyset$, we construct $X' = X\cup\{\alpha\}$ for some $\alpha \in B^-(r')\cap A$; otherwise, we \newrev{}{let} $X' = X\setminus\{\alpha\}$ for some $\alpha \in B^+(r')\cap A$.}{The case $H(r')\neq\bot$ can not occur, since that would mean $r'$ occurs in $\mi{omit}(\Pi,A)$ and by assumption $X$ should satisfy $r'$.}
If $X \nmodels B(r')$, we construct $X'= X\setminus \mi{ap}(n_{r'})$.
\item[(b-3)] If $r$ is of form $\mi{ap}(n_r') \leftarrow \mi{B}(r')$ for some $r' \in \Pi$, then we construct $X' = X \cup \{\mi{ap}(n_r')\}$. If $r$ is of the remaining forms with $bl(n_r')$, we construct $X' = X \cup \{\mi{bl}(n_r')\}$
\ee
\ee
\item If $X$ is a minimal model, then $X$ is an answer set of ${\cal T}[\Pi,\hat{I}]$, which achieves the result. We assume this is not the case, and that there exists $Y \subset X$ such that $Y \models \newrev{({\cal T}[\Pi,\hat{I}])^X}{{\cal T}[\Pi,\hat{I}]^X}$.
So, we have $Y \cap \overline{A} = \hat{I} $. Thus, there exists $\alpha \in X \setminus Y$ such that $\alpha \in \Lits^*_A \setminus \overline{A}$. Assume $\alpha \in A$. Then $\mi{ap}(n_r) \notin Y$ should hold for all $r\in \mi{def}(\alpha,\Pi)$  (to satisfy the corresponding meta-rules in $\mathcal{T}_{meta}[\Pi]^X$). Also $\newrev{({\cal T}_\omitt[\Lits])^Y}{{\cal T}_\omitt[\Lits]^Y}$ does not contain the rule $\alpha \leftarrow \mi{ab_l}(\alpha)$ (since otherwise it would not be satisfied). So we have $ab_l(\alpha) \notin Y$, but then we get $ab_l(\alpha)' \in Y \setminus X$ which is a contradiction.

If the case $\alpha \in \Lits^*_A \setminus \Lits$ occurs, then we pick $Y$ as the interpretation. If $\alpha \in \mi{AB}_A(\Pi) \cup \mi{HB}_{{\cal T}_{bo}}$, then the reduct $\newrev{({\cal T}[\Pi,\hat{I}])^{Y}}{{\cal T}[\Pi,\hat{I}]^{Y}}$ will not have further rules. If $\alpha \in \Lits^+ \setminus \Lits$, then we  apply the above reasoning for $Y$. 
When we recursively continue with this reasoning, eventually, this case will not be
applicable, and thus we \newrev{would}{can} construct a minimal
model. \hfill\proofbox
\ee

\end{proof}

The following result shows that ${\cal T}[\Pi,\hat{I}]$ flags in its answer
sets always bad omission of atoms, which can be utilized for refinement.

\begin{prop}
If the abstract answer set $\hat{I}$  of $\mi{omit}(\Pi,A)$ is spurious, then for every answer set $S \in
AS({\cal T}[\Pi,$ $\hat{I}])$, $\mi{badomit}(\alpha,i) \in
S$  for some $\alpha \in \omitt$ and $i\,{\in}\,\{\mi{type1},\mi{type2},\mi{type3}\}$.
\end{prop}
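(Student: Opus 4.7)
I would argue by contraposition: assume $S \in AS({\cal T}[\Pi,\hat{I}])$ contains no atom of the form $\mi{badomit}(\alpha,i)$, and derive that $\hat{I}$ is not spurious. The route is to chain through Theorem~\ref{thm:debug-prop-rel}(2) and Proposition~\ref{prop:query-program}: I will show that the absence of $\mi{badomit}$-atoms forces $(S \cap \mi{AB}_A(\Pi))=\emptyset$, from which Theorem~\ref{thm:debug-prop-rel}(2) yields that $S\cap\Lits$ is an answer set of $\Pi\cup Q_{\hat{I}}^{\overline{A}}$, contradicting spuriousness.

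The core technical step is the propagation claim: no $\mi{badomit}$ in $S$ implies no abnormality atoms in $S$. I would handle the three abnormality types in turn. For $\mi{ab_p}(n_r)\in S$ with $r\in\Pi_A^c$, Proposition~\ref{prop:brain2007debuging} gives $S\models B(r)$, hence $S$ satisfies the projected body $m_A(B(r))$, so $\mi{absAp}(r)$ is derived; together with $\mi{modified}(r)$ and $\mi{omittedAtomFrom}(X,r)$ for some $X\in A$, both immediate from $r\in\Pi_A^c$, the type1 rule fires. For $\mi{ab_c}(\alpha)\in S$ with $\alpha\in\Lits\setminus A$, Proposition~\ref{prop:debug-aux-basic} yields $\alpha\in S$ while every original defining rule is blocked; the only way $\alpha$ can still be supported in $S$ is via a choice atom added by ${\cal T}_C[\Pi,\Lits]$ or by ${\cal T}_P[\Pi]$ for some $r\in\mi{def}(\alpha,\Pi)$ with $\mi{changed}(r)$ and whose abstract body is satisfied, so $\mi{absAp}(r)$ holds and the type2 rule fires. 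For $\mi{ab_l}(\alpha)\in S$, the constraint $\leftarrow \mi{ab_l}(X),\mi{not}\ \mi{someFaulty}$ forces $\mi{someFaulty}\in S$, hence some $\mi{faulty}(\beta)\in S$; unfolding the $\mi{faulty}$ rule then gives an omitted $X_1$ in the body of a modified defining rule of an atom on the detected loop (odd or positive), so $\mi{badomit}(X_1,\mi{type3})\in S$.

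With $(S\cap\mi{AB}_A(\Pi))=\emptyset$ established, I invoke Theorem~\ref{thm:debug-prop-rel}(2) to get $S\cap\Lits \in AS(\Pi\cup Q_{\hat{I}}^{\overline{A}})$, and Proposition~\ref{prop:query-program} then yields the contradiction to spuriousness. Note that existence of $S$ itself is guaranteed by Proposition~\ref{prop:omission-debug-sat}, so the statement is not vacuously true.

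The main obstacle I anticipate lies in the type2 and type3 cases, because there I must rule out ``parasitic'' justifications. For type2, I have to argue that the true atom $\alpha\in\Lits\setminus A$ with $\mi{ab_c}(\alpha)\in S$ is not already justified by an $\mi{ab_l}$-channel (which would only give type3) or by some fully-omitted rule (which is excluded because $\alpha\notin A$, so defining rules of $\alpha$ are never discarded by case (c) of the omission construction, ensuring $\mi{changed}(r)$ is the correct flag). For type3, I need to verify that the loop-detection encoding in Figure~\ref{fig:debug-loop} together with the absence of $\mi{ab_c}(\alpha)$ pins down at least one pair $(r,X_1)$ with $H(r)$ on the loop, $r\in\Pi_A^c$, and $X_1\in B^\pm(r)\cap A$. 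This bookkeeping, using that $\mi{modified}$ is strictly weaker than $\mi{changed}$, is the delicate point of the argument.
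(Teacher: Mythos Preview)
Your contrapositive framing and the overall skeleton---use Theorem~\ref{thm:debug-prop-rel}(2) to reduce to showing that abnormality atoms force $\mi{badomit}$ atoms, then do a case split on $\mi{ab_p}$, $\mi{ab_c}$, $\mi{ab_l}$---is exactly the paper's approach; the paper just phrases it directly (``$S \cap \mi{AB}_A(\Pi)\neq\emptyset$, now case-analyse'') rather than contrapositively.

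There is one substantive gap in your type2 case. You argue that when $\mi{ab_c}(\alpha)\in S$, the atom $\alpha$ must be ``supported in $S$'' via some changed rule $r$ whose abstract body is satisfied, hence $\mi{absAp}(r)$. But the meta-program support for $\alpha$ in $S$ is precisely the ${\cal T}_C$ choice rule $\{\alpha\}\leftarrow \mi{bl}(n_{r_1}),\ldots,\mi{bl}(n_{r_k})$, which carries no information about any $r$ being changed or abstractly applicable. The paper closes this gap differently: from $\alpha\in S\cap\overline{A}$ and the query $Q_{\hat{I}}^{\overline{A}}$ it gets $\alpha\in\hat{I}$; since $\hat{I}\in AS(\mi{omit}(\Pi,A))$, some rule in $\mi{omit}(\Pi,A)$ with head $\alpha$ has its body satisfied by $\hat{I}$, i.e., $\mi{absAp}(n_{r_i})$ holds for some $r_i\in\mi{def}(\alpha,\Pi)$. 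As $\mi{bl}(n_{r_i})\in S$ yet the abstract body is satisfied, $B^\pm(r_i)\cap A\neq\emptyset$ follows, giving $\mi{changed}(r_i)$. So the missing ingredient is to go through the \emph{abstract} program $\mi{omit}(\Pi,A)$ and the fact that $\hat{I}$ is an answer set of it, not through support inside ${\cal T}[\Pi,\hat{I}]$.

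For the type3 case your use of the constraint $\leftarrow \mi{ab_l}(X),\mi{not}\ \mi{someFaulty}$ to force $\mi{faulty}(\beta)\in S$ is a clean syntactic shortcut; the paper instead argues semantically about an actual odd or unfounded loop $L$ in $\Pi$ and traces it to a modified rule. Both routes face the ``bookkeeping'' obstacle you flag---establishing that some defining rule of the loop atom is modified and abstractly applicable---and the paper again resolves it by appealing to $\hat{I}$ being an answer set of $\mi{omit}(\Pi,A)$, which you should incorporate here as well.
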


\begin{proof}
Note that by Proposition~\ref{prop:query-program} we know that the program $\Pi \,\cup\, Q_{\hat{I}}^{\overline{A}}$ is unsatisfiable. Thus $S \cap \Lits$ is not an answer set of $\Pi \cup Q_{\hat{I}}^{\overline{A}}$.
By Theorem~\ref{thm:debug-prop-rel}, we know that having $S \cap \mi{AB}_A(\Pi)=\emptyset$ contradicts with the spuriousness of $\hat{I}$. Thus, we have $S \cap \mi{AB}_A(\Pi)\neq \emptyset$. 
\be[(a)]
\item If $ab_p(n_r) \in S$ for some rule $r \in \Pi$, then \newrev{}{either} the rule $\newrev{\mi{ap}}{ab_p}(n_r) \lars \mi{ap}(n_r), \mi{not}\ H(r)$ is in $({\cal T}[\Pi,\hat{I}])^{S}$, i.e., $\mi{ap}(n_r) \in S$ and $H(r) \notin S$\newrev{}{, or the rule $ab_p(n_r) \lars \mi{ap}(n_r)$ is in $({\cal T}[\Pi,\hat{I}])^{S}$, i.e., $\mi{ap}(n_r) \in S$}. This unsatisfied rule is then a  reason for $S \cap \Lits$ not being an answer set of $\Pi \cup Q_{\hat{I}}^{\overline{A}}$.
Since $B^\pm(r) \cap A \neq \emptyset$, \newrev{for the changed rule $\hat{r} \in \mi{omit}(\Pi,A)$, we have $B(\hat{r})=B(r) \setminus A$ and thus $S \models B(\hat{r})$}{we have $S \models B(r) \setminus A$}, i.e., the auxiliary atom $\mi{absAp}(n_r)$ is true. 
Then by definition, $\mi{badomit}(\alpha,\mi{type1}) \in S$ for $\alpha \in B^\pm(r) \cap A$.
\item If $ab_c(\alpha) \in S$ for some atom $\alpha \in \overline{A}$, then the rule $ab_c(\alpha) \lars \alpha,\mi{bl}(n_{r_1}),\dots,\mi{bl}(n_{r_k})$, for $\mi{def}(\alpha,\Pi)=\{r_1,\dots,r_k\}$, is in $({\cal T}[\Pi,\hat{I}])^{S}$, i.e., $\alpha \in S$ and $\mi{bl}(n_{r_1}),\dots,\mi{bl}(n_{r_k}) \in S$. This unsupported atom $\alpha$ is then a  reason for $S \cap \Lits$ not being an answer set of $\Pi \cup Q_{\hat{I}}^{\overline{A}}$.
We know that $\alpha$ is also in $\hat{I}$, due to $S \models
(Q_{\hat{I}}^{\overline{\omitt}})^{S}$. This means that the
abstraction $\hat{r}_i$ of some rule ${r_i}$ is in
$\mi{omit}(\Pi,A)^{\hat{I}}$, i.e., the auxiliary atom
$\mi{absAp}(n_{r_i})$ is true, while $bl(n_{r_i}) \in S$. Thus $B^\pm(r) \cap A \neq \emptyset$ must hold. Then by definition, $\mi{badomit}(\alpha',\mi{type2}) \in S$ for $\alpha' \in \rev{B(r)}{B^\pm(r)} \cap A$.
\item If $ab_l(\alpha) \in S$ for some atom $\alpha \in \Lits$, then $\alpha \in S$ and $ab_c(\alpha) \notin S$. 
Assume that $S \cap \Lits$ is not an answer set of $\Pi \cup Q_{\hat{I}}^{\overline{A}}$ due to an odd or unfounded loop $L$ containing $\alpha$.

We distinguish the cases for $\alpha$. Let $\alpha \in \overline{A}$. As $ab_c(\alpha) \notin S$, for some rule $r_i$ in $\mi{def}(\alpha,\Pi)$ we have $\mi{bl}(n_{r_i}) \notin S$, i.e., $\mi{ap}(n_{r_i}) \in S$ and thus $S \models B(r_i)$.
 We know that $\alpha$ is also in $\hat{I}$, and since $\hat{I}$ is an answer set of $\mi{omit}(\Pi,A)$, we conclude that there exists some $\alpha' \in B^\pm(r) \cap A$ such that $\alpha' \in L$. This way, for the abstract rule $\hat{r}_i$ we have $\hat{I} \models B(\hat{r}_i)$, i.e., the auxiliary atom $\mi{absAp}(n_{r_i})$ is true. By definition, we get $\mi{badomit}(\alpha',\mi{type3}) \in S$.
 
Now, let $\alpha \in A$. Then each rule $r_i$ in $\mi{def}(\alpha,\Pi)$
with $B^\pm(n_{r_i}) \cap L \neq \emptyset$ is omitted. Say $\alpha'
\in B^\pm(n_{r_i}) \cap L$. If $\alpha' \in \overline{A}$, by the
above reasoning we get  $\mi{badomit}(\alpha'',\mi{type3}) \in S$ for
some $\alpha'' \in B^\pm(r') \cap A$, for $r' \in
\mi{def}(\alpha',\Pi)$. If $\alpha' \in A$, we recursively do the same
reasoning. Since $L$ is a loop, eventually we reach a rule $r_{i_m}$
with $\alpha \in B^\pm(r_{i_m})$.
 Without loss of generality,
$\newrev{H(r^{(m)}_i)}{H(r_{i_m})}$ is unfoundedly true due to $S\models B(\newrev{\hat{r_{i_m}}}{\hat{r}_{i_m}})$;
as $\alpha$ is omitted from $r_{i_m}$, we thus get $\mi{badomit}(\alpha,\mi{type3}) \in S$.
If there is no such loop $L$ with $\alpha \in L$, then case (a) or (b)
applies for $S \cap \Lits$ not being an answer set of $\Pi \cup
Q_{\hat{I}}^{\overline{A}}$. \hfill\proofbox
\ee
\end{proof}
}

\oldversion{
The badly omitted atoms $\omitt_o\subseteq \omitt$ w.r.t a spurious $\hat{I} \in AS(\mi{omit}(\Pi,\omitt))$ are added back to refine $m_A$. If $\hat{I}$ still occurs in the refined program $\mi{omit}(\Pi,\omitt\setminus \omitt_o)$, i.e., some $\hat{I}' {\in} AS(\mi{omit}(\Pi,\omitt\setminus \omitt_o))$ with $\hat{I}'|_{\ol{\omitt}}{=}\hat{I}$ exists, then ${\cal T}[\Pi,\hat{I}']$ finds another possible bad omission. In the worst case, all omitted atoms $\omitt$ are put back to eliminate $\hat{I}$. 
}
\revisedversion{
The badly omitted atoms $\omitt_o\subseteq \omitt$ w.r.t. a spurious $\hat{I} \in AS(\mi{omit}(\Pi,\omitt))$ are added back to refine $m_A$. If $\hat{I}$ still occurs in the refined program $\mi{omit}(\Pi,\omitt\setminus \omitt_o)$, i.e., some $\hat{I}' {\in} AS(\mi{omit}(\Pi,\omitt\setminus \omitt_o))$ with $\hat{I}'|_{\ol{\omitt}}{=}\hat{I}$ exists, then ${\cal T}[\Pi,\hat{I}']$ finds another possible bad omission. In the worst case, all omitted atoms $\omitt$ are put back to eliminate $\hat{I}$. 

Let $A_0 = A$ and $A_{i+1}=A_i\setminus \mi{BA}_i$, where $\mi{BA}_i$ are the badly omitted atoms for $\mi{omit}(\Pi,A_i)$ w.r.t. an abstract answer set $\hat{I}_i$ of $\mi{omit}(\Pi,A_i)$.
}
\begin{cor}
\oldversion{
After at most $|\omitt|$ iterations of the program, the spurious
answer set will no longer occur.
}\revisedversion{
For a spurious answer set $\hat{I}$, after at most $k=|\omitt|$ steps, $\mi{omit}(\Pi,A_k)$ will have no answer set that matches $\hat{I}$.
}
\end{cor}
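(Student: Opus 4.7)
The plan is to show by induction on $i$ that as long as some answer set of $\mi{omit}(\Pi, A_i)$ still matches $\hat{I}$, the omission set $A_i$ strictly shrinks, so the budget of $k = |A|$ steps is sufficient.

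First, I would argue that at every iteration $i$ for which an answer set $\hat{I}_i \in AS(\mi{omit}(\Pi, A_i))$ with $\hat{I}_i|_{\overline{A}} = \hat{I}$ exists, $\hat{I}_i$ is itself spurious in $\mi{omit}(\Pi, A_i)$. This is a direct instance of Proposition~\ref{prop:convex} with $A' = A'' = A_i$: a refined answer set projecting down to the originally spurious $\hat{I}$ must itself be spurious. Having established spuriousness, I would invoke the preceding proposition on $\hat{I}_i$: every answer set of $\mathcal{T}[\Pi, \hat{I}_i]$ flags at least one $\mi{badomit}(\alpha, \mi{type}\,j)$ atom, so $\mi{BA}_i \neq \emptyset$, and consequently $|A_{i+1}| \leq |A_i| - 1$.

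Second, since $|A_0| = k$ and the size of $A_i$ strictly decreases at every refinement step that still witnesses a matching answer set, after at most $k$ iterations either the process has already terminated (because no answer set of $\mi{omit}(\Pi, A_i)$ matches $\hat{I}$), or we have exhausted $A$ entirely, i.e., $A_k = \emptyset$. In the latter case Proposition~\ref{prop:gen}(i) gives $\mi{omit}(\Pi, A_k) = \Pi$, and since $\hat{I}$ is spurious in $\mi{omit}(\Pi, A)$ by assumption, by Definition~\ref{def:concrete} no $I \in AS(\Pi)$ satisfies $I|_{\overline{A}} = \hat{I}$; hence $\mi{omit}(\Pi, A_k)$ has no matching answer set either.

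The main step requiring care is the convexity appeal: one must ensure that every matching answer set encountered at a refined stage is itself spurious, so that the previous proposition's guarantee of a nonempty $\mi{BA}_i$ is applicable throughout the iteration. Once this is secured, the bound of $|A|$ steps follows by a straightforward counting argument on the strictly decreasing sequence $|A_0| > |A_1| > \cdots$ until either termination occurs or the omission set becomes empty.
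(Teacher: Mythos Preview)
Your argument is correct and is precisely the reasoning the paper intends; the corollary is stated without proof in the paper, as an immediate consequence of the preceding proposition together with the convexity of spuriousness. One small point worth making explicit: when you infer $\mi{BA}_i \neq \emptyset$ from the fact that every answer set of $\mathcal{T}[\Pi,\hat{I}_i]$ contains a $\mi{badomit}$ atom, you tacitly use that $\mathcal{T}[\Pi,\hat{I}_i]$ has at least one answer set, which is guaranteed by Proposition~\ref{prop:omission-debug-sat}.
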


Adding back a badly omitted atom may cause a previously omitted rule
to appear as a changed rule in the refined program. Due to this choice
rule, the spurious answer set might not get eliminated. To give a
(better) upper bound for the number of required iterations in order to
eliminate a spurious answer set, a trace of the dependencies among the
omitted rules is needed.

The \emph{rule dependency graph} of $\Pi$, denoted $G_\Pi^{rule} = (V,E)$,
shows the positive/negative dependencies similarly as in $G_\Pi$, but
at a rule-level, where the vertices $V$ are rules $r \in \Pi$
and an edge from $r$ to $r'$ exists in $E$ if $H(r')\in B^\pm(r)$
holds, which is negative if $H(r')\in B^-(r)$ and positive otherwise.
For a set $\omitt$ of atoms, $n_\omitt$ denotes the maximum length of
a (non-cyclic) path in $G_\Pi^{rule}$ from some rule $r$ with $\rev{B(r)}{B^\pm(r)}
\cap \omitt \neq \emptyset$ backwards through rules $r'$ with $H(r')
\in\omitt$.  The number $n_\omitt$ shows the maximum
level of direct or indirect dependency between omitted atoms and their
respective rules.

\begin{prop}
\oldversion{
Given a program $\Pi$, a set $\omitt$ of atoms, and a spurious $\hat{I} \in AS(\mi{omit}(\Pi,\omitt))$, 
after at most $n_{\omitt}$ iterations of finding a bad omission with ${\cal T}[\Pi,\hat{I}]$ and refinement, 
no abstract answer set matching $\hat{I}$ will occur.
}\revisedversion{
Given a program $\Pi$, a set $\omitt$ of atoms, and a spurious $\hat{I} \in AS(\mi{omit}(\Pi,\omitt))$, $\mi{omit}(\Pi,A_i)$ will have no abstract answer set matching $\hat{I}$
after at most $i=n_{\omitt}$ iterations.
}
\end{prop}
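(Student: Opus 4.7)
The plan is to argue by induction on $n_\omitt$ that each refinement step strictly reduces the depth of the ``bad'' dependency chain witnessing the spuriousness of $\hat{I}$. First I would fix notation: for any $B\subseteq \omitt$, let $d_B$ denote the maximum length of a path in $G_\Pi^{\mi{rule}}$ starting at a rule $r$ with $B^\pm(r)\cap B\neq \emptyset$ and going backward through rules $r'$ with $H(r')\in B$, so that $d_{\omitt}=n_\omitt$. Observe that the sets $A_0\supseteq A_1\supseteq\cdots$ generated by successive refinements are monotonically decreasing, hence $d_{A_0}\geq d_{A_1}\geq\cdots$.

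Next I would establish the \emph{descent step}: if $\mi{omit}(\Pi,A_i)$ still admits an abstract answer set $\hat{I}_i$ projecting to $\hat{I}$, then $d_{A_{i+1}}\leq d_{A_i}-1$. The argument relies on the preceding proposition: since $\hat{I}_i$ is spurious (by Proposition~\ref{prop:convex} applied to the original spurious $\hat{I}$), ${\cal T}[\Pi,\hat{I}_i]$ produces some $\mi{badomit}(\alpha,\mi{type}i)\in S$. I would then argue that this $\alpha$ can be chosen on the ``deepest'' rule of the witnessing path: for \mi{type1} and \mi{type2}, the bad omission is pinpointed in the body of a rule whose head is \emph{not} omitted (by the definitions of ${\cal T}_P[\Pi]$ and ${\cal T}_C[\Pi,\Lits]$), and for \mi{type3} the atom pinpointed is the omitted atom occurring in the body of a changed rule whose head triggers $\mi{ab_l}$. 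In each case, putting $\alpha$ back removes at least one edge from every maximum-length backward path in $G_\Pi^{\mi{rule}}$ that uses $\alpha$, since paths continue only through rules $r'$ with $H(r')\in A_i$, and $\alpha$ is now excluded. Hence the maximum length strictly decreases.

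Combining these, after at most $n_\omitt$ iterations one reaches $d_{A_{n_\omitt}}=0$, meaning no rule in $\Pi$ has a body atom from $A_{n_\omitt}$ whose dependency backward through omitted heads could trigger one of the three spurious behaviors of Definition~\ref{def:choice}/spurious loop-behavior. At that point, the debugging program ${\cal T}[\Pi,\hat{I}]$ would have nothing left to flag, so by the preceding proposition no answer set $\hat{I}'$ of $\mi{omit}(\Pi,A_{n_\omitt})$ can satisfy $\hat{I}'|_{\ol{\omitt}}=\hat{I}$, as this would contradict the fact that every spurious match produces some $\mi{badomit}(\alpha,\cdot)$ with $\alpha\in A_{n_\omitt}$ and a corresponding nontrivial backward chain.

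The main obstacle I foresee is the descent step. It is technically delicate to show that the bad omission returned by the meta-program always lies on a maximum-depth backward path, rather than at a shallower one that could leave the depth unchanged. I would handle this by a case analysis on the three \mi{badomit} types, using Proposition~\ref{prop:debug-aux-basic} and the structure of ${\cal T}_{badomit}$ to trace the witness rule back through $\mi{absAp}$, $\mi{ab_p}$, $\mi{ab_c}$, $\mi{ab_l}$ to the rule at the tip of the longest chain; for \mi{type3} in particular I would exploit that $\mi{inOddLoop}$/$\mi{inPosLoop}$ only fire when the loop passes through an omitted atom, which must still lie in $A_i$ and thus extends the chain.
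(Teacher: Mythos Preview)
Your proposal has the right overall shape—tracking a chain in $G_\Pi^{\mi{rule}}$ and arguing it shrinks by one per iteration—but the descent step is set up incorrectly in two respects.

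First, you locate the bad omission at the wrong end of the chain. The atom $\alpha$ flagged by ${\cal T}_{badomit}$ always lies in the body of a rule $r_0$ whose head is \emph{not} omitted (this is built into the definitions of $\mi{ab_p}$, $\mi{ab_c}$, and the $\mi{modified}/\mi{changed}$ guards), so $r_0$ is the \emph{starting} rule of the backward path, not its deep end. After $\alpha$ is put back, $r_0$ reverts to its original form, and the rule $r_1$ with $H(r_1)=\alpha$ (previously omitted) now becomes a choice rule—it is the new front of a chain of length $n_\omitt-1$. The paper's proof makes exactly this move: it fixes a witnessing path $r_0,r_1,\ldots,r_{n_\omitt}$ and shows that each refinement step advances the front from $r_i$ to $r_{i+1}$.

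Second, your global descent $d_{A_{i+1}}\leq d_{A_i}-1$ is not provable as stated: putting back one $\alpha$ leaves every maximum-length path that does not pass through $\alpha$ untouched, so the overall maximum need not drop. The paper does not attempt a global descent; it argues along a single chain and then remarks (immediately after the proof) that when several dependency paths cause the spuriousness simultaneously, the set of $\mi{badomit}$ atoms returned in one step contains atoms from each of them, so the chains are shortened in parallel. Your induction would need this parallel-refinement observation to close the gap, and the case analysis you sketch on the three $\mi{badomit}$ types does not supply it.
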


\begin{proof}
Let $r_0$ be a rule with $\alpha \in \rev{B(r_0)}{B^\pm(r_0)} \cap \omitt$ that is
changed to a choice rule due to $m_A$. Let
$r_0,r_1,\dots,r_{n_\omitt}$ be a dependency path in
$G^{\mathit{rule}}_\Pi$ where $H(r_i) \cap
\omitt \neq \emptyset$ and $\rev{B(r_i)}{B^\pm(r_i)} \cap \omitt
\neq \emptyset,$ $0\,{\leq}\, i \,{<}\, n_\omitt$.  Let $\hat{I} {\in}
AS(\mi{omit}(\Pi,$ $\omitt))$, assume $r_0$ has spurious behavior
w.r.t. $\hat{I}$, and w.l.o.g.\ assume $\hat{I} \models
B(r_i)\setminus \omitt$ for all $i{\leq} n_\omitt$.

Due to inconsistency via $r_0$, $\mi{badomit}(\alpha)\in AS({\cal
  T}[\Pi,\hat{I}])$.  For $\omitt'{=}\omitt\setminus \{\alpha\}$,
$m_{\omitt'}(r_0)$ is unchanged, while $m_{\omitt'}(r_1)$ becomes a
choice rule (with $n_\omitt{-}1$ dependencies left).  Thus, some $I' \in
AS(\mi{omit}(\Pi,\omitt'))$ with
$I'|_{\overline{\omitt}}\,{=}\,\hat{I}$ can still exist. Since
$r_1$ introduces spuriousness w.r.t. $I'$, there is
$\mi{badomit}(\alpha')\in AS({\cal T}[\Pi,I'])$ for $\alpha' \in
\rev{B(r_1)}{B^\pm(r_1)} \cap \omitt'$.

By iterating this process $n_\omitt$ times, all omitted rules on which $r_0$ depends are traced and eventually no abstract answer set matching $\hat{I}$ occurs. 
\end{proof}

We remark that in case more than one
dependency path $r_0,\ldots, r_{n_A}$ with several rules causing
inconsistencies exists, the
returned set of $\mi{badomit}$s from ${\cal T}[\Pi,\hat{I}]$
allows one to refine the rules in parallel.

Recall that Proposition~\ref{prop:eliminate} ensures that adding back further omitted atoms will not reintroduce a spurious answer set. Further heuristics on the determination of bad omission atoms can be applied in order to ensure that a spurious answer set is eliminated in one step. \revisedversion{This will be further elaborated in a discussion in Section~\ref{subsubsec:badomit}}

\section{Application: Catching Unsatisfiability Reasons of Programs}
\label{sec:evaluation}

In this section, we consider as an application case the use of
abstraction in finding a cause of unsatisfiability for an ASP
program. To this end, we first introduce the notion of blocker sets
for understanding which of the atoms are causing the unsatisfiability.

After describing the implementation, we report about our experiments
where the aim was to observe the use of abstraction and refinement for
achieving an over-approximation of a program that is still
unsatisfiable and to compute the $\subseteq$-minimal blockers of the
programs, which projects away the part that is unnecessary for the
unsatisfiability.

\subsection{Blocker Sets of Unsatisfiable Programs}
\label{sec:reasons}

If a program $\Pi$ has
no answer sets, we can obtain by omitting sufficiently many atoms from
it an abstract program that has some abstract
answer set. By Proposition~\ref{prop:gen}-(iv), any such answer set
will be spurious. On the other hand, as long as the abstracted program
has no answer sets, by Proposition~\ref{prop:gen}-(iii) also the
original program $\Pi$ has no answer set. This motivates us to use omission abstraction 
in order to catch a ``real'' cause of inconsistency in a program. To
this end, we introduce the following notion.
\begin{defn}
A set $C \subseteq \Lits$ of atoms is an \emph{(answer set) blocker set} of $\Pi$ if 
$AS(\mi{omit}(\Pi,\Lits\setminus C))=\emptyset$.
\end{defn}
In other words, when we keep the set $C$ of atoms and omit the rest
from $\Pi$ 
to obtain the abstract program $\Pi'$, then the latter is still
unsatisfiable. This means that the atoms in $C$ are {\em blocking}\footnote{\revisedversion{Note that this concept of blocking is different from the notion of ``blocked clauses'' used in SAT. There, the removal of a blocked clause preserves (un)satisfiability, and this simplification does not necessarily reduce the vocabulary. In our case, the blocker set is the non-omitted set of atoms that remain in the program and preserve unsatisfiability.}}\/ the occurrence
of answer sets: no answer set is possible as long as all these atoms
are present in the program, regardless of how the omitted atoms will
be evaluated in building an answer set.

\begin{exmp}[Example \ref{ex:main} continued]
\label{ex:blocker}
Modify $\Pi$ by changing the last rule to $b \leftarrow \mi{not}\ b.$,
in order to have a program $\Pi'$ which is unsatisfiable. Omitting the set
$A=\{d\}$ from $\Pi'$ creates the abstract program
$\widehat{\Pi}'_{\overline{\{d\}}}$ which is still unsatisfiable. Thus, 
the set $C=\Lits\setminus\omitt=\{a,b,c\}$ is a blocker set of
$\Pi'$. This is similar for omitting the set $A=\{a,c\}$, which then
causes to have $C=\{d,b\}$ as a blocker set of $\Pi'$.
\begin{center}
{
\begin{tabular}{l|l|l}
$\Pi'$ & $\widehat{\Pi}'_{\overline{\{d\}}}$ & $\widehat{\Pi}'_{\overline{\{a,c\}}}$\\
\cline{1-3}
$c \leftarrow \mi{not}\ d.$ & $\{c\}.$ &\\
$d \leftarrow \mi{not}\ c.$ &&$\{d\}$\\
$a \leftarrow \mi{not}\ b,c.$& $a \leftarrow \mi{not}\ b, c.$&\\
$b \leftarrow \mi{not}\ b.$& $b \leftarrow \mi{not}\ b.$&$b \leftarrow \mi{not}\ b.$\\
\cline{1-3}
unsatisfiable&unsatisfiable&unsatisfiable
\end{tabular}
}
\end{center}
\end{exmp}

Notice that $C=\Lits$, i.e., no atom is
omitted, is trivially a blocker set if $\Pi$ is unsatisfiable, while
$C=\emptyset$, i.e., all atoms are omitted, is never a blocker set
since $AS(\mi{omit}(\Pi,\Lits))=\{\emptyset\}$.

We can view a blocker set as an {\em explanation}\/ of
unsatisfiability; by applying Occam's razor, simpler explanations are
preferred, which in pure logical terms motivates the following notion.

\begin{defn}
A blocker set $C \subseteq \Lits$ is $\subset$-\emph{minimal} if for all $C'\subset C$, 
$AS(\mi{omit}(\Pi,\Lits\setminus C'))\neq \emptyset$.
\end{defn}

By Proposition~\ref{prop:unsatrefsafe}, in order to test whether a blocker set $C$ is minimal, we only
need to check whether for no $C'=C\setminus\{c\}$, for $c\in C$, the abstraction
$\mi{omit}(\Pi,\Lits\setminus C')$ has an answer set.
That is, for a minimal blocker set $C$, we have that $\Lits\setminus
C$ is a \emph{maximal unsatisfiable abstraction}, i.e., a maximal set
of atoms that can be omitted while keeping the unsatisfiability of
$\Pi$.

\begin{exmp}[Example \ref{ex:blocker} continued]
The program $\Pi'$ has the single minimal blocker set $C=\{b\}$. Indeed, the
rule $b \leftarrow \mi{not}\,b$ does not admit an answer
set. Thus, every blocker set must contain $b$, and $C$ is the smallest
such set.
\end{exmp}

We remark that the atoms occurring in the blocker sets are intuitively
the ones responsible for the unsatisfiability of the program. In order to
observe the reason of unsatisfiability, one has to look at the
remaining abstract program. For this, we consider the notion of
\emph{blocker rule set} associated with a blocker set $C$, which are
the rules that remain in $\mi{omit}(\Pi,\Lits\setminus C)$. For
example, the programs $\Pi', \widehat{\Pi}'_{\overline{\{d\}}}$ and
$\widehat{\Pi}'_{\overline{\{a,c\}}}$ in Example~\ref{ex:blocker}
contain the blocker rule sets associated with $\{a,b,c,d\},\{a,b,c\}$
and $\{b,d\}$, respectively. Here, the abstract programs contain
choice rules due to the omission in the body, and the unsatisfiability
of the programs shows that the evaluation of the respective rule does
not make a difference for unsatisfiability. In other words, whether
these rules are projected to the original rules by removing the
choice, e.g. $\{c\}.$ in $\widehat{\Pi}'_{\overline{\{d\}}}$ gets
changed to $c.$, or whether they are converted into constraints,
e.g.\ $\leftarrow \mi{not}\ c$, the program will still be unsatisfiable.

Example \ref{ex:blocker} illustrated a simple reason for
unsatisfiability. However, the introduced notion is also able to
capture more complex reasons of unsatisfiability that involve multiple
rules
related with each other, which is illustrated in the next example.

\begin{figure}
\caption[aa]{
Program for 2-colorability (adapted from the coloring encoding in the ASP Competition 2013)%
\newrev{}{\footnotemark}
}
\label{fig:2-col-prog}
\begin{align*}
\mi{color}(red).\ \ \mi{color}(green). \\
\mi{\{chosenColor}(N,C)\} \leftarrow  \mi{node}(N), \mi{color}(C).\\
 \mi{colored}(N) \leftarrow \mi{chosenColor}(N,C). \\
 \leftarrow \mi{not}\ \mi{colored}(N), \mi{node}(N).\\
 \leftarrow \mi{chosenColor}(N,C_1),\mi{chosenColor}(N,C_2),C_1 \neq C_2.\\
 \leftarrow \mi{chosenColor}(N_1,C),\mi{chosenColor}(N_2,C), \mi{edge}(N_1,N_2).
\end{align*} 
\end{figure}

\newrev{}{\footnotetext{\newrevisedversion{This natural encoding was
changed to a different, more technical one in later editions of the ASP Competition~\cite{gebser2015design,gebser2017design}.}}}
\begin{exmp}[Graph coloring]
\label{ex:graph}
Consider coloring the graph shown in
Figure~\ref{fig:unsatgraph}\rev{}{(a)} with two colors green and red. Due to the clique formed by the nodes
$1,2,3$, it is not 2-colorable. A respective  encoding is shown in Figure~\ref{fig:2-col-prog}, which  
for the given graph reduces by grounding and elimination of facts to
the following rules, where $n
{\in} \{1,\dots,9\}$, and $c,c_1,c_2 {\in} \{red,green\}$:
\begin{align}
&\{\mi{chosenColor}(n,c)\}. \nonumber \\
& \mi{colored}(n) \leftarrow \mi{chosenColor}(n,c). \nonumber\\
& \leftarrow \mi{not}\ \mi{colored}(n).\nonumber\\
& \leftarrow
  \mi{chosenColor}(n\!,c_1),\mi{chosenColor}(n\!,c_2),c_1{\neq}
  c_2. \nonumber\\
& \leftarrow \mi{chosenColor}(n_1,c),\mi{chosenColor}(n_2,c). \qquad
  \text{ nodes $n_1,n_2$ are adjacent } \nonumber
\end{align}
Omitting a node $n$ in the graph means to omit all
ground atoms related to $n$; omitting all nodes except $1,2,3$
gives us a blocker set with the corresponding blocker rule set shown
in Figure~\ref{fig:rules}.
This abstract program is unsatisfiable and omitting further atoms in
the abstraction yields spurious satisfiability.
The set of atoms that remain in the program is actually the minimal blocker set for this program. We can also observe the property of unsatisfiable programs being refinement-safe faithful (Proposition~\ref{prop:unsatrefsafe}), as refining the shown abstraction by adding back atoms relevant with the other nodes will still yield unsatisfiable programs.

\begin{figure}[t]
\caption{Blocker rule set for 2-colorability of Figure~\ref{fig:unsatgraph}(a)}
\label{fig:rules}
\bigskip
\begin{tabular}{l@{\hspace{1cm}}l}
$\{\mi{chosenColor}(1,red)\}.$ & $\leftarrow \mi{not}\ \mi{colored}(1).$\\
$\{\mi{chosenColor}(2,red)\}.$ & $\leftarrow \mi{not}\ \mi{colored}(2).$\\
$\{\mi{chosenColor}(3,red)\}.$ & $\leftarrow \mi{not}\ \mi{colored}(3).$\\
$\{\mi{chosenColor}(1,green)\}.$ & $\leftarrow \mi{chosenColor}(1,red),\mi{chosenColor}(1,green).$\\
$\{\mi{chosenColor}(2,green)\}.$ & $\leftarrow \mi{chosenColor}(2,red),\mi{chosenColor}(2,green).$\\
$\{\mi{chosenColor}(3,green)\}.$ & $\leftarrow \mi{chosenColor}(3,red),\mi{chosenColor}(3,green).$\\
$\mi{colored}(1)\leftarrow \mi{chosenColor}(1,red).$ & $\leftarrow \mi{chosenColor}(2,red),\mi{chosenColor}(1,red).$\\
$\mi{colored}(2)\leftarrow \mi{chosenColor}(2,red).$ & $\leftarrow \mi{chosenColor}(3,red),\mi{chosenColor}(1,red).$\\
$\mi{colored}(3)\leftarrow \mi{chosenColor}(3,red).$ & $\leftarrow \mi{chosenColor}(3,red),\mi{chosenColor}(2,red).$\\
$\mi{colored}(1)\leftarrow \mi{chosenColor}(1,green).$ & $\leftarrow \mi{chosenColor}(2,green),\mi{chosenColor}(1,green).$\\
$\mi{colored}(2)\leftarrow \mi{chosenColor}(2,green).$ & $\leftarrow \mi{chosenColor}(3,green),\mi{chosenColor}(1,green).$\\
$\mi{colored}(3)\leftarrow \mi{chosenColor}(3,green).$ & $\leftarrow \mi{chosenColor}(3,green),\mi{chosenColor}(2,green).$
\end{tabular}
\end{figure}
\end{exmp}

For the introduced notions of blocker sets, the below result follows from Theorem~\ref{thm:refsafe}.

\begin{cor}
Computing (i) some $\subseteq$-minimal respectively (ii) some smallest
size blocker $C \subseteq \Lits$ for a given program $\Pi$ is (i) in
$\FPNP$ and $\FPNPpar$-hard respectively (ii)
$\FPSigmaP{2}[log,wit]$-complete.
\end{cor}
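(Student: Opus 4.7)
The plan is to derive the corollary from Theorem~\ref{thm:refsafe} via a tight correspondence between blocker sets and refinement-safe faithful abstractions in the unsatisfiable case. I would first observe that any blocker set of $\Pi$ forces $\Pi$ to be unsatisfiable: if $C \subseteq \Lits$ satisfies $AS(\mi{omit}(\Pi,\Lits \setminus C)) = \emptyset$, then Proposition~\ref{prop:gen}(iii) gives $AS(\Pi) = \emptyset$. Under this unsatisfiability, I would prove the key equivalence that, for every $A \subseteq \Lits$, the abstraction $\mi{omit}(\Pi, A)$ is refinement-safe faithful if and only if it is unsatisfiable. The ``only if'' direction is immediate, since faithfulness yields $AS(\mi{omit}(\Pi, A)) = AS(\Pi)|_{\overline{A}} = \emptyset$; the ``if'' direction is exactly Proposition~\ref{prop:unsatrefsafe}. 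Consequently, the map $C \mapsto A = \Lits \setminus C$ is an inclusion-reversing, cardinality-inverting bijection between blocker sets of $\Pi$ and sets $A$ whose omission yields a refinement-safe faithful abstraction.

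Under this correspondence, $\subseteq$-minimal (respectively smallest-size) blocker sets correspond exactly to $\subseteq$-maximal (respectively largest-size) such $A$. Setting $\omitt_0 = \emptyset$ in Theorem~\ref{thm:refsafe}, the upper bounds then transfer verbatim: case (i) is in $\FPNP$ and case (ii) is in $\FPSigmaP{2}[log,wit]$, and the concrete blocker set $C$ is recovered from $A$ in logarithmic space.

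For the lower bounds, the $\FPSigmaP{2}[log,wit]$-hardness of (ii) transfers immediately. The hardness proof of Theorem~\ref{thm:refsafe}(ii) invokes the construction underlying the $\FPSigmaP{2}[log,wit]$-hardness of Theorem~\ref{thm-one}(ii) with spurious $\hat{I} = \emptyset$ and $A = \Lits$; by Proposition~\ref{prop:gen}(ii) this forces $\Pi$ to be unsatisfiable, and each put-back set $PB$ is exactly a set of retained atoms, hence a blocker set, so smallest put-back sets coincide with smallest blocker sets. The hard part will be the $\FPNPpar$-hardness of (i), since the reduction in Theorem~\ref{thm:refsafe}(i) constructs a satisfiable program $\Pi' = \bigcup_i \Pi'_i$ with answer set $\{b_1,\ldots,b_n\}$ and therefore admits no blocker set at all. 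I would adapt it by composing $\Pi'$ with an unsatisfiability-forcing gadget, for example auxiliary atoms together with rules that trigger an odd loop precisely when the would-be answer set $\{b_1,\ldots,b_n\}$ of $\Pi'$ is present. The delicate point is designing this gadget so as to avoid introducing cheaper, $\Pi_i$-independent blocker sets, ensuring that in every minimal blocker $C$ the membership of $b_i$ in $C$ still encodes whether $\Pi_i$ is satisfiable, so that the full satisfiability vector for $\Pi_1,\ldots,\Pi_n$ remains polynomial-time readable from $C$.
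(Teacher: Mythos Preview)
Your approach matches the paper's exactly: the paper's entire proof is the two-sentence remark that membership follows from Theorem~\ref{thm:refsafe} in the case that $\Pi$ has no answer sets, and hardness from the reductions used there. Your correspondence via Proposition~\ref{prop:unsatrefsafe} (blocker sets $\leftrightarrow$ refinement-safe faithful abstractions for unsatisfiable $\Pi$, under the complementation $C \mapsto \Lits\setminus C$) is precisely what the paper has in mind, and your treatment of membership in both cases and of hardness in case (ii) is exactly right.

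Your observation about case (i) hardness is a genuine point that the paper's terse remark does not spell out: the $\FPNPpar$-hardness reduction in the proof of Theorem~\ref{thm:refsafe}(i) indeed produces a satisfiable program $\Pi'$ (with unique answer set $\{b_1,\ldots,b_n\}$), so $\Pi'$ has no blocker sets at all and the reduction does not transfer verbatim. Your plan to compose with an unsatisfiability gadget is the right idea, and you are also right that naive gadgets fail; for instance, simply adding a fresh odd loop $u \gets \naf u$ makes $\{u\}$ the unique minimal blocker, destroying the encoding of the $\Pi_i$'s. A working adaptation must tie the gadget to the $b_i$'s so that keeping $b_i$ in the blocker set is forced exactly when $\Pi_i$ is satisfiable; the paper does not provide these details either, so what you have written is already at least as complete as the paper's own argument for this item.
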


The membership follows for the case that $\Pi$ has no answer sets, and
the hardness by the reduction in the proof of  Theorem~\ref{thm:refsafe}.

\subsection{Implementation}

The experiments have been conducted with a
tool%
\footnote{\url{www.kr.tuwien.ac.at/research/systems/abstraction}}
that we have implemented according to the described method. It uses
Python, Clingo \cite{Gebser2011potassco} and the meta-program output
of the Spock debugger \cite{brain2007debugging}.

\begin{figure*}
\begin{minipage}[t]{6.45cm}
  \vspace{0pt}  
\begin{algorithm}[H]
\small
 \KwIn{$\Pi$, $\omitt_{init}$}
 \KwOut{$\Pi'=\mi{omit}(\Pi,\omitt')$, $\omitt'$}
 $\omitt' = \omitt_{init}$;\\
 $\Pi' = \mi{constructAbsProg}(\Pi,\omitt')$;\label{line:abs} \\
 \While{$AS(\Pi')\neq\emptyset$}{
 	Get $I \in AS(\Pi')$;\\
 	$\Pi_{debug} = \mi{constructDebugProg}(\Pi,\omitt',I)$;\label{line:debug}\\
 	$S = \mi{getASWithMinBadOmit}(\Pi_{debug})$;\label{line:badomit}\\
 	\If(\tcc*[h]{ $I$ concrete}){$S|_{\mi{badomit}}=\emptyset$}{
 	\Return{$\Pi',\omitt'$}\label{line:concrete}}
 	\Else(\tcc*[h]{refine the abstraction}){
 	$\omitt'=\omitt'\setminus S|_{\mi{badomit}}$;\\
 	$\Pi' = \mi{constructAbsProg}(\Pi,\omitt')$;\label{line:refinedabs}
 	}	
 }
 \tcc{reached an unsatisfiable $\Pi'$}
 \Return{$\Pi',\omitt'$}\label{line:end}
	 
\caption{\emph{Abs\&Ref}}
\label{alg:absref}
\end{algorithm}

\end{minipage}
\quad
\begin{minipage}[t]{6.45cm}
  \vspace{0pt}  
\begin{algorithm}[H]
\small
 \KwIn{$\Pi$, $\Lits$, $\omitt$ s.t.\ $AS(\Pi,\omitt)\,{=}\,\emptyset$}
 \KwOut{a $\subseteq$-minimal blocker set $\minBlockerSet\,{\subseteq} \Lits\,{\setminus}\,\omitt$}
 \ForAll{$\alpha \in \Lits\setminus \omitt$}{
 $\Pi' = \mi{constructAbsProg}(\Pi,\{\alpha\})$;\\
 \If{$AS(\Pi')=\emptyset$}{
 $\omitt=\omitt\cup \{\alpha\}$;\\
 $\Pi=\Pi'$;\\
 }
 }
 \Return{$\minBlockerSet=\Lits\setminus\omitt$}
	 
\caption{\emph{ComputeMinBlocker}}
\label{alg:minblock}
\end{algorithm}
\end{minipage}
\end{figure*}
The procedure for the abstraction and refinement method is shown in
Algorithm~\ref{alg:absref}.
Given a program $\Pi$ and a set
$\omitt_{init}$ of atoms to be omitted, first the abstract program
$\Pi'=\mi{omit}(\Pi,\omitt_{init})$ is constructed
(Line~\ref{line:abs}). If the abstract program is unsatisfiable, the
program and the set of omitted atoms are returned
(Line~\ref{line:end}). Otherwise, an answer set $I \in
AS(\mi{omit}(\Pi,\omitt_{init}))$ is computed. In the implementation,
the first answer set is picked. In order to check whether $I$ is
concrete, the meta-program $\Pi_{debug} = {\cal T}[\Pi,\hat{I}]$ as described in Section~\ref{sec:refine}
is constructed (Line~\ref{line:debug}). Then, a search over the answer sets of ${\cal
T}[\Pi,\hat{I}]$ for a minimum number of $\mi{badomit}$ atoms is
carried out (Line~\ref{line:badomit}). If an answer set with no $\mi{badomit}$ atoms
exists, then this shows that $I$ is concrete, and the abstract program
and the set of omitted atoms are returned
(Line~\ref{line:concrete}). Otherwise, the set of omitted atoms is
refined by removing the atoms that are determined as badly omitted,
and a new abstract program is constructed with the refined abstraction
$A'$. This loop continues until either the abstract program $\Pi'$
constructed at Line~\ref{line:refinedabs} is unsatisfiable or its
first answer set is concrete.

Figure~\ref{fig:arch} shows the implemented system according to
Algorithm~\ref{alg:absref} with the respective components. The arcs
model both control and data flow within the tool. The workflow
of the tool is as follows. First, the input program $\Pi$ and the set
$A$ of atoms to be omitted are read. Then the \emph{control component} calls
the \emph{abstraction creator} component which uses $\Pi$ and $A$ to
create the abstract program $\widehat{\Pi}_A$ \boxednumber{1}. The
controller then calls the \emph{ASP Solver} to get an answer set of
$\widehat{\Pi}_A$ \boxednumber{2}. If the solver finds no answer set,
the controller outputs the abstract program and the set of omitted
atoms. Otherwise, it calls the \emph{refinement} component with the
abstract answer set $\hat{I}$ to check spuriousness and to decide whether or not to refine
the abstraction \boxednumber{3}. The refinement component calls the
\emph{checker creator} \boxednumber{4} to create ${\cal
  T}[\Pi,\hat{I}]$, which uses Spock \boxednumber{5}, and then
calls the ASP solver to check whether $\hat{I}$ is concrete
\boxednumber{6}. If not, i.e., when $\hat{I}$ is spurious, it refines the
abstraction by updating $A$ \revisedversion{(to $A'$)} \boxednumber{7}. Otherwise, the controller
returns the outputs.
\begin{figure}[t!]
\centering
\includegraphics[scale=0.75]{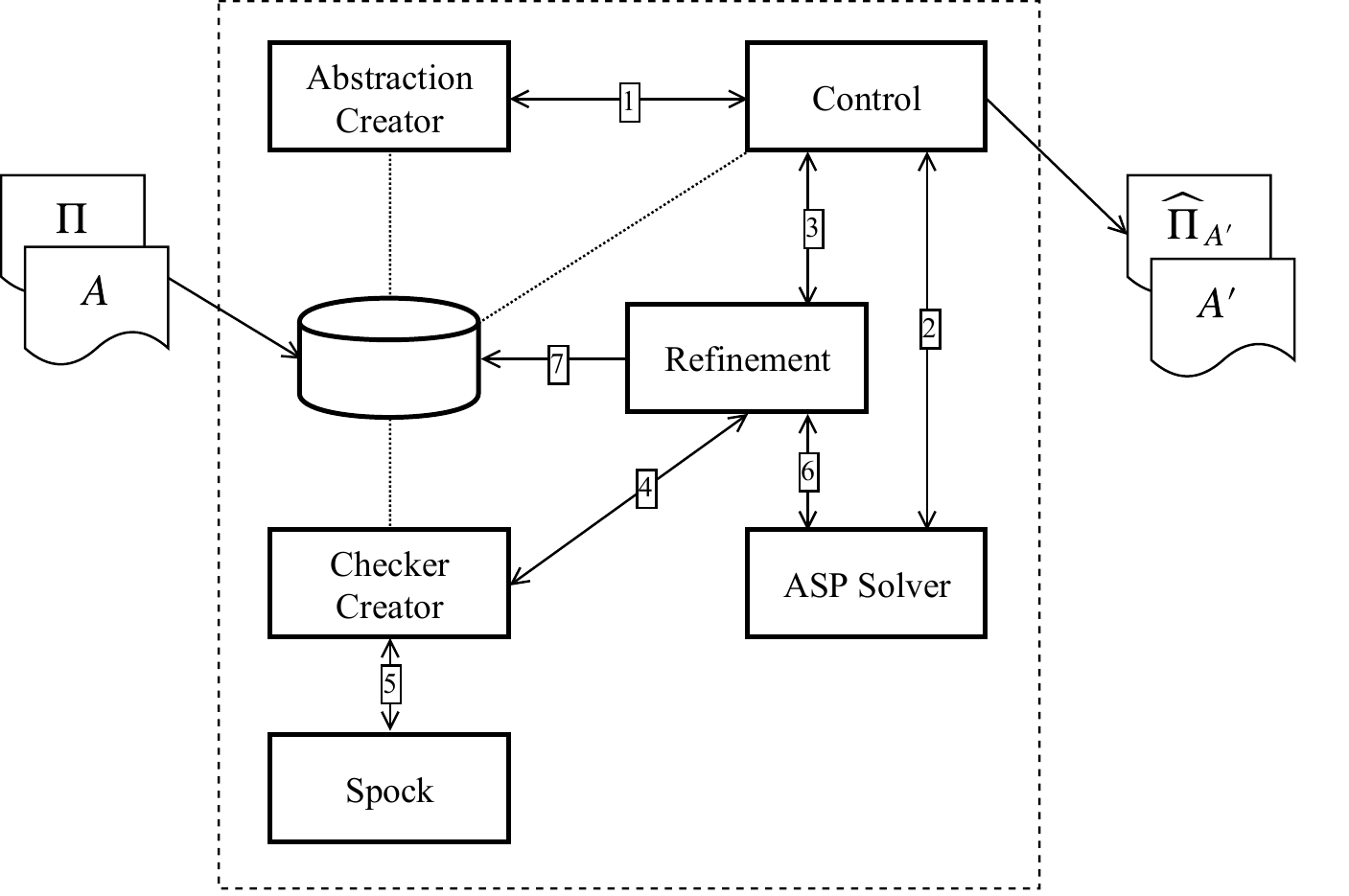}
\caption{System structure of the implementation}
\label{fig:arch}
\end{figure}

The computation of a $\subseteq$-minimal blocker set of an
unsatisfiable program, given an initial set of omission atoms $\omitt$,
is shown in Algorithm~\ref{alg:minblock};
it derives from computing
some $\subseteq$-minimal put-back set (Theorem~\ref{thm-one}), by taking
into account that minimal blocker sets amount to minimal put-back set
for unsatisfiability. The procedure checks whether omitting an atom
$\alpha\in \Lits \setminus \omitt$ from $\Pi$ preserves
unsatisfiability. If yes, the atom is added to $\omitt$ and the search
continues from the newly constructed abstract program
$\mi{omit}(\Pi,\{\alpha\})$. Once all the atoms are examined, the atoms
that are 
not omitted constitute/form a 
$\subseteq$-minimal blocker set, provided that
$AS(\Pi,\omitt)$ is unsatisfiable.

\subsection{Experiments}

In our experiments, we wanted to observe the use of abstraction in catching the part of the program which causes unsatisfiability. We aimed at studying how the abstraction and refinement method behaves in different benchmarks in terms of the computed final abstractions and the needed refinement steps, when starting with an initial omission of a random set of atoms. For the refinement step, we expected the search for the answer set with minimum number of $\mi{badomit}$ atoms to be difficult, and thus wanted to investigate whether different minimizations over the $\mi{badomit}$ atom number makes a difference in the reached final abstractions. 

Additionally, we were interested in computing the $\subseteq$-minimal
blocker sets of the programs and observing the difference in size of
the $\subseteq$-minimal blocker sets depending on the problems.  For
finding $\subseteq$-minimal blocker sets, we additionally compared a
\emph{top-down} method to a \emph{bottom-up} method, to see their
effects on the quality of the resulting $\subseteq$-minimal blocker
sets. The top-down method proceeds by calling the function
\emph{ComputeMinBlocker} with the original program $\Pi$, $\Lits$ and
$\omitt=\{\}$, so that the search for a $\subseteq$-minimal blocker
set starts from the top. The bottom-up method initially chooses a
certain percentage of the atoms to omit, $\omitt_{init}$, and calls
the function \emph{Abs\&Ref} with $\Pi$ and $\omitt_{init}$ to refine
the abstraction and find an unsatisfiable abstract program,
$\mi{omit}(\Pi,\omitt_{final})$. Then, a search for
$\subseteq$-minimal blocker sets is done, with the remaining atoms, by
calling the function \emph{ComputeMinBlocker} with
$\mi{omit}(\Pi,\omitt_{final})$, $\Lits$ and $\omitt_{final}$. We
wanted to observe whether there are cases where the bottom-up method
helps in reaching better quality $\subseteq$-minimal blocker sets that
have smaller size than 
those obtained with the top-down method.

\subsubsection{Benchmarks} We considered five benchmark problems 
with a focus on the unsatisfiable instances. Two of the problems are
based on graphs, two are scheduling and planning problems,
respectively, and the fifth one is a subset selection problem.

{\it Graph Coloring (GC).} We obtained the generator for the graph
coloring problem%
\footnote{\url{www.mat.unical.it/aspcomp2013/GraphColouring}}
that was submitted to the ASP Competition 2013
\cite{alviano2013fourth}, and we generated 35 graph
instances with node size varying from 20 to 50 with edge probability
0.2 to 0.6, which are not 2 or 3-colorable. The respective
colorability tests are added as superscripts to GC, i.e, GC$^2$,
GC$^3$.

{\it Abstract Argumentation (AA).} Abstract argumentation frameworks
are based on graphs to represent and reason about arguments. The
abstract argumentation research community has  
a broad collection of benchmarks with different types of graph classes, which
are also being used in competitions \cite{GagglLMW16}. We obtained the
Watts-Strogatz (WS) instances \cite{WattsS98} that were generated by
\cite{CeruttiGV16} and are unsatisfiable for existence of
so called stable extensions.%
\footnote{
\url{www.dbai.tuwien.ac.at/research/project/argumentation/systempage/Data/stable.dl}
} 
We focused on the unsatisfiable (in total 45) instances
with 100 arguments (i.e., nodes) where each argument is connected
(i.e., has an edge) to its $n \in \{6,12,18\}$ nearest neighbors and it
is connected to the remaining arguments with a probability 
$\beta \in \{0.10,0.30,0.50,0.70,0.90\}$.

{\it Disjunctive Scheduling (DS).} As a non-graph problem, we
considered the task scheduling problem
from the ASP Competition 2011\footnote{\url{www.mat.unical.it/aspcomp2011}}
and generated 40 unsatisfiable instances with $t\in\{10,20\}$ tasks within
$s\in\{20,30\}$ time steps, where $d\in\{10,20\}$ tasks are randomly
chosen to not to have overlapping schedules.

{\it Strategic Companies (SC).} We considered the strategic companies
problem with the encoding and simple instances provided in
\cite{eiter1998kr}. In order to achieve unsatisfiability, we added a
constraint to the encoding that forbids having all of the companies
that produce one particular product to be strategic. $SC$ is a canonic
example of a disjunctive program that has presumably higher
computational cost than normal logic programs, and no polynomial time
encoding into program such program is feasible. 
We have thus split rules with disjunctive heads,
e.g., $a \,{\vee}\, b \leftarrow c$, 
into choice rules $\{a\} \leftarrow c; \{b\} \leftarrow c$ at the cost of introducing
spurious guesses and answer sets. The resulting split program 
can be seen as an over-approximation of the original program, and thus 
causes for unsatisfiability of the split program as approximate causes for 
unsatisfiability of the original program.

{\it 15-puzzle (PZ).} Inspired from the Unsolvability International Planning Competition,\footnote{\url{https://unsolve-ipc.eng.unimelb.edu.au/}} we obtained the ASP encoding for the Sliding Tiles problem from the ASP Competition 2009,\footnote{\url{https://dtai.cs.kuleuven.be/events/ASP-competition}} which is named as 15-puzzle. We altered the encoding in order to avoid having cardinality constraints in the rules, and to make it possible to also solve non-square instances. We used the 20 unsolvable instances from the planning competition, which consist of 10 instances of 3x3 and 10 instances of 4x3 tiles.

The collection of all encodings and benchmark instances can be found at \url{http://www.kr.tuwien.ac.at/research/systems/abstraction/}

\begin{figure}[t!]
\caption{Experimental results  for the base case (i.e., with upper
limit on $\mi{badomit}$ \# per step). The three entries in a cell,
e.g., 0.49 / 0.74 / 1.00 in cell
(GC$^2$, $\frac{|\omitt_{\mi{init}}|}{|\Lits|})$,
are for 50\% / 75\% /
100\% initial omission.}
\label{fig:main}
{
\begin{center}
\renewcommand{\arraystretch}{1.05}
\begin{tabular}{r|r@{~~}|r@{~~}|r@{~~}|r@{~~}||r@{~~}|r}
 &   &   &  &   & & \\
\myraise{$\Pi$} & \myfrac{|\omitt_{\mi{init}}|}{|\Lits|}
& \myfrac{|\omitt_{\mi{final}}|}{|\Lits|} & 
\myraise{Ref \#} & \myraise{$t$ (sec)}
& \myfrac{|\minBlockerSet|}{|\Lits|} & \myraise{$t$ (sec)} \\
\cline{1-7}
\multirow{4}{*}{GC$^2$}& 0.49 & 0.49& 0.02 &  0.81& 0.10& 0.80\\
\cline{2-7}
& 0.74 & 0.63 & 0.51 & 1.13 & 0.10& 0.51\\
\cline{2-7}
& 1.00 & 0.18 & 3.03 & 3.60 &0.10 & 1.63\\
\hhline{~|====||==}
 &\multicolumn{4}{c||}{top-down}&0.10 & 2.30\\
\cline{1-7}
\multirow{4}{*}{GC$^3$}& 0.49 & 0.40& 0.82 & 1.83 & 0.17& 1.68\\
\cline{2-7}
& 0.72 & 0.31 & 2.46 & 5.87 & 0.16&2.04\\
\cline{2-7}
& 1.00 & 0.11 & 4.18 & 6.54 &0.17 &3.47\\
\hhline{~|====||==}
 &\multicolumn{4}{c||}{top-down}&0.16 & 4.32\\
\cline{1-7}
\multirow{4}{*}{AA}& 0.50 & 0.19 & 3.70 & 7.20 & 0.38& 8.90\\
\cline{2-7}
& 0.75 & 0.20 & 4.19 & 8.41 & 0.37&8.67\\
\cline{2-7}
& 1.00 & 0.01 & 2.00 & 4.07 &0.38 &11.74\\
\hhline{~|====||==}
 &\multicolumn{4}{c||}{top-down}&0.38 & 11.75\\
 \cline{1-7}
\multirow{4}{*}{DS}& 0.50 & 0.39& 1.62  & 3.36 & 0.10& 1.89\\
\cline{2-7}
& 0.72 & 0.40 & 3.49 & 6.77  & 0.09& 2.09\\
\cline{2-7}
& 1.00 & 0.45 & 4.90 & 9.57  &0.07 &1.99\\
\hhline{~|====||==}
 &\multicolumn{4}{c||}{top-down}&0.09 & 4.15\\
\cline{1-7}
\multirow{4}{*}{SC}& 0.49 & 0.48 & 0.03 & 0.59 & 0.10& 0.34\\
\cline{2-7}
& 0.74 & 0.42 & 0.65 & 1.14 & 0.10&0.41\\
\cline{2-7}
& 1.00 & 0.43 & 1.00 & 2.65 &0.11 &0.40\\
\hhline{~|====||==}
 &\multicolumn{4}{c||}{top-down}&0.12 & 0.82\\
 \cline{1-7}
\multirow{4}{*}{PZ}& 0.36 & 0.32 & 3.76 & 65.10 & 0.29& 150.10 \\
\cline{2-7}
& 0.54 & 0.45 & 8.47 & 154.10 & 0.27& 103.70\\
\cline{2-7}
& 0.76 & 0.54 & 22.85 & 448.60 &0.26 & 80.00\\
\hhline{~|====||==}
 &\multicolumn{4}{c||}{top-down}&0.30 & 281.40\\
\cline{1-7}
\end{tabular}
\end{center}
}
\end{figure}

\subsubsection{Results}

The tests
were run on an Intel Core
i5-3450 CPU @ 3.10GHz machine using Clingo 5.3, under a 600 secs time
and 7 GB memory limit. The initial omission, $\omitt_{init}$, is done
by choosing randomly 50\%, 75\% or 100\% of the nodes in the graph
problems GC, AA, of the tasks in DS, of the companies in SC, and of
the tiles in PZ, as well as by omitting all the atoms related with the
chosen objects. We show the overall average of 10 runs for each
instance in Figure~\ref{fig:main}.


The first three rows under each category show the bottom-up approach
for 50\%, 75\% and 100\% initial omission, respectively. The columns
$|\omitt_{\mi{init}}|/|\Lits|$ and $|\omitt_{\mi{final}}|/|\Lits|$
show the ratio of the initial omission set $\omitt_{\mi{init}}$ and
the final omission set $\omitt_{\mi{final}}$ that achieves
unsatisfiability after refining $\omitt_{\mi{init}}$ (\rev{with shown
number of refinement steps and time}{with the number of refinement steps and time shown in the respective columns}). The second part of the columns
is on the computation of a $\subseteq$-minimal blocker set $\minBlockerSet$. For
the bottom-up approach, the search starts from $\omitt_{\mi{final}}$
while for the top-down approach, it starts from $\Lits$. In each refinement step,
the number of determined $\mi{badomit}$ atoms are minimized to be at
most $|\omitt|/2$; Figure~\ref{fig:min} shows results for
different upper limits and its full minimization.

Figure~\ref{fig:main} shows that, as expected, there is a minimal part
of the program which contains the reason for unsatisfiability of the
program by projecting away the atoms that are not needed 
(sometimes more than 90\% of all atoms).
Observe that when 100\% of the objects in the problems are
omitted, refining the abstraction until an unsatisfiable abstract
program takes the most time. This shows that a naive way of starting
with an initial abstraction by omitting every relevant detail is not
efficient in reaching an unsatisfiable abstract program. \rev{We can
observe that larger $\omitt_{final}$ results in having less time spent
in computing $\subseteq$-minimal blocker sets, as a smaller number of atoms
must be checked.}{We can observe that for the bottom-up approach,
  starting with larger sets $\omitt_{final}$ of omitted atoms usually
  results in spending less time to compute a $\subseteq$-minimal
  blocker set. This is because of fewer atoms to check during the computation. For example, for GC$^{3}$ starting with $40\%$ of $|\omitt_{\mi{final}}|/|\Lits|$ computes a $\subseteq$-minimal blocker set faster than the other two cases.} Additionally, with a bottom-up method it is
possible to reach a $\subseteq$-minimal blocker set which is smaller
in size than the ones obtained with the top-down method.

The graph coloring benchmarks (GC$^{2,3}$) show that more atoms are
kept in the abstraction to catch the non-3-colorability than
the non-2-colorability, which matches our intuition. For example, in GC$^2$
omitting 50\% of the nodes (49\% of the atoms in $\omitt_{init}$)
already reaches an unsatisfiable program, since no atoms were added
back in $\omitt_{final}$. However, for GC$^3$ an average of only 9\% of the
omitted atoms were added back until unsatisfiability is caught.

For the GC$^{2,3}$, SC and PZ benchmarks, we can observe that omitting
50\% of the objects ends up easily in reaching some unsatisfiable
abstract program, with refinements of the abstractions being
relatively small. For example, for GC$^2$ the size of $\omitt_{final}$ is the
same as for $\omitt_{init}$, and for PZ an average of only 4\% of the atoms is
added back in $\omitt_{final}$. However, this behavior is not observed
when initially omitting 75\% of the objects.

We can also observe that some problems (AA and PZ) have larger
$\subseteq$-minimal blocker sets than others. This shows that these
problems have a more complex structure than others, in the sense that
more atoms are syntactically related with each other through the rules
and have to be considered for obtaining the unsatisfiability.

\begin{figure}[t!]
\caption{Experimental results with different upper limits on
  $\mi{badomit}$ \#. The three entries in a cell,
e.g., 0.21 / 0.24 / 0.23 in cell
(AA, $\frac{|\omitt_{\mi{final}}|}{|\Lits|})$ of $\badOmitNr \leq |\Lits|/5$,
are for 50\% / 75\% /
100\% initial omission.}
\label{fig:min}
{
\begin{center}
\renewcommand{\arraystretch}{1.05}
\begin{tabular}{c|c|c|l||c|c||c|c|c||c|c}
     &  \multicolumn{5}{c||}{$\badOmitNr \leq |\Lits|/5$}& \multicolumn{5}{c}{$\badOmitNr \leq |\Lits|/10$}\\
     \cline{2-11}
     &  & &  & & &&&&&\\
\myraise{$\Pi$} &  \myfrac{|\omitt_{\mi{final}}|}{|\Lits|} & \myraise{Ref \#}
& \myraise{$t$ (sec)}
& \myfrac{|\minBlockerSet|}{|\Lits|}
& \myraise{$t$ (sec)} & \myfrac{|\omitt_{\mi{final}}|}{|\Lits|} & \myraise{Ref \#}
& \myraise{$t$ (sec)}
& \myfrac{|\minBlockerSet|}{|\Lits|}
& \myraise{$t$ (sec)}\\
\cline{1-11}
\multirow{3}{*}{AA}& 0.21 & 4.84 & ~~9.49 & 0.37 & 8.93& 0.23 & 6.90 & 13.59 & 0.36 & 8.69\\
\cline{2-11}
&  0.24 & 5.93 & 11.92& 0.36 & 8.38 & 0.29 & 8.61 & 17.84 & 0.35 & 7.86\\
\cline{2-11}
&  0.23 & 5.87 & 11.93 & 0.36 & 8.88 & 0.33 & 10.27 & 22.30 & 0.34 & 7.36\\
\cline{1-11}
\multicolumn{5}{l}{~}\\ 
     &  \multicolumn{5}{c}{$\mi{min}\_\badOmitNr$}\\
     \cline{2-6}
     &  & &  & & \\
\myraise{$\Pi$} &  \myfrac{|\omitt_{\mi{final}}|}{|\Lits|} & \myraise{Ref \#}
& \myraise{$t$ (sec)}
& \myfrac{|\minBlockerSet|}{|\Lits|}
& \myraise{$t$ (sec)} \\
\cline{1-6}
\multirow{3}{*}{AA}& 0.24 & 7.89 & 15.20 & 0.36 & 8.06\\
\cline{2-6}
&  0.30 & 10.65 & 34.10 (2) & 0.34 & 7.06\\
\cline{2-6}
&  0.44 & 17.48 & 62.46 (1) & 0.34 & 5.86 \\
\cline{1-6}
\end{tabular}
\end{center}
}
\end{figure}

\paragraph{Badomit minimization}
In a refinement step, minimizing the number of $\mi{badomit}$ atoms
gives the smallest set of atoms to put back. However, the minimization
makes the search more difficult, hence may hit a timeout; e.g., no
optimal solution for 45 nodes in $GC$ was found in 10 mins.  Figure~\ref{fig:min}
shows the results of giving different upper bounds on the number of
$\mi{badomit}$ atoms and also applying the full minimization in the
refinement for the $AA$ instances. The numbers in the parentheses show
the number of instances that reached a timeout. As more minimization is
imposed, we can observe an increase in the size of the final omissions
$\omitt_{final}$ and also a decrease in the size of the
$\subseteq$-minimal blocker set. For example, for 75\% initial
omission, we can see that the size of the computed final omission
increases from 0.20 (Figure~\ref{fig:main}) to 0.24, 0.29 and finally
to 0.30. Also the size of the $\subseteq$-minimal blocker set decrease
from 0.37 (Figure~\ref{fig:main}) to 0.36, 0.35 and finally to
0.34. As expected, adding the smallest set of $\mi{badomit}$ atoms
back makes it possible to reach a larger omission $\omitt_{final}$
that keeps unsatisfiability (e.g., $min\_badomit\#$ third row (100\%
$\omitt_{init}$): $\omitt_{final}$ is 44\% instead of 0.01\% as in
Figure~\ref{fig:main}). On the other hand, such minimization over the
number of $\mi{badomit}$ atoms causes to have more refinement steps
(Ref \#) to reach some unsatisfiable abstract program, which also adds
to the overall time.

The $\subseteq$-minimal blocker search algorithm relies on the order of the
picked atoms. 
We considered
the heuristics of ordering the atoms according to
the number of rules in which each atom shows up in the body, and starting the
minimality search by omitting the least occurring atoms. However,
this did not provide better results than just picking an atom 
arbitrarily.

\paragraph{Sliding Tiles (15-puzzle)} Studying the resulting abstract
programs with $\subseteq$-minimal blockers showed that finding out
whether the problem instance is unsolvable within the given time frame
does not require to consider every detail of the problem. Omitting the
details about some of the tiles still reaches a program which is
unsolvable, and shows the reason for unsolvability through the
remaining tiles. Figure~\ref{fig:puzzle} shows an instance from the
benchmark, which is unsolvable in 10 steps. Applying omission
abstraction achieves an abstract program that only contains atoms
relevant with the tiles 0,3,4,5 and is still unsatisfiable; this
matches the intuition behind the notion of pattern databases
introduced in \cite{culberson1998pattern}.

\newcounter{row} \newcounter{col} \newcommand\setrow[3]{
  \setcounter{col}{1} \foreach \n in {#1, #2, #3} {
    \edef\x{\value{col} - 0.5} \edef\y{3.5 - \value{row}}
    \node[anchor=center] at (\x, \y) {\n}; \stepcounter{col} }
  \stepcounter{row} }
\begin{figure}[t!]
\caption{Unsolvable sliding tiles problem instance}
\label{fig:puzzle}
\subfigure[Concrete problem]{
\begin{tikzpicture}[scale=0.5]

  \begin{scope}
    \draw[thick, scale=1] (0, 0) grid (3, 3);

    \setcounter{row}{1}
    \setrow {0}{3}{2}  
    \setrow {8}{5}{4} 
    \setrow {1}{6}{7}  

    \node[anchor=center] at (1.5, 3.7) {Initial State};
  \end{scope}
  
  \begin{scope}[xshift=5cm]
    \draw[thick, scale=1] (0, 0) grid (3, 3);

    \setcounter{row}{1}
    \setrow {0}{1}{2}  
    \setrow {3}{4}{5} 
    \setrow {6}{7}{8}  

    \node[anchor=center] at (1.5, 3.7) {Goal State};
  \end{scope}

\end{tikzpicture}
}\qquad
\subfigure[Abstract problem]{
\begin{tikzpicture}[scale=0.5]

  \begin{scope}
    \draw[thick, scale=1] (0, 0) grid (3, 3);

    \setcounter{row}{1}
    \setrow {0}{3}{*}  
    \setrow {*}{5}{4} 
    \setrow {*}{*}{*}  

    \node[anchor=center] at (1.5, 3.7) {Initial state};
    
    \begin{scope}[xshift=5cm]
    \draw[thick, scale=1] (0, 0) grid (3, 3);

    \setcounter{row}{1}
    \setrow {0}{*}{*}  
    \setrow {3}{4}{5} 
    \setrow {*}{*}{*}  

    \node[anchor=center] at (1.5, 3.7) {Goal State};
  \end{scope}
  \end{scope}

\end{tikzpicture}
}
\end{figure}

\paragraph{Summary} The results show that the notion of abstraction is
useful in computing the part of the problem which causes
unsatisfiability, as all of the benchmarks contain a blocker set that
is smaller than the original vocabulary. We observed that different
program structures cause the $\subseteq$-minimal blocker sets to be
different in size with respect to the respective original vocabulary
size. Computation of these $\subseteq$-minimal blocker sets can
sometimes result in smaller sizes with the bottom-up
approach. However, starting with an 100\% initial omission to use the
bottom-up approach appears to be unreasonable due to the time difference
compared to the top-down approach, even though sometimes it 
computes $\subseteq$-minimal blocker atoms sets of smaller size. The
abstraction \& refinement approach can also be useful if there is a
desire to find some (non-minimal) blocker, as most of the time,
starting with an initial omission of 50\% or 75\% results in computing
some unsatisfiable abstraction in few refinement steps.

We recall that our focus in this initial work is on the usefulness of
the abstraction approach on ASP, and not on the scalability. However,
we believe that further implementation improvements and
optimization techniques should also make it possible to argue about
efficiency.

\section{Discussion}
\label{sec:discussion}

In this section, we first discuss possible extensions of the
approach to more expressive programs, in particular to non-ground
programs and to disjunctive logic programs, and we then address
further aspects that may influence the solving behavior.

\subsection{Non-Ground Case} 

In case of omitting atoms from non-ground programs, a simple 
extension of the method described above is to remove all 
non-ground atoms from the program that involve a predicate $p$ that
should be omitted. This, however, may
require to introduce domain variables in order to avoid the derivation
of spurious atoms.
Specifically, if in a rule $r: \alpha \leftarrow B(r)$ a non-ground atom $p(V_1,\dots,V_n)$
that is omitted from the body shares some arguments, $V_i$, with the head
$\alpha$, then $\alpha$ is conditioned for $V_i$ with a domain
atom $\mi{dom}(V_i)$ in the constructed rule, so that all values
of $V_i$ are considered.
\begin{exmp}
\label{ex:nonground}
Consider the following program $\Pi$ with domain predicate $int$ for an integer domain $\{1,...,5\}$:
\begin{align}
  a(X_1,X_2) &\leftarrow c(X_1), b(X_2). \label{eq:1}\\
  d(X_1,X_2) &\leftarrow a(X_1,X_2), X_1{\leq}X_2.\label{eq:2}
\end{align} 
In omitting $c(X)$, while rule \eqref{eq:2} remains the same, rule \eqref{eq:1} changes to 
$$\{a(X_1,X_2)\,{:}\,int(X_1)\} \leftarrow b(X_2).$$
From 
$\Pi$
and the facts $c(1),b(2)$, we get the answer set $\{c(1), b(2)$,
$a(1,2),d(1,2)\}$, and with $c(2),b(2)$ we get
$\{c(2), b(2),$ $a(2,2), d(2,2)\}$.
 After omitting $c(X)$, the abstract program with fact $b(2)$ has 32 answer sets. Among them are $\{b(2),a(1,2),d(1,2)\}$ and
$\{b(2)$, $a(2,2),d(2,2)\}$, which cover the original answer sets, 
i.e., each original answer set can be mapped to
some abstract one.
\end{exmp}
%
%

For a more fine-grained omission, let the set $A$ consist of the atoms
$\alpha = p(c_1,\ldots,c_k)$ and let $A_{p} \subseteq A$ denote the
set of ground atoms with predicate $p$ that we want to omit. Consider
a $k$-ary predicate $\theta_p$ such that for any $c_1,\ldots,c_k$, we
have $\theta_p(c_1,\ldots,c_k) = \mi{true}$ iff $p(c_1,\ldots,c_k)\in
A_p$; for a (possibly non-ground) atom
$\alpha=p(t_1,\ldots,t_k)$, we write $\theta(\alpha)$ for
$\theta_p(t_1,\ldots,t_k)$.  We can then build from a non-ground
program $\Pi$ an abstract non-ground program $\mi{omit}(\Pi,A)$
according to the abstraction $m_A$, by mapping every rule $r: \alpha
\,{\leftarrow}\, B$ in $\Pi$ to a set 
$\mi{omit}(r,A)$ of rules such that
$$
\mi{omit}(r,A) 
\text{ includes }
\left\{ \ba{rcl}
r \phantom{aaa} & \mbox{if} &
A_{\mi{pred}(\beta)}=\emptyset \text{ for all } \beta \in \{\alpha\} \cup
B^\pm, \\
\alpha \,{\leftarrow}\, B, \mi{not}\ \theta(\beta) & \mbox{if} &
   A_{\mi{pred}(\beta)} \neq \emptyset \;\land\; \beta \in \{\alpha\} \cup B^\pm,\\
\{\alpha\} \leftarrow B\setminus\{\beta\}, \theta(\beta) & \mbox{if} &
  \beta \in \rev{B}{B^\pm} \;\land\; \alpha \neq \bot \;\land\; \theta(\beta)
  \text{ is satisfiable},\\
 \rev{\emptyset}{\top} & & \mbox{otherwise}, \ea
  \right.$$ 
and no other rules.
The steps above assume that in a rule a most one
predicate to omit occurs in a single atom $\beta$. However, the steps
can be readily lifted to consider omitting a set
$\{\beta_1,\dots,\beta_n\}$ of atoms with multiple predicates from the rules. For this,
 $\alpha \,{\leftarrow}\, B, \mi{not}\ \theta(\beta)$ will be
converted into $\alpha \,{\leftarrow}\, B,
\mi{not}\ \theta(\beta_1),\dots,\mi{not}\ \theta(\beta_n)$ and
$\{\alpha\} \leftarrow B\setminus\{\beta\}, \theta(\beta)$ gets
converted into a set of rules $\{\alpha\} \leftarrow
B\setminus\{\beta_1,\dots,\beta_n\}, \theta(\beta_1);\dots;
\{\alpha\} \leftarrow B\setminus\{\beta_1,\dots,\beta_n\},
\theta(\beta_n)$.
\begin{exmp}[Example~\ref{ex:nonground} continued]
Say we want to omit $c(X)$ for $X{<}3$, i.e., $\omitt=\{c(1),c(2)\}=\omitt_c$. We have $\theta(c(1))=\theta(c(2))=\mi{true}$ and $\theta(c(X))=\mi{false}$, for $X\in\{3,...,5\}$. The abstract non-ground program $\mi{omit}(\Pi,\omitt)$ is
\begin{align}
  a(X_1,X_2) &\leftarrow c(X_1), b(X_2), \mi{not}\ \theta(c(X_1)). \nonumber\\
  \{a(X_1,X_2)\} &\leftarrow b(X_2), \theta(c(X_1)). \nonumber\\
  d(X_1,X_2) &\leftarrow a(X_1,X_2), X_1{\leq}X_2.\nonumber
\end{align} 
The abstract answer sets with facts $b(2), \theta(c(1)), \theta(c(2))$
    are $\{\{b(2)\},\{b(2),a(2,2),d(2,2)\}$, $\{b(2),a(1,2),d(1,2)\},$
    and $\{b(2),a(1,2),a(2,2),d(1,2),d(2,2)\}\}$. The program
    $\mi{omit}(\Pi,A)$  is over-approximating $\Pi$ while not
    introducing that many abstract answer sets as in the coarser abstraction in Example~\ref{ex:nonground}.
\end{exmp}

For determining bad omissions in non-ground programs, if lifting the
    current debugging rules is not scalable, other meta-programming
    ideas \cite{gebser2008meta,oetsch2010catching} can be used. The
    issue that arises with the non-ground case is having lots of
    guesses to catch the inconsistency. Determining a reasonable set
    of bad omission atoms requires optimizations which makes 
    solving the debugging problem more difficult.

\subsection{Disjunctive Programs} 
For disjunctive programs, splitting the disjunctive rules yields an over-approximation.
\begin{prop}
For a program $\Pi'$ constructed from a given $\Pi$ by splitting rules of form $\alpha_{0_1} \vee \dots \vee \alpha_{0_k} \leftarrow B(r)$ into $\{\alpha_{0_1}\} \leftarrow B(r) ; \dots ; \{\alpha_{0_k}\} \leftarrow B(r)$, we have $AS(\Pi) \subseteq AS(\Pi')$.
\end{prop}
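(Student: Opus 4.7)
Given $I \in AS(\Pi)$, my plan is to exhibit a natural extension $I'$ of $I$ to the enlarged signature of $\Pi'$ (which contains the fresh atoms $\ol{\alpha_{0_j}}$ introduced when each split choice rule $\{\alpha_{0_j}\}\leftarrow B(r)$ is unfolded into the pair $\alpha_{0_j}\leftarrow B(r),\mi{not}\,\ol{\alpha_{0_j}}$ and $\ol{\alpha_{0_j}}\leftarrow B(r),\mi{not}\,\alpha_{0_j}$) and verify $I' \in AS(\Pi')$. Concretely, set
\[
I' \;=\; I \,\cup\, \{\,\ol{\alpha_{0_j}} \mid \alpha_{0_j}\notin I\text{ and }I\models B(r)\text{ for some split rule }r\text{ with }\alpha_{0_j}\in H(r)\,\}.
\]
This is the sensible reading of the inclusion: since $\Pi'$ introduces atoms not present in $\Pi$, the statement $AS(\Pi)\subseteq AS(\Pi')$ has to be understood up to the projection to the original signature, or equivalently by identifying $I$ with its canonical extension $I'$.

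First I would verify that $I'$ is a model of the reduct $(\Pi')^{I'}$. Rules of $\Pi'$ inherited unchanged from $\Pi$ are satisfied by $I\subseteq I'$ exactly as they are by $I$ in $\Pi^I$. For each split pair originating from a disjunctive rule $r$ with $B^-(r)\cap I=\emptyset$, the reduct retains one rule per disjunct, namely $\alpha_{0_j}\leftarrow B^+(r)$ if $\alpha_{0_j}\in I$ and $\ol{\alpha_{0_j}}\leftarrow B^+(r)$ otherwise; the definition of $I'$ ensures the head is true whenever $I\models B^+(r)$, which is the only relevant case. Pairs with $B^-(r)\cap I\neq\emptyset$ disappear from the reduct, as does the corresponding disjunctive rule from $\Pi^I$.

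Next I would show minimality. Let $J\subsetneq I'$ be a model of $(\Pi')^{I'}$. Since every $\ol{\alpha}$-atom is fresh, $J\cap\Lits\subseteq I' \cap \Lits = I$. I would then argue that $J\cap\Lits$ is a model of $\Pi^I$: for any disjunctive rule $r\in\Pi^I$, by minimality of $I$ in $\Pi^I$ some $\alpha_{0_j}$ with $\alpha_{0_j}\in H(r)$ lies in $I$, and the corresponding reduct rule $\alpha_{0_j}\leftarrow B^+(r)$ in $(\Pi')^{I'}$ forces $\alpha_{0_j}\in J$ whenever $J\cap\Lits\models B^+(r)$. Minimality of $I$ as an answer set of $\Pi$ then yields $J\cap\Lits=I$, so properness of $J\subsetneq I'$ must be witnessed by some $\ol{\alpha_{0_j}}\in I'\setminus J$. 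But by construction of $I'$ there is a split rule $r$ with $\alpha_{0_j}\in H(r)$ and $I\models B(r)$, hence $\ol{\alpha_{0_j}}\leftarrow B^+(r)$ is in $(\Pi')^{I'}$ and, as $J\models B^+(r)$, forces $\ol{\alpha_{0_j}}\in J$, a contradiction.

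The main obstacle I anticipate is calibrating $I'$ so that every fresh $\ol{\alpha_{0_j}}$ placed in it is actually supported by a rule of $(\Pi')^{I'}$; otherwise $I'$ itself would fail to be minimal before any competitor is even considered. The chosen definition---admit $\ol{\alpha_{0_j}}$ exactly when some split rule involving $\alpha_{0_j}$ has its body satisfied by $I$---is precisely tuned to rule this out, and it is also what makes the "missing $\ol{\alpha_{0_j}}$" branch of the minimality argument go through cleanly.
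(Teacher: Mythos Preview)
The paper states this proposition without proof, so there is no reference argument to compare against. Your approach is correct and is the natural one: extend $I$ to the fresh atoms so that exactly those $\ol{\alpha_{0_j}}$ are included for which a supporting split rule fires, then verify modelhood and minimality of $I'$ for $(\Pi')^{I'}$.

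Two small remarks. First, where you write ``by minimality of $I$ in $\Pi^I$ some $\alpha_{0_j}$ with $\alpha_{0_j}\in H(r)$ lies in $I$'', what you actually use there is that $I$ is a \emph{model} of $\Pi^I$ (together with $J\cap\Lits\subseteq I$, which pushes $J\cap\Lits\models B^+(r)$ up to $I\models B^+(r)$); minimality of $I$ is invoked only at the next step to force $J\cap\Lits=I$. Second, the paper defines answer sets via the FLP reduct $f\Pi^I=\{r\in\Pi\mid I\models B(r)\}$ rather than the GL reduct, while your argument is phrased in GL terms (reduct rules of the shape $\alpha_{0_j}\leftarrow B^+(r)$). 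Since the two semantics coincide on normal and disjunctive programs this is harmless, but it would be cleaner to keep the negative body around and use $J\subseteq I'$ to discharge the $\naf\,\ol{\alpha_{0_j}}$ literal.
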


The current abstraction method can then be applied over
    $\Pi'$. However, it is possible that for an unsatisfiable $\Pi$
    the constructed $\Pi'$ becomes satisfiable; the reason for
    unsatisfiability of $\Pi$ can then not be grasped.


The approach from above can be extended to disjunctive
programs $\Pi$ by injecting auxiliary atoms to disjunctive heads
in order to cover the case where the body does not fire in the original
program.
To obtain with a given set $A$ of atoms an abstract disjunctive program $\mi{omit}(\Pi,A)$, 
we define abstraction of disjunctive rules $r: \alpha_1 \vee \dots
\vee \alpha_n \,{\leftarrow}\, B$ in $\Pi$, where $n\geq 2$ and all
$\alpha_i\neq \bot$ are pairwise distinct,
as follows: 
$$
\mi{omit}(r,A) = \left\{ \ba{rcl} r\phantom{aaa} & \mbox{if} &   A \cap B^\pm = \emptyset \;\land\;  A\cap\{\alpha_1,\ldots, \alpha_n\} = \emptyset,\\
\alpha_1 \vee \dots \vee \alpha_k \vee x  \leftarrow m_A(B) & \mbox{if}
  & A\cap \{ \alpha_1,\ldots,\alpha_n\} = \{
            \alpha_{k+1},\ldots,\alpha_n\} \;\land\; k \geq 1, \\
\alpha_1 \vee \dots \vee \alpha_n \vee x \leftarrow m_A(B) & \mbox{if}
    & A\cap B^\pm \neq \emptyset \;\land\; A\cap\{\alpha_1,\ldots, \alpha_n\} = \emptyset,\\
\rev{\emptyset}{\top}\phantom{aaa} &  & \mbox{otherwise}.
\ea \right.
$$
where $x$ is a fresh auxiliary atom.
Further development of the approach for disjunctive programs in a syntax preserving manner remains as future work.

\subsection{Further Solution Aspects} 

The abstraction approach that we presented is focused on the syntactic
level of programs, and it aims to preserve the structure of the given
program. Thus, depending on the particular encoding that is used to
solve a particular problem, the abstraction process may provide
results that, from the semantic view of the problem, can be 
of quite different quality. 

For illustration, consider a variant of the graph coloring encoding
with a rule $\mi{colorUsed}(Y) \leftarrow \mi{colored}(X,Y),
\mi{node}(X), \mi{color}(Y)$ which records that a certain color is
used in the coloring solution, and where $\mi{colorUsed}(Y)$ is then
used in other rules for further reasoning. Omitting nodes of the graph
means omitting the ground atoms that involve them; this will
cause to have a choice rule $\{\mi{colorUsed}(Y)\}$ for each color $Y$
in the constructed abstract program. However, these guesses could
immediately cause the occurrence of spurious answer sets due to the
random guesses of $\mi{colorUsed}$. Thus, one may need to add back
many of the atoms in order to get rid of the spurious guesses.

Other aspects that apparently will have an influence on the
quality of abstraction results is the way in which refinements are
made and the choice of the initial abstraction.
We considered possible strategies for this in order to 
help with the search, and we tested their effects in some of the
benchmarks. The first strategy, described in Section~\ref{subsubsec:badomit}, is on
refining the reasoning step for determining bad omissions, while the
the second, described in Section~\ref{subsubsec:init}, is on making a more
intuitive decision than a random choice for the initial set of
omitted atoms.

\subsubsection{Bad omission determination}
\label{subsubsec:badomit}

It may happen that in a refinement step no put-back set is found that
eliminates the spurious answer set. Therefore, we
consider further reasoning for bad omission determination to see
whether it can be useful in order to mitigate this behavior.
\begin{exmp} Consider the following program $\Pi$, with the single
answer set $I=\{c,d,a,b\}$, and its abstraction $\widehat{\Pi}_{\overline{a,d}}$, with $AS(\widehat{\Pi}_{\overline{a,d}})=\{\{c\},\{c,b\}\}$
\begin{center}
\begin{tabular}{l|l|l}
$\Pi$ & $\widehat{\Pi}_{\overline{\{a,d\}}}$& $\widehat{\Pi}_{\overline{\{a\}}}$\\
\cline{1-3}
$r1:$ $b \leftarrow d.$ & $\{b\}.$ & $b \leftarrow d.$\\
$r2:$ $d \leftarrow c,a.$ &&$\{d\} \leftarrow c.$\\
$r3:$ $a \leftarrow c.$& \\
$r4:$ $c.$& $c$.& $c$.\\
\end{tabular}
\end{center}
The abstract answer set $\hat{I}=\{c\}$ is spurious, as a
corresponding answer set of $\Pi$ must contain $a$ by $r3$, $d$ by
$r2$ and $b$ by $r1$, which is impossible \rev{}{since $b$ is false in $\hat{I}$}.
Adding to $\Pi$ the query $Q_{\hat{I}}^{\overline{\{a,d\}}}=\{\bot \leftarrow \mi{not}\ c.;\bot \leftarrow b.\}$ does not satisfy rule $r1$,
which results in determining $d$ as $\mi{badomit}$ since $r1$ should
not remain as a choice rule. However, adding it back does not
eliminate the answer set $\hat{I}$, since then $r2$ becomes a choice
rule in $\widehat{\Pi}_{\overline{\{a\}}}$ causing again the
occurrence of $\hat{I}$.

An additional reasoning over the omitted rules in determining bad omissions as below helps in deciding $\{a,d\}$ as badly omitted in one refinement step, and adding them back gets rid of the spurious answer set $\{c\}$.

\end{exmp}


\paragraph{Reasoning over omitted rules.} We considered an additional
$\mi{badomit}$ type
to help with catching the cases when putting back one omitted atom
does not eliminate the spurious answer set.

\bi
\item If a rule was omitted due to a badly omitted atom, it has an
omitted atom in the body, and the abstract rule was applicable, then
    an additional bad omission is inferred.
\beeq
\begin{split}
\mi{badomit}(A_2,\mi{type4}) \leftarrow & \mi{omitted}(R),\mi{head}(R,A_1), absAp(R),\\
&  \mi{badomit}(A_1), \mi{omittedAtomFrom}(A_2,R).
\end{split}
\eeeq
\ei
The idea is as follows: if some atom $a$, which was decided to be badly omitted, occurs in the head of a rule $r$, then once $a$ is put back $r$ will also be put back. However if $B(r)$ has some other omitted atom, then $r$ will be put back as a choice rule. If this rule was also applicable in the abstract program for the given interpretation $I$, then once it has been put back as a choice rule, it will still be applicable for some $I' = I \cup \{a\}$ or $I''=I$. Thus, the choice over $H(r)$ may again have the same spurious answer set determined.

\begin{figure}[t]
\caption{Heuristic over $\mi{badomit}$ detection.  The three entries in a cell,
e.g., 0.41 / 0.51 / 0.63 in cell (DS,
$\frac{|\omitt_{\mi{final}}|}{|\Lits|})$, are for 50\% / 75\% /
100\% initial omission.}
\label{fig:badomit}

\medskip

\renewcommand{\arraystretch}{1.05}
\begin{tabular}{c|c|c|c||c|c}
     &  & &  & & \\
\myraise{$\Pi$} &  \myfrac{|\omitt_{\mi{final}}|}{|\Lits|} & \myraise{Ref \#}
& \myraise{$t$ (sec)}
& \myfrac{|\minBlockerSet|}{|\Lits|}
& \myraise{$t$ (sec)} \\
\cline{1-6}
\multirow{3}{*}{DS}& 0.41 & 1.57 & 2.96 & 0.11 & 1.51\\
\cline{2-6}
&  0.51 & 3.03 & 5.06 & 0.10 & 1.00\\
\cline{2-6}
&  0.63 & 4.45 & 7.12 & 0.09 & 0.55 \\
\cline{1-6}
\multirow{3}{*}{AA}& 0.11 & 5.02 & 8.91 & 0.37 & 9.66\\
\cline{2-6}
&  0.13 & 6.91 & 12.38 & 0.36 & 9.14\\
\cline{2-6}
&  0.15 & 8.11 & 14.27 & 0.35 & 8.86 \\
\cline{1-6}
\end{tabular}
\end{figure}

\paragraph{Experiments.} Figure~\ref{fig:badomit} shows the conducted
    experiments with the additional bad omission detection. Observe
    that compared with the results in Figure~\ref{fig:main}, for the
    DS benchmarks the number of refinement steps and the time spent
    decreased since more omitted atoms were decided to be badly
    omitted in one step. Also we can see that the final set
    $\omitt_{final}$ of omitted atoms remains larger with the
    heuristics. On the other hand, this heuristic does not have a
    positive effect on the quality of the obtained minimal
    blockers. However, the results for the AA benchmarks are
    different. Although a larger final set of omitted atoms
    $\omitt_{final}$ is computed for $\omitt_{init}$ with 100\%
    (15\% instead of 0.01\% in Figure~\ref{fig:main}), the overall
    time spent and the refinement steps for obtaining some
    $\omitt_{final}$ increased. 
    On the other hand, smaller minimal blockers were computed.

The results show that the considered strategy does not obtain the
expected results on every program, as the structure of the programs
matters.

\subsubsection{Initial omission set selection}
\label{subsubsec:init}

A possible strategy for setting up the initial omission set is to look at the
occurrences of atoms in rule bodies and to select atoms that occur
least often, as intuitively, atoms that occur less in the
rules should be less relevant with the unsatisfiability.
 
\paragraph{Experiments} In Figure~\ref{fig:init} we see the results of
choosing as initial omission 50\% and 75\% of the objects in
increasing order by number of their occurrences. In the benchmarks GC$^3$, when omitting 75\% of the least occurring nodes, 
two of the instances hit timeout during the Clingo call when searching
for an optimal number of $\mi{badomit}$ atoms, and one instance
hits timeout when computing some $\omitt_{final}$,
again spending most of the time in Clingo calls. 
The time increase for finding some optimized number of $\mi{badomit}$
atoms 
is due to many possible $\mi{badomit}$ atoms among the omitted atoms
in the particular instances.

An interesting observation is that omitting 75\% of the least
occurring nodes results in larger $\omitt_{final}$ sets: while random
omission removes on average 31\% of the atoms (Figure~\ref{fig:main}),
with the strategy added it increases to 67\%.  This result matches the
intuition behind the strategy: the nodes that are not involved in the
reasoning should not really be the cause of non-colorability. We also
observe a positive effect on the quality of the computed
$\subseteq$-minimal blocker sets, which are smaller in size, only
15\% of the atoms for 50\% and 75\% initial omission, while before
they were 16\% and 17\% (Figure~\ref{fig:main}), respectively.

For the AA benchmarks, compared to Figure~\ref{fig:main} the strategy
made it possible to obtain larger $\omitt_{final}$ sets. However,
overall it does not show a considerable effect on the number of
refinement steps or on the quality of the computed $\subseteq$-minimal
blocker sets as in GC$^3$. We additionally performed experiments with
full minimization of $\mi{badomit}\#$ in the refinement step
(Figure~\ref{fig:initmin}). Compared to the results in
Figure~\ref{fig:min}, we can observe that larger $\omitt_{final}$ sets 
were obtained, and there were no timeouts when
determining the $\mi{badomit}$ atoms in the refinement steps. The
search for optimizing the number of $\mi{badomit}$ atoms is easier due
to doing the search among the omitted atoms that have the least
dependency.

For the DS benchmarks, although the 
strategy reduced the average refinement steps and time, it had a
negative effect on the quality of the $\subseteq$-minimal blocker sets
as they are much larger (13\% and 14\% for initial omission of 50\%
and 75\% of tasks, instead of 10\% and 9\% as in
Figure~\ref{fig:main}, respectively). 

\begin{figure}[t]
\caption{Heuristic over $\omitt_{init}$. The two entries in a cell,
e.g., 0.48 / 0.67 in cell (GC$^3$,
$\frac{|\omitt_{\mi{final}}|}{|\Lits|})$, are for 50\% / 75\% initial omission.}
\label{fig:init}

\medskip

\renewcommand{\arraystretch}{1.05}
\begin{tabular}{c|c|c|c||c|c}
     &  & &  & & \\
\myraise{$\Pi$} &  \myfrac{|\omitt_{\mi{final}}|}{|\Lits|} & \myraise{Ref \#}
& \myraise{$t$ (sec)}
& \myfrac{|\minBlockerSet|}{|\Lits|}
& \myraise{$t$ (sec)} \\
\cline{1-6}
\multirow{2}{*}{GC$^3$}& 0.48 & 0.26 & 1.42 & 0.15 & 1.33\\
\cline{2-6}
&  0.67 & 1.06 & 2.46 (3) & 0.15 & 0.62\\
\cline{1-6}
\multirow{2}{*}{AA}& 0.22 & 3.22 & 6.69 & 0.37 & 8.25\\
\cline{2-6}
&  0.23 & 4.20 & 8.77 & 0.37 & 8.08\\
\cline{1-6}
\multirow{2}{*}{DS}& 0.35 & 0.38 & 1.66& 0.13 & 2.46\\
\cline{2-6}
&  0.42 & 1.88 & 4.50 & 0.14 & 2.24\\
\cline{1-6}
\end{tabular}
\end{figure}

\begin{figure}[t]
\caption{Heuristic over $\omitt_{init}$ with full minimization on
  $\mi{badomit}\#$. The two entries in a cell,
e.g., 0.28 / 0.35 in cell (AA,
$\frac{|\omitt_{\mi{final}}|}{|\Lits|})$, are for 50\% / 75\% initial omission.}
\label{fig:initmin}
\medskip
\renewcommand{\arraystretch}{1.05}
\begin{tabular}{c|c|c|c||c|c}
     &  & &  & & \\
\myraise{$\Pi$} &  \myfrac{|\omitt_{\mi{final}}|}{|\Lits|} & \myraise{Ref \#}
& \myraise{$t$ (sec)}
& \myfrac{|\minBlockerSet|}{|\Lits|}
& \myraise{$t$ (sec)} \\
\cline{1-6}
\multirow{2}{*}{AA}& 0.28 & 7.49 & 14.29 & 0.35 & 7.62\\
\cline{2-6}
&  0.35 & 11.07 & 24.31 & 0.35 & 6.87\\
\cline{1-6}
\end{tabular}
\end{figure}


\section{Related Work}
\label{sec:related}

Although abstraction is a well-known approach to reduce problem
complexity in computer science and artificial intelligence, it has not
been considered so far in ASP.  In the context of logic programming,
abstraction has been studied many years back in \cite{COUSOT1992103}.
However, the focus was on the use of abstract interpretations 
and termination analysis of programs, and moreover stable semantics
was not addressed.  
In planning, abstraction has been used for different purposes; 
two main applications
are plan
refinement \cite{sacerdoti1974planning,knoblock1994automatically},
which is concerned with 
using abstract plans computed in
an abstract space to find a concrete plan, 
while abstraction-based
heuristics \cite{edelkamp2001planning,helmert2014merge} 
deal with using the costs of 
abstract solutions
as a heuristic 
to guide the
search for a plan. 
Pattern databases
\cite{edelkamp2001planning} are a notion of abstraction
which aims at projecting the state space to a set of
variables, called a 'pattern'.
In contrast, merge\,\&\,shrink
abstraction \cite{helmert2014merge} starts with a 
suite of single projections, and then computes
a final abstraction by merging them and shrinking.
In the sequel, we address related issues in the realm of ASP.

\revisedversion{
\subsection{Relaxation- and Equivalence-based Rewriting}

Over-approximation has been considered in ASP through 
\emph{relaxation} methods
\cite{lin2004assat,giunchiglia2004sat}. These methods translate a
ground program into its completion \cite{clark1978negation} and search
for an answer set over the relaxed model.
Omission abstraction is able to achieve an over-approximation by also
reducing the vocabulary which makes it possible to focus on a certain
set of atoms when computing an abstract answer set.  However, finding
the reason for spuriousness of an abstract answer set is trickier than
finding the reason for a model of the completion not being an answer
set of the original program, since the abstract answer set contains
fewer atoms and a search over the original program has to be done to
detect the reason why a matching answer set cannot be found.

Under answer set semantics, a program $\Pi_1$ is equivalent to a program
$\Pi_2$, if $\mi{AS}(\Pi_1)=\mi{AS}(\Pi_2)$. \emph{Strong equivalence}
\cite{Lifschitz:2001:SEL:383779.383783} is a much stricter condition
over the two programs that accounts for nonmonotonicity: $\Pi_1$ and $\Pi_2$ are strongly equivalent if,
for any set $R$ of rules, the programs $\Pi_1 \cup R$ and $\Pi_2 \cup
R$ are equivalent. This is the notion that makes it possible to
simplify a part of a logic program without looking at the rest of it:
if a subprogram $Q$ of $\Pi$ is strongly equivalent to a simpler
program $Q'$, the $Q$ is replaced by $Q'$. The works
\cite{10.1007/3-540-45607-4_4,turner2003strong,10.1007/978-3-540-24609-1_10,10.1007/978-3-540-27775-0_15}
show ways of transforming programs by ensuring that the property
holds. A more liberal notion is \emph{uniform equivalence}
\cite{10.1007/3-540-16492-8_91,Sagiv:1987:ODP:28659.28696} where $R$
is restricted to a set of facts. Then, a subprogram $Q$ in $\Pi$ can
be replaced by a uniformly equivalent program $Q'$ and the main
structure will not be affected \cite{eiter2003uniform}.

In terms of abstraction, there is the abstraction mapping that needs to be taken into account, since the constructed program may contain a modified language and the mapping makes it possible to relate it back to the original language. Thus, in order to define equivalence between the original program $\Pi$ and its abstraction $\widehat{\Pi}^m$ according to a mapping $m$, we need to compare $m(\mi{AS}(\Pi))$ with $\mi{AS}(\newrev{\Pi^m}{\widehat{\Pi}^m})$. The equivalence of  $\Pi$ and  $\widehat{\Pi}^m$ then becomes similar to the notion of faithfulness. However, as we have shown, even if the abstract program $\widehat{\Pi}^m$ is faithful, refining $m$ may lead to an abstract program that contains spurious answer sets. Thus, simply lifting the current notions of equivalence to abstraction may not achieve useful results.

Refinement-safe faithfulness however is a property that would allow one to make use of $\widehat{\Pi}^m$ instead of $\Pi$, since it preserves the answer sets. This property can immediately be achieved when a constructed abstract program is unsatisfiable (which then shows that original program was unsatisfiable). However, for original programs that are consistent, reaching an abstraction that is refinement-safe faithful is not easy; adding an atom back 
may immediately cause to reach a guessing that introduces spurious solutions.


The unfolding method for disjunctive programs
\cite{Janhunen:2006:UPD:1119439.1119440} is similar in spirit to our
approach of introducing choice to the head for uncertainties. For a
given disjunctive program $P$, they create a \emph{generating program}
that preserves completeness. Using this program, they generate model
candidates $M$ (but they may also get ``extra'' candidate models,
which do not match the stable models of $P$). Then they \emph{test}
for stability of the candidates, by building
a normal program $\mi{Test}(P, M)$ that has no stable models
if and only if $M$ is a stable model of the original disjunctive
program $P$. Thus, stability testing is reduced to testing the nonexistence of stable models for $\mi{Test}(P, M)$.
However, this approach does not consider omission of atoms from the disjunctive rules when creating the new program; they further extend the vocabulary with auxiliary atoms.
They build the model candidate gradually by starting from an empty
partial interpretation and extending it step by step. For this, they
use the observation that if for the extension $M$ of the partial
interpretation that assigns false to the undefined atoms,
$\mi{Test}(P, M)$ has a stable model, then $P$ has no stable model $M'
\supset M$. Compared to the notions that we introduced for omission-based
abstraction, this technique would give a more restricted notion of spuriousness of an abstract answer set, since the omitted atoms would be assigned to false.

}

\subsection{ASP Debugging} 

Investigating inconsistent ASP programs has been addressed in several
works on debugging
\cite{brain2007debugging,oetsch2010catching,dodaro2015interactive,gebser2008meta},
where the basic assumption is that one has an inconsistent program and
an interpretation 
as expected answer set. In our case, we do
not have a candidate solution but are interested in finding the
minimal projection of the program that is inconsistent. Through
abstraction and refinement, we are obtaining candidate abstract answer
sets to check in the original program. Importantly, the aim is not to debug
the program itself, but to debug (and refine) the abstraction that has
been constructed.

Different from other works, \cite{dodaro2015interactive} computed the
unsatisfiable cores (i.e., the set of atoms that, if true, causes
inconsistency) for a set of assumption atoms and finds a diagnosis
with it. \rev{The user is queried about the expected behavior, to narrow
down the diagnosed set.}{The user interacts with the debugger by answering queries on an expected answer set, to narrow down the diagnosed set.} In our work, such an interaction is not
required and the set of blocker atoms that was found points to an abstract
program (a projection of the original program) which shows all the
rules (or projection of the rules) that are related with the
inconsistency.

The work by \cite{syrjanen2006debugging} is based on 
identifying the conflict sets that contain
mutually incompatible constraints. However, for large programs, the
smallest input program where the error happens must be found manually.
Another related work is \cite{pontelli2009justifications}, which gives
justifications for the truth values of atoms with respect to an answer
set by graph-based explanations that encode the reasons for these
values. Notably, justifications can be computed offline or online
when computing an answer set, where they may be utilized
for program debugging purposes. The authors demonstrated how their
approach can be used to guide the search for consistency restoring in
CR-Prolog \cite{Balduccini03logicprograms}, by identifying restoral
rules that are needed to resolve conflicts between literals
detected from their justifications. However, the latter hinge on
(possibly partial) interpretations, and thus do not provide a strong explanation
of inconsistency as blockers, which are independent of
particular interpretations.

\subsection{Unsatisfiable Cores in ASP} 

A well-known notion for unsatisfiability are minimal unsatisfiable
subsets (MUS), also known as \emph{unsatisfiable cores}\/
\cite{liffiton2008algorithms,DBLP:conf/sat/LynceM04}. It is based on
computing, given a set of constraints respectively formulas, a minimal
subset of the constraints that explains why the overall set is
unsatisfiable.  Unsatisfiable cores are helpful in speeding up automated reasoning,
but have beyond many applications and a key role e.g.\ in model-based diagnosis
\cite{DBLP:journals/ai/Reiter87} and in consistent query answering
\cite{DBLP:conf/pods/ArenasBC99}.

In ASP, unsatisfiable cores have been used in the context of computing
optimal answer sets
\cite{alviano2016anytime,andres2012unsatisfiability}, where for a
given (satisfiable) program, weak constraints are turned into hard constraints; an
unsatisfiable core of the modified program that consists of
rewritten constraints allows one to derive an underestimate for the cost
of an optimal answer set, since at least one of the constraints in the
core cannot be satisfied. However, if the original program is
unsatisfiable, such cores  are pointless. In the recent work \cite{alviano2018cautious},
unsatisfiable core computation has been used for implementing cautious
reasoning. The idea is that modern ASP solvers allow one to search,
given a set of assumption literals, for an answer set. In case of
failure, a subset of these literals is returned that is
sufficient to cause the failure, which constitutes an unsatisfiable
core. Cautious consequence of an atom amounts then to showing  that the
negated atom is an unsatisfiable core.


Intuitively, unsatisfiable cores are similar in nature to spurious
abstract answer sets, since the latter likewise do not permit to
complete a partial answer set to the whole alphabet. More formally,
their relationship is as follows. 

Technically, \rev{}{in our terms} an {\em unsatisfiable (u-) core} for a program $\Pi$ is an assignment $I$ over
a subset $C\subseteq \Lits$ of the atoms such that $\Pi$ has no answer
set $J$ that is compatible
with $I$, i.e., such that $J|_C = I$ holds\rev{.}{; moreover, $I$ is
{\em minimal}, if no sub-assignment $I'$, i.e., restriction of $I$ to
some subset $C'\subset C$ of the atoms) is a u-core, cf. \cite{alviano2018cautious}.}
We then have the following property.

\begin{prop}
Suppose that $\hat{I} \in AS(\mi{omit}(\Pi,\omitt))$ for a program $\Pi$ and a set $\omitt$  of atoms. 
If $\hat{I}$ is spurious, then $\hat{I}$ is a u-core
of $\Pi$ (w.r.t.\ $\Lits\setminus\omitt$). Furthermore, if $\omitt$ is
maximal, i.e., no $\omitt' \supset \omitt$ exists such that
$\mi{omit}(\Pi,\omitt')$ has some (spurious) answer set $\widehat{I}'$ such that
$\hat{I}|_{\ol{\omitt'}} = \widehat{I'}$, then $\hat{I}$ is a minimal core. 
\end{prop}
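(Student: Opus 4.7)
The plan is to verify both claims directly from the definitions, leveraging Theorem~\ref{thm:abs} and Proposition~\ref{prop:order} to transfer answer sets across abstractions.

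For the first claim, I would unfold the definitions. By assumption $\hat{I}$ is a spurious answer set of $\mi{omit}(\Pi,\omitt)$, which by Definition~\ref{def:concrete} means $\hat{I} \notin AS(\Pi)|_{\ol{\omitt}}$; equivalently, no $J \in AS(\Pi)$ satisfies $J|_{\ol{\omitt}} = \hat{I}$. Setting $C = \Lits \setminus \omitt = \ol{\omitt}$, this is exactly the u-core condition: $\hat{I}$ is an assignment on $C$ with no compatible answer set of $\Pi$. Hence $\hat{I}$ is a u-core of $\Pi$ w.r.t.\ $\ol{\omitt}$.

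For the second claim, I would argue by contradiction. Suppose $\hat{I}$ is not a minimal u-core; then some $C' \subset \ol{\omitt}$ exists such that $\hat{I}' := \hat{I}|_{C'}$ is again a u-core, i.e.\ no $J \in AS(\Pi)$ has $J|_{C'} = \hat{I}'$. Let $\omitt' := \Lits \setminus C'$, so $\omitt' \supset \omitt$. The aim is to show that $\hat{I}'$ is a spurious answer set of $\mi{omit}(\Pi,\omitt')$ matching $\hat{I}$ on $\ol{\omitt'}$, which would contradict the maximality of $\omitt$.

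Two steps remain. First, $\hat{I}' \in AS(\mi{omit}(\Pi,\omitt'))$: by Proposition~\ref{prop:order}, $\mi{omit}(\Pi,\omitt') = \mi{omit}(\mi{omit}(\Pi,\omitt),\omitt'\setminus\omitt)$, so applying Theorem~\ref{thm:abs} to the abstract program $\mi{omit}(\Pi,\omitt)$ (which has $\hat{I}$ as an answer set) with omission set $\omitt'\setminus\omitt$ yields $\hat{I}|_{\ol{\omitt'}} = \hat{I}' \in AS(\mi{omit}(\Pi,\omitt'))$. Second, $\hat{I}'$ is spurious: since $\hat{I}'$ is a u-core of $\Pi$ w.r.t.\ $C' = \ol{\omitt'}$, no $J \in AS(\Pi)$ has $J|_{\ol{\omitt'}} = \hat{I}'$, so by Definition~\ref{def:concrete} $\hat{I}'$ is a spurious abstract answer set. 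Clearly $\hat{I}|_{\ol{\omitt'}} = \hat{I}'$. This contradicts the maximality assumption on $\omitt$, finishing the proof.

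The two parts are essentially bookkeeping once the correspondence $C = \ol{\omitt}$ is observed; the only mildly subtle point is the transfer step in the minimality argument, which needs both Proposition~\ref{prop:order} (to decompose the abstraction) and Theorem~\ref{thm:abs} (to push $\hat{I}$ through to $\mi{omit}(\Pi,\omitt')$). I expect no real obstacle.
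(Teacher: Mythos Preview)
Your proposal is correct and follows essentially the same approach as the paper's proof: both unfold the definitions for the first claim, and for the second claim argue by contradiction, using Proposition~\ref{prop:order} together with Theorem~\ref{thm:abs} to show that the restricted assignment $\hat{I}'$ is an answer set of $\mi{omit}(\Pi,\omitt')$, and then infer spuriousness from the u-core property to contradict maximality. Your write-up is slightly more explicit about the decomposition $\mi{omit}(\Pi,\omitt') = \mi{omit}(\mi{omit}(\Pi,\omitt),\omitt'\setminus\omitt)$, but the argument is the same.
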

\rev{}{\begin{proof} The abstract answer set $\hat{I}$ describes an
assignment over $\Lits\setminus\omitt$, and as $\hat{I}$ is spurious,
there is no answer set $J$ of $\Pi$ such that
$J|_{\Lits\setminus\omitt}=\hat{I}$; hence $\hat{I}$ is a u-core.
Now towards a contradiction assume that $\omitt$ is maximal but 
$\hat{I}$ is not a minimal u-core. The latter means that
some sub-assignment $\hat{I}'$ of $\hat{I}$, i.e.,  
restriction $\hat{I}'=\hat{I}|_{\ol{\omitt'}}$ of $\hat{I}$ to $\Lits\setminus\omitt'$ for some $\omitt'\supset \omitt$,
is a u-core for $\Pi$.
By over-approximation of abstraction (Theorem~\ref{thm:abs}) and
the possibility of iterative construction
(Proposition~\ref{prop:order}), we conclude
that $\hat{I}' \in \AS(\mi{omit}(\Pi,\omitt'))$ must hold. Since 
$\hat{I}'$
is a u-core, it follows that $\hat{I}'$ is spurious. By this, we reach a
contradiction to the assumption that $A$ is maximal.
\end{proof}}
That is, spurious answer sets are u-cores; however, the converse 
fails in that cores $C$ are not necessarily spurious answer sets of the
corresponding omission $\omitt=\Lits\setminus \Lits(C)$, where
$\Lits(C)$ are the atoms that occur in $C$.
E.g., for the program with the single rule
$$
r:  a \gets b, \naf a.
$$
\noindent the set $C\,{=}\,\{ b\}$ is a core, while $C$ is not an
answer set of $\mi{omit}(\{r\},\{a\}) = \emptyset$. Intuitively, the
reason is that $C$ lacks foundedness for the abstraction, as it
assigns $b$ true while there is no way to derive $b$ from the rules of
the program, and thus $b$ must be false in every answer set.
As $C$ is a minimal u-core, the example shows that also
minimal u-cores may not be spurious answer sets. 

Thus, spurious answer sets are a more fine-grained notion of relative
inconsistency than (minimal) u-cores, which accounts for a notion of
weak satisfiability in terms of the abstracted program. In case of an
unsatisfiable program $\Pi$, each blocker set $C$ for $\Pi$ naturally
gives rise to u-cores in terms of arbitrary assignments $I$ to the
atoms in $\Lits\setminus C$; in this sense, blocker sets are
conceptually a stronger notion of inconsistency explanation than
u-cores, in which minimal blocker sets and minimal u-cores remain
unrelated in general.

\subsection{Forgetting}

Forgetting is an important operation in knowledge representation and
reasoning, which has been studied for many formalisms and is a helpful
tool for a range of applications,
cf.\ \cite{DBLP:journals/jair/Delgrande17,eite-kern-forget-ki-18}.
The aim of forgetting is to reduce the signature of a knowledge base,
by removing symbols from the formulas in it (while possibly adding new
formulas) such that the information in the knowledge base, given 
by its semantics that may be defined in terms of models or a
consequence relation, is invariant with respect to the remaining
symbols; that is, the models resp.\ consequences for them should not change
after forgetting. 

Due to nonmonotonicity and minimality of models, forgetting in ASP
turned out to be a nontrivial issue. It has been extensively studied in the form of
introducing specific operators that follow different principles and
obey different properties; we refer to
\cite{DBLP:journals/tplp/GoncalvesKLW17,DBLP:conf/lpnmr/Leite17}
for a survey and discussion.
The main aim of forgetting in ASP as such is to remove/hide atoms from
a given program, while preserving its semantics for the remaining
atoms. As atoms in answer sets must be derivable, this requires to
maintain dependency links  between atoms. For example, forgetting the atom
$b$ from the program $\Pi = \{a \leftarrow b.;$ $b\leftarrow c.\}$ is expected
to result in a program $\Pi'$ in which the link between $a$ and $c$ is
preserved; this intuitively requires to have the rule $a \leftarrow c$
in $\Pi'$. The various properties that have been introduced as 
postulates or desired properties for an ASP forgetting operator mainly serve to ensure this
outcome; forgetting in ASP is thus subject to more restricted conditions than abstraction.

Atom omission as we consider it is different from forgetting in ASP as it aims
at a deliberate over-approximation of the original program that may not be
faithful; furthermore, our omission does not
resort to language extensions such as nested logic programs that
might be necessary in order to exclude non-faithful abstraction; notably,
in the ASP literature under-approximation of the answer sets was
advocated if no language extensions should be made
\cite{DBLP:journals/ai/EiterW08}.

Only more recently over-approximation has been considered as a
possible property of forgetting in ASP in \cite{delgrande2015syntax},
which was later named \emph{Weakened Consequence (WC)} in
\cite{gonccalves2016ultimate}:
\begin{description}
\item[(WC)] Let $\Pi$ be a disjunctive logic program, let $A$ be a set of atoms, and let $X$ be an answer set for $\Pi$. Then $X\setminus A$ is an answer set for $\mi{forget}(\Pi,A)$.
\end{description}
That is, $AS(\Pi)|_{\ol{A}} \subseteq AS(\mi{forget}(\Pi,A))$ should hold. This
property amounts to the notion of over-approximation that we achieve
in Theorem~\ref{thm:abs}.  However, according to
\cite{gonccalves2016ultimate}, this property is in terms of proper
forgetting only meaningful if it is combined with further axioms.
Our results may thus serve as a base for obtaining such combinations;
in turn, imposing further properties may allow us to prune spurious
answer sets from the abstraction.

\section{Conclusion}
\label{sec:conclusion}

Abstraction is a well-known approach to reduce problem complexity by
stepping to simpler, less detailed models or descriptions.  In this
article, we have considered this hitherto in Answer Set Programming
neglected approach, and we have presented a novel method for
abstracting ASP programs by omitting atoms from the rules of the
programs. The resulting abstract program can be efficiently
constructed, has rules similar to the original program and is a
semantic over-approximation of the latter, i.e., each original answer
set is covered by some abstract answer set. We have investigated
semantic and computational properties of the abstraction method, and
we have presented a refinement method for eliminating spurious answer
sets by adding badly omitted atoms back. The latter are determined
using an approach inspired from previous work on debugging ASP programs.

An abstraction and refinement approach, like the one that we presented,
may be used for different purposes. We have demonstrated as a show
case giving explanations of the unsatisfiability of ASP programs,
which can be achieved in terms of particular sets of omitted atoms,
called blockers, for which no truth assignment will lead to an answer
set.  Thanks to the structure-preserving nature of the abstraction
method, this allows one to narrow down the focus of attention to the
rules associated with the blockers.  Experimental results collected
with a prototype implementations have shown that, in this way, strong
explanations for the cause of inconsistency can be found.  They would
not have been easily visible if we had applied a pure semantic
approach in which connections between atoms might get lost by
abstractions.  We have briefly discussed how the approach may be
extended to the non-ground case and to disjunctive programs, and we
have addressed some further aspects that can help with the search.

\paragraph{Outlook and future work.} 
There are several avenues of research in order to advance and
complement this initial work on abstraction in ASP. Regarding
over-approximation, the current abstraction method can be made more
sophisticated in order to avoid introducing too many spurious answer
sets. This, however, will require to conduct a more extensive program
analysis, as well as to have non-modular program abstraction
procedures which do not operate on a rule by rule basis; to what
extent the program structure can be obtained, and understanding the
trade-off between program similarity and answer set similarity are
interesting research questions.

\revisedversion{Faithful abstractions achieve a projection of the original answer sets, 
which we conjecture to be faster to compute in the abstract program. However, reaching a faithful abstraction is not easy, and furthermore, checking the correctness of a computed abstract answer set is costly, as one needs to complete the partial (abstract) answer set in the original program. Further investigations are required in this direction to make it possible to start with a ``good" initial abstraction and to efficiently reach a (faithful) abstraction with a concrete solution. This would then make it possible to use abstraction for certain reasoning tasks on ASP programs such as brave or cautious reasoning, or to compute a concrete answer set for programs with grounding or search issues.}

Another direction is building a highly efficient implementation.  The
current experimental prototype has been built on top of legacy code
and tools such as Spock \cite{brain2007debugging} from previous works;
there is a lot of room for significant performance improvement.
However, even for the current, non-optimized implementation, it is
already possible to see benefits in terms of qualitative improvements
of the results. An optimized implementation may lead to view abstraction under a performance aspect, which then
becomes part of a general ASP solving toolbox.

Yet another direction is to broaden the classes of programs to which
abstraction can be fruitfully applied. We have briefly discussed
non-ground and disjunctive programs, for which abstraction needs to be
worked out, but also other language extensions such as aggregates,
nested implication or program modules (which are naturally close
relatives to abstraction) are interesting topics. In particular, for
non-ground programs other, natural forms of abstraction are feasible;
e.g., to abstract over individuals of the domain of discourse, or
predicate abstraction. The companion work
\cite{zgs19jelia}
studies the former issue.

\bigskip

\noindent{\bf Acknowledgments.}
This work has been supported by the Austrian Science Fund (FWF)
project W1255-N23. 
\rev{}{We thank the reviewers for their constructive comments to
  improve this paper, and we
  are in particular grateful for the suggested correction of an error in the original proof of
Theorem~\ref{thm-one}.}

\bibliographystyle{acmtrans}
\bibliography{ref_thesis}

\appendix

\section{Proofs}
\label{app:proofs}

\begin{proof}[Proof of Theorem~\ref{thm-one}]
 As for membership in (i), we can compute such a set $PB$ by an
elimination procedure as follows. Starting with  $\omitt'= \emptyset$, we repeatedly pick
some atom $\alpha \in \omitt\setminus \omitt'$ and test the following
condition:
\begin{enumerate}[(+)]
\item for $\omitt'' = \omitt' \cup \{ \alpha\}$, the program $\mi{omit}(\Pi,\omitt'')$ has no answer
set $\widehat{I''}$ such that $\widehat{I''}|_{\ol{\omitt}} = \hat{I}$.
\end{enumerate}
If (+) holds, we set $\omitt' := \omitt''$ and
make the next pick from $\omitt\setminus \omitt'$. 
Upon termination, $PB=\omitt\setminus \omitt'$ is a minimal put-back set.
The correctness of this procedure follows from
Proposition~\ref{prop:eliminate}, by which the elimination of spurious
answer sets is anti-monotonic in the set $A$ of atoms to omit.
As for the effort, the test (+) can be done
in polynomial time with an \NP oracle; from this, membership in
in  $\FPNP$ follows.

The hardness for $\FPNPpar$  is shown by a
reduction from computing, given normal logic programs $\Pi_1,\ldots,\Pi_n$ on disjoint
sets $X_1,\ldots,X_n$ of atoms, the answers $q_1,\ldots,q_n$ to whether $\Pi_i$ has some
answer set ($q_i=1$) or not ($q_i=0$).%
\footnote{We are indebted to a reviewer pointing out an error in the
original reduction, which we replace by an elegant one suggested by
the reviewer.}

\rev{
To this end, we use fresh atoms $a_i,b_i,c_i$ and construct
\begin{eqnarray*}
\Pi'_i  &=&  \{ H(r)\gets B(r),\naf b_i ,\naf a_i \mid r \in \Pi_i \}
    \cup \\
      && \{ a_i \gets x, \naf x \mid x \in X_i  \}  \cup \{ b_i \gets
    \naf c_i;\ \ c_i \gets \naf b_i\}.
\end{eqnarray*}
The program $\Pi'_i$ has the answer sets $\{b_i\}$ and $\{c_i\} \cup I_i$ where $I_i$ is any answer set of $\Pi_i$. Then for $\omitt_i=\Lits\setminus\{a_i,b_i\}$, we
have that $I_i = \{ a_i \}$ is a spurious answer set of
$\mi{omit}(\Pi'_i,\omitt_i)$. To eliminate $I_i$ with $\omitt'_i\subseteq\omitt_i$, 
must put all atoms $x\in X_i$ back: otherwise
$\mi{omit}(\Pi'_i,\omitt'_i)$ contains $\{ a_i\}$, and thus 
regardless of whether $c_i \in \omitt'_i$,
$\mi{omit}(\Pi'_i,\omitt'_i)$ has some answer set $\widehat{J}_i$ such
that $\widehat{J}_i|_{\{a_i,b_i\}} = \hat{I}_i$. Moreover, if all 
$X_i$ are put back (i.e., $\omitt'_i = \Lits\setminus(X \cup
\{a_i,b_i\}) = \{ c_i\}$), then $\mi{omit}(\Pi'_i,\omitt'_i)$  has some 
answer set  $\widehat{J}_i$ such
that $\widehat{J}_i|_{\{a_i,b_i\}} = \hat{I}_i$ iff $\Pi_i$ has
some answer set $I$: if such a $\widehat{J}_i$ exists and since
$\widehat{J}_i \models H(r)\gets B(r),\naf b_i ,\naf a_i$ for each
$r\in \Pi_i$, it follows that $\widehat{J}_i\models H(r)\leftarrow B(r)$
and in fact that $\widehat{J}_i$ is an answer set of
$\mi{omit}(\Pi'_i,\omitt'_i)$ such that $c_i \in \widehat{J}_i$, and
thus $\Pi_i$ has some answer set; $I_i \in 
AS(\Pi_i)$ implies that $\widehat{J}_i = I_i \cup \{c\}\in AS(\mi{omit}(\Pi'_i,\omitt'_i))$. That is, $PB_i = X_i$ is a put-back set
for $\widehat{I_i}$, and moreover the unique $\subseteq$-minimal
put-back set iff $\Pi_i$ has some answer set. 

We construct the final program as $\Pi' = \bigcup_{i=1}^n \Pi'_i \cup
\{ a_i \leftarrow a_j \mid  1 \leq i\neq j \leq n \}$. Then, $\hat{I}
= \{ a_1,\ldots, a_n \}$ is a spurious answer set of
$\mi{omit}(\Pi',\Lits\setminus\bigcup_{i=1}^n\{a_i,b_i\})$, and
has a unique $\subseteq$-minimal put-back set $PB$ 
such that $c_i \notin BP$ iff $\Pi_i$ has some answer set; this 
proves $\FPNPpar$-hardness.}
{ 
To this end, we use fresh atoms $a_i$ and $b_i$ and construct
\begin{eqnarray*}
\Pi'_i  &= \{ &  a_i \gets \naf b_i \\
        && b_i \gets \naf a_i \\
        &&  \bot \gets \naf b_i\\ 
        &&  H(r)\gets B(r), a_i \quad\ \ r\in \Pi_i\\
        &&  y \gets x, \naf x \quad\quad\ \ x,y \in X_i \\
        &&  a_i \gets  x, \naf x\quad\qquad x \in X_i\\
        &&  b_i \gets  x, \naf x\quad\qquad x \in X_i \ \ \}
\end{eqnarray*}
Clearly, $\{a_i\}$ is an answer set of $\mi{omit}(\Pi',X_i \cup
\{b_i\})$, as the rule $a_i \gets \naf b_i$ is turned into a choice;
it is spurious, as
only this rule in $\Pi$ can derive $a_i$. However, this violates the constraint $\bot \gets \naf b_i$. 

Assuming w.l.o.g.\ that $\Pi_i$ includes no constraints, for every 
set $PB$ of atoms such that $X_i\not\subseteq PB$,  
the program $\mi{omit}(\Pi'_i,(X_i \cup \{b_i\}) \setminus PB)$ has
some answer set containing $a_i$, thanks to the abstraction of the 
rules with $x,\naf x$ in the body; thus $PB=X_i$ is the minimal
candidate for being a put-back set. Furthermore, if $\Pi_i$ has no
answer set, then $\emptyset$ is the single answer set of 
$\mi{omit}(\Pi'_i,\{b_i\})$ while if $\Pi_i$ has some answer set $S$,
then $\mi{omit}(\Pi'_i,\{b_i\})$ has the answer set $S \cup \{
a_i\}$. That is, $X_i$ is the (unique) $\subseteq$-minimal 
put-back set iff $\Pi_i$ has no answer set. 

We construct the final program as $\Pi' = \bigcup_{i=1}^n \Pi'_i$. Then, $\hat{I}
= \{ a_1,\ldots, a_n \}$ is a spurious answer set of
$\mi{omit}(\Pi',\bigcup_{i=1}^n X_i \cup \{b_i\})$, and every 
minimal put-back set $PB$ for $\hat{I}$  satisfies $b_i \in PB$ iff
$\Pi_i$ is satisfiable; this proves $\FPNPpar$-hardness.
}

As for (ii), the membership in $\FPSigmaP{2}[log,wit]$ holds as we can
decide the problem by a binary search for a
put-back set of bounded size using a $\Sigma^p_2$ witness oracle,
where the finally obtained put-back set is output.

The $\FPSigmaP{2}[log,wit]$ hardness is shown by a reduction from the
following problem. Given a QBF $\Phi=\exists X\forall Y\,E(X,Y)$, compute
a smallest size truth assignment $\sigma$ to $X$ such that $\forall
Y\,E(\sigma(X),Y)$ evaluates to true, knowing that some $\sigma$
with this property exists, where the size of $\sigma$ is the number of
        atoms set to true.

More specifically, we assume similar as in the proof of
Theorem~\ref{thm-complex-spurious} that $E(X,Y) = \bigvee_{i=1}^k D_i$ is a DNF
where every $D_i = l_{i_1} \land\cdots\land l_{i_{n_i}}$ is a
conjunction of literals over $X = \{x_1,\ldots,x_n\}$ and $Y =
\{y_1,\ldots,y_m\}$ that contains some literal over $Y$; 
moreover, we assume that $E(X,Y)$ is a tautology if all literals over $X$ are
removed from it.
To verify the latter assumption, we may rewrite $\Phi$ to
\begin{equation}
\label{qbf-1}
\exists X\forall Y\, \bigvee_{x_i\in X} (x_i\land \neg x_i \land y_j) \lor (x_i\land \neg x_i \land \neg y_j) \lor E(X,Y),
\end{equation}
for an arbitrary $y_j\in Y$, which has the desired property.

We set up a program $\Pi$ with rules shown in
Figure~\ref{fig:rules-thm12},
\begin{figure}
\begin{align}
  x_i.  \qquad  & \ol{x_i}. & i=1\ldots,n \\
  sat \gets & x_i, \naf x_i, \ol{x_i}, \naf \ol{x_i}.  & i=1\ldots,n \label{xr2} \\
  z_i \gets & \naf \ol{z_i}, \rev{ x_i,}{} \naf \ol{x_i}.  & i=1\ldots,n  \label{xr3} \\
  \ol{z_i} \gets & \naf z_i, \rev{\ol{x_i},}{} \naf x_i.  & i=1\ldots,n \label{xr4} \\
  y_j \gets & \naf \ol{y_j}, \naf sat. & j=1,\ldots,m  \label{xr5}\\
  \ol{y_j} \gets & \naf y_j, \naf sat. & j=1,\ldots,m \label{xr6}\\
  sat \gets & l_{i_1}^\circ,\ldots l_{i_{n_i}}^\circ. &  i=1,\ldots,k \label{sat-by-d}\\
  sat \gets & y_j, \naf  y_j. & j=1,\ldots,m \label{xr10}
  \\
  sat \gets & \ol{y_j}, \naf  \ol{y_j}. & j=1,\ldots,m\\
  sat \gets & z_i, \naf  z_i.    & i=1\ldots,n\\
  sat \gets & \ol{z_i}, \naf  \ol{z_i}.   & i=1\ldots,n\label{xr13}
\end{align}
\label{fig:rules-thm12}
\caption{Program rules for the proof of Theorem~\ref{thm-one}-(ii), first part}
\end{figure}
where $\ol{X} = \{ \ol{x}_i \mid x_i \in X \} $, $Z = \{ z_1,\ldots,
z_n\}$ and $\ol{Z} = \{ \ol{z}_i \mid z_i \in Z\}$ are copies of $X$
and $\ol{Y} = \{ \ol{y}_j \mid y_j \in Y\}$ is a copy of $Y$, and 
$l^\circ$ maps a literal $l$ over $X\cup Y$ to default literals over 
$Y \cup \ol{Y} \cup Z\cup \ol{Z}$ as follows:
$$
l^\circ = 
\left\{
\begin{array}{ll}
\naf z_i, & \text{ if } l=\neg x_i,\\
\naf \ol{z_i}, & \text{ if } l= x_i,\\
y_j, & \text{ if } l= y_j,\\
\ol{y}_j & \text{ if } l= \neg y_j.
\end{array}\right.
$$ 
We note that $\Pi$ has no answer set: due to the facts $x_i$ and
$\ol{x}_i$, none of the rules (\ref{xr2})--(\ref{xr4}) is applicable
and $z_i,\ol{z}_i$ must be false in every answer set of $\Pi$. This in
turn implies that in (\ref{sat-by-d}) all $\naf z_i$, $\naf \ol{z}_i$
literals are true. Now if we assume that $sat$ would be true in an
answer set of $\Pi$, then no rule in (\ref{xr5}) or (\ref{xr6}) would
be applicable to derive $y_j$ resp.\ $\ol{y}_j$, and then by the
assumption on $E(X,Y)$ no rule (\ref{sat-by-d}) is applicable; this
means that $sat$ is not reproducible and thus not in the answer
set, which is a contradiction. If on the other hand $sat$ would be
false in an answer set, then the rules (\ref{xr5}) and (\ref{xr6})
would guess a truth assignment to $Y$; by the tautology assumption on
$E(X,Y)$, some rule (\ref{sat-by-d}) is applicable and derives that
$sat$ is true, which is again a contradiction.
  
We then set $\omitt=\Lits$ and $\hat{I} = \emptyset$; clearly $\hat{I}$
is a spurious answer set of $\mi{omit}(\Pi,\omitt) = \emptyset$.

The idea behind this construction is as follows. As long as we do not
put back $sat$, the abstraction program $\mi{omit}(\Pi,\omitt')$ will
have some answer set. Furthermore, if we do not put back (a) either
$x_i$ or $\ol{x_i}$, for all $i=1,\ldots,n$, (b) both $z_i$ and
$\ol{z_i}$ for all $i=1,\ldots,n$ and (c) all $y_j$, $\ol{y_j}$, for
$j=1,\ldots,m$, then we can guess by (\ref{xr2})
resp.\ (\ref{xr10})--(\ref{xr13}) that $sat$ is true, which again
means that some answer set exists.  The rules (\ref{xr3})--(\ref{xr4})
serve then to provide with $z_i$ and $\ol{z_i}$ access to $x_i$ and
its negation $\neg x_i$, respectively.  More in detail, if we put back
$x_i$ but not $\ol{x_i}$, then $\mi{omit}(\Pi,\omitt')$ contains the
guessing \newrev{rules}{rule}
\rev{$r_i: \{ z_i \} \leftarrow \naf \ol{z_i},x_i$}{$r_i: \{ z_i \} \leftarrow \naf \ol{z_i}$} and
\newrev{$\ol{r_i}: \{ \ol{z_i} \} \leftarrow \naf z_i,\naf x_i$}{the rule $\ol{r_i}: \ol{z_i} \leftarrow \naf z_i,\naf x_i$} resulting
from (\ref{xr3}) and (\ref{xr4}), respectively. As in
$\mi{omit}(\Pi,\omitt')$ 
\rev{the fact $x_i.$ occurs and no rule has
$\ol{x_i}$ in the head,}{} the rule $\ol{r_i}$ is inapplicable and
\rev{$\ol{z_i}$ thus false;}{no other rule has
$\ol{z_i}$ in the head, the atom $\ol{z_i}$ must be false;} hence the rule $r_i$ amounts to a guess $\{
z_i \}$. If $z_i$ is guessed to be true, then $\naf z_i$ and $\naf
\ol{z_i}$ faithfully represent the value of the literals $\neg x_i$
and $x_i$ (where $x_i$ is true); this is injected into the rules
(\ref{sat-by-d}).  On the other hand, if $z_i$ is guessed false, then
both $\naf z_i$ and $\naf \ol{z_i}$ are true, which represents that
both $\neg x_i$ and $x_i$ are true;
\rev{if on such a guess, none
of the rules (\ref{sat-by-d}) fires (which would be necessary to
have an answer set), the same holds if $z_i$ is guessed true, as $z_i$
and $\ol{z_i}$ occur there only negated.}
{if guessing $z_i$ false leads to a 
(spurious) answer set of the abstract program $\mi{omit}(\Pi,\omitt')$
(in which $sat$ must be necessarily false), no rule (\ref{sat-by-d})
in which $z_i$ or $\ol{z_i}$ occurs can fire. As $z_i$ and $\ol{z_i}$
occur only negated in the rules (\ref{sat-by-d}), guessing $z_i$
true (where $z_i$ and $\ol{z_i}$ faithfully represent \newrev{$\neg x_i$ and
$x_i$}{$x_i$ and $\neg x_i$}, respectively) leads then also to an answer set of
$\mi{omit}(\Pi,\omitt')$. Thus, with respect to answer set existence, 
$z_i$ and $\ol{z_i}$ serve to access $x_i$ and $\neg x_i$.
}
The case of putting back $\ol{x_i}$ but not $x_i$ is symmetric.

The rules (\ref{xr5})--(\ref{xr6}) serve to guess an assignment 
$\mu$ to $Y$ (but this only works if $sat$ is false). The rules (\ref{sat-by-d})
check whether upon a combined assignment $\sigma \cup \mu$, the
formula $E(\sigma(X),\mu(Y))$ evaluates to true; if this is the case, $sat$ is
concluded which then however blocks the guessing in
(\ref{xr5})--(\ref{xr6}), and thus no answer set exists. Consequently,
$E(\sigma(X),\mu(Y))$ evaluates to true for all assignments
$\mu(Y)$, i.e., $\forall Y E(\sigma(X),Y)$
is true iff $sat$  can be concluded for each guess on $y_i$ and
$\ol{y_i}$, i.e., no answer set is possible for it.

In conclusion, it holds that some put-back set of size $s= |X|+2|X|+2|Y|+1$,
which is the smallest possible here, 
exists iff $\Phi$ evaluates to true.
Note that if we put back
a single further atom, for some $x_i \in X$ we have that $\ol{x_i}$ is
also a fact in $\mi{omit}(\Pi,\omitt')$, and thus by the special form
of $E(X,Y)$ in (\ref{qbf-1}), regardless of how one guesses on $y_j$
and $\ol{y_i}$, one can derive $sat$ again.
Thus the closest put-back set has either size $s$ or $s+1$. 

In order to discriminate among different $\sigma(X)$   and select the
smallest, we add further rules: %
\begin{align}
  sat \gets & \naf \ol{z_i}, c_i \qquad \rev{}{i=1,\ldots,n}   \label{rx20}  \\
  sat \gets & \naf \ol{z_i}, \naf z_i, \rev{d_1,\ldots,
    d_l}{c_1,\ldots, c_l} \label{rx21}
\end{align}
where all $c_i$ 
\rev{and $d_j$ are fresh atoms}{ are fresh atoms; we fix $l$ below}.%
\footnote{
\rev{}{Alternatively, for (\ref{rx21}) rules $sat \gets \naf
\ol{z_i}, \naf z_i, c_j$, $j=1,\ldots, l$ may be used.}}
Intuitively, when  $x_i$ is
put back, then $\neg z_i$ evaluates to true and $c_i$ must be 
put back as well in order to avoid guessing on $sat$.  Furthermore, if both
$x_i$ and $\ol{x_i}$ are put back, which means that $\naf z_i$  and
$\naf \ol{z_i}$ are true in every answer set, then all
\rev{$d_1,\ldots,d_n$}{$c_1,\ldots,c_l$} must
be put back as well. If exactly one of $x_i$ and $\ol{x_i}$, for all
$i=1,\ldots,n$ is put back and the corresponding assignment $\sigma(X)$ makes $\forall YE(\sigma(X),Y)$
true, then the closest put-back set has size $s+1+|\sigma|$; if we let 
$l$ be large enough, then putting both $x_i$ and $\ol{x_i}$ back is more
expensive than putting back a proper assignment and the associated
$c_i$ atoms\rev{.}{; in fact $l=n$ is sufficient}. As the
final program $\Pi$ is constructible in polynomial time from $\Phi$,
and the desired smallest $\sigma(X)$ is easily obtained from any smallest put-back set 
$PB$ for $\hat{I}$ the claimed result follows.
\end{proof}

\begin{proof}[Proof of Theorem~\ref{thm:debug-prop-rel}]
1. Assume towards a contradiction that $X' = X \cup \{\mi{ko}(n_r) \mid r \in \Pi_{A}^{c}\} \cup \{\mi{ap}(n_r) \mid r \in \Pi^X\} \cup \{\mi{bl}(n_r) \mid r \in \Pi \setminus \Pi^X\}$ is not answer set of $\Pi' \cup Q_{\hat{I}}^{\overline{\omitt}}$, where $\Pi'= \mathcal{T}_{meta}[\Pi] \cup {\cal T}_P[\Pi] \cup {\cal T}_C[\Pi,\Lits] \cup {\cal T}_\omitt[\Lits] $. This means that either (i) $X'$ is not a model of $(\Pi' \cup Q_{\hat{I}}^{\overline{\omitt}}) ^{X'}$, or (ii) $X'$ is not a minimal model of $(\Pi' \cup Q_{\hat{I}}^{\overline{\omitt}})^{X'}$.
\be[(i)]
\item There is some rule $r \in (\Pi' \cup Q_{\hat{I}}^{\overline{\omitt}})^{X'}$ such that $X' \models B(r)$, but $X' \nmodels H(r)$. We know that $X$ is an answer set of $\Pi \cup Q_{\hat{I}}^{\overline{\omitt}}$, and thus $X \in \AS(\Pi)$. By Theorem~\ref{thm:debug_mainprog_rel}, we know that $X  \cup \{\mi{ap}(n_r) \mid r \in \Pi^X\} \cup \{\mi{bl}(n_r) \mid r \in \Pi \setminus \Pi^X\}$ is an answer set of $\mathcal{T}_{meta}[\Pi]$. 
As $X'$ contains no $ab$ atoms, $r$ cannot be in ${\cal T}_P[\Pi] \cup {\cal T}_C[\Pi,\Lits] \cup {\cal T}_\omitt[\Lits] $. So $r$ must be in $Q_{\hat{I}}^{\overline{\omitt}}$.

The rule $r$ can be in two forms: (a) $\bot \lars \mi{not}\ \alpha.$ for some $\alpha \in \hat{I}$, or (b) $\bot \lars \alpha.$ for some $\alpha \in \overline{A}\setminus \hat{I}$.
\be[(a)]
\item As $X' \models B(r)$, then $\alpha \notin X'$ which means $\alpha \notin X$. However having $r \in (\Pi \cup Q_{\hat{I}}^{\overline{\omitt}})^X$ contradicts that $X$ is an answer set of $\Pi \cup Q_{\hat{I}}^{\overline{\omitt}}$.
\item Similarly as (a), we reach a contradiction.
\ee

\item Let $Y' \subset X'$ be a model of $(\Pi' \cup Q_{\hat{I}}^{\overline{\omitt}})^{X'}$, for some   $Y' = Y \cup \{\mi{ko}(n_r) \mid r \in \Pi_{A}^{c}\} \cup \{\mi{ap}(n_r) \mid r \in \Pi^X\} \cup \{\mi{bl}(n_r) \mid r \in \Pi \setminus \Pi^X\}$. As the auxiliary atoms are fixed, $Y \subset Y'$ must hold. We claim that $Y$ is then a model of $(\Pi \cup Q_{\hat{I}}^{\overline{\omitt}})^X$, which is a contradiction. Assume $Y$ is not such a model. 
Then there is a rule $r \in (\Pi \cup Q_{\hat{I}}^{\overline{\omitt}})^X$ such that $Y \models B(r)$ but $Y \nmodels H(r)$. There are two cases: (a) $r \in \Pi$, or (b) $r \in Q_{\hat{I}}^{\overline{\omitt}}$.
\be[(a)]
\item By definition of $Y'$, this means that $Y' \models B(r)$ and $Y' \nmodels H(r)$. However, this contradicts that $Y'$ is a smaller model of $(\Pi' \cup Q_{\hat{I}}^{\overline{\omitt}})^{X'}$ than $X'$ since $H(r)' \in Y'$.
\item In both versions of $r$ in $Q_{\hat{I}}^{\overline{\omitt}}$, we get that $r \in (\Pi' \cup Q_{\hat{I}}^{\overline{\omitt}})^{X'}$ which contradicts that $Y'$ is a model of $(\Pi' \cup Q_{\hat{I}}^{\overline{\omitt}})^{X'}$.
\ee
\ee

2. Assume towards a contradiction that $(Y \cap \Lits)$ is not an answer set of $\Pi \cup Q_{\hat{I}}^{\overline{\omitt}}$. This means that either (i) $(Y \cap \Lits)$ is not a model of $(\Pi \cup Q_{\hat{I}}^{\overline{\omitt}})^{(Y \cap \Lits)}$, or (ii) $(Y \cap \Lits)$ is not a minimal model of $(\Pi \cup Q_{\hat{I}}^{\overline{\omitt}})^{(Y \cap \Lits)}$.
\be[(i)]
\item There is some rule $r \in (\Pi \cup Q_{\hat{I}}^{\overline{\omitt}})^{(Y \cap \Lits)}$ such that $(Y \cap \Lits) \models B(r)$ but $(Y \cap \Lits) \nmodels H(r)$. As we have $(Y \cap \Lits^+) \in \AS(\mathcal{T}_{meta}[\Pi])$, by Theorem~\ref{thm:debug_mainprog_rel}, we get $(Y \cap \Lits) \in \AS(\Pi)$, thus $r$ cannot be in $\Pi$. However, $r \in Q_{\hat{I}}^{\overline{\omitt}}$ also cannot hold, since then $r$ will be in $(Q_{\hat{I}}^{\overline{\omitt}})^Y$ and we know that $Y \models Q_{\hat{I}}^{\overline{\omitt}}$. Thus $(Y \cap \Lits)$ must be a model of $(\Pi \cup Q_{\hat{I}}^{\overline{\omitt}})^{(Y \cap \Lits)}$.
\item Assume there exists some $Z \subset (Y \cap \Lits)$ such that $Z \models (\Pi \cup Q_{\hat{I}}^{\overline{\omitt}})^{(Y \cap \Lits)}$. We claim that then $Z'=Z \cup \{\mi{ko}(n_r) \mid r \in \Pi_{A}^{c}\} \cup \{\mi{ap}(n_r) \mid r \in \Pi'^{Y}\} \cup \{\mi{bl}(n_r) \mid r \in  \Pi'\setminus \Pi'^{Y}\}$ is a model of  $(\Pi' \cup Q_{\hat{I}}^{\overline{\omitt}})^{Y}$, which achieves a contradiction. Now let us assume that this is not the case. Then there is some rule $r \in (\Pi' \cup Q_{\hat{I}}^{\overline{\omitt}})^{Y}$ such that $Z' \models B(r)$ and $Z' \nmodels H(r)$. 
The rule $r$ cannot be in $(Q_{\hat{I}}^{\overline{\omitt}})^{Y}$, since it contradicts that $Y \models (Q_{\hat{I}}^{\overline{\omitt}})^{Y}$. The rest of the cases for $r$ also results in a contradiction.
\be[(a)]
\item If $r \in {\cal T}_{\mi{meta}}[\Pi]^{Y}$, then $r$ can only be of form $H(r) \lars \mi{ap}(n_r), \mi{not}\ \mi{ko}(n_r)$, where $H(r) \neq \bot$. So we have $\mi{ap}(n_r) \in Z'$, $\mi{ko}(n_r) \notin Z'$ and $H(r) \notin Z'$. For rule $r$, rules of form 1 in Definition~\ref{debug:metaprogs_aux} are created in ${\cal T}_P[\Pi]$. However, since having $H(r) \notin Y$ causes to have the rule $\mi{ab}_p(n_r) \lars \mi{ap}(n_r), \mi{not}\ H(r)$ in ${\cal T}_P[\Pi]^Y$, $H(r) \in Y\setminus Z'$ should hold, which however contradicts that $Z \subset (Y \cap \Lits)$, as then  $H(r)' \in Z'$ would hold.
\item If $r \in {\cal T}_P[\Pi]^{Y}$, then $r$ can only be of form $H(r) \lars \mi{ap}(n_r)$. As $Z'\nmodels H(r)$ we have $H(r)' \in Z'$ which contradicts that $Z \subset (Y \cap \Lits)$. A similar contradiction is reached if $r \in {\cal T}_C[\Pi,\Lits]^{Y}$, since that means $\alpha \in Z'$ while $\alpha \notin Y$.
\item Having $r \in {\cal T}_\omitt[\Lits]^{Y}$ means that $Z'
  \nmodels \mi{ab}_l(\alpha)'$ for some $\alpha \in \Lits$, i.e.,
  $\mi{ab}_l(\alpha) \in Z'$, which contradicts $Y \cap
  \mi{AB}_A(\Pi)=\emptyset$. \hfill\proofbox
  
\ee
\ee
\end{proof}
\end{document}